\newcommand\nc\newcommand
\nc\bfa{{\boldsymbol a}}\nc\bfA{{\boldsymbol A}}\nc\cA{{\mathcal A}}
\nc\bfb{{\boldsymbol b}}\nc\bfB{{\boldsymbol B}}\nc\cB{{\mathcal B}}
\nc\bfc{{\boldsymbol c}}\nc\bfC{{\boldsymbol C}}\nc\cC{{\mathcal C}}
\nc\sC{{\mathscr C}}
\nc\bfd{{\boldsymbol d}}\nc\bfD{{\boldsymbol D}}\nc\cD{{\mathcal D}}
\nc\bfe{{\boldsymbol e}}\nc\bfE{{\boldsymbol E}}\nc\cE{{\mathcal E}}
\nc\bff{{\boldsymbol f}}\nc\bfF{{\boldsymbol F}}\nc\cF{{\mathcal F}}
\nc\bfg{{\boldsymbol g}}\nc\bfG{{\boldsymbol G}}\nc\cG{{\mathcal G}}
\nc\bfh{{\boldsymbol h}}\nc\bfH{{\boldsymbol H}}\nc\cH{{\mathcal H}}
\nc\bfi{{\boldsymbol i}}\nc\bfI{{\boldsymbol I}}\nc\cI{{\mathcal I}}
\nc\bfj{{\boldsymbol j}}\nc\bfJ{{\boldsymbol J}}\nc\cJ{{\mathcal J}}
\nc\bfk{{\boldsymbol k}}\nc\bfK{{\boldsymbol K}}\nc\cK{{\mathcal K}}
\nc\bfl{{\boldsymbol l}}\nc\bfL{{\boldsymbol L}}\nc\cL{{\mathcal L}}
\nc\bfm{{\boldsymbol m}}\nc\bfM{{\boldsymbol M}}\nc\sM{{\mathscr M}}\nc\cM{{\mathcal M}}
\nc\bfn{{\boldsymbol n}}\nc\bfN{{\boldsymbol N}}\nc\cN{{\mathcal N}}
\nc\bfo{{\boldsymbol o}}\nc\bfO{{\boldsymbol O}}\nc\cO{{\mathcal O}}
\nc\bfp{{\boldsymbol p}}\nc\bfP{{\boldsymbol P}}\nc\cP{{\mathcal P}}
\nc\bfq{{\boldsymbol q}}\nc\bfQ{{\boldsymbol Q}}\nc\cQ{{\mathcal Q}}
\nc\bfr{{\boldsymbol r}}\nc\bfR{{\boldsymbol R}}\nc\cR{{\mathcal R}}
\nc\bfs{{\boldsymbol s}}\nc\bfS{{\boldsymbol S}}\nc\cS{{\mathcal S}}
\nc\bft{{\boldsymbol t}}\nc\bfT{{\boldsymbol T}}\nc\cT{{\mathcal T}}
\nc\bfu{{\boldsymbol u}}\nc\bfU{{\boldsymbol U}}\nc\cU{{\mathcal U}}
\nc\bfv{{\boldsymbol v}}\nc\bfV{{\boldsymbol V}}\nc\cV{{\mathcal V}}
\nc\bfw{{\boldsymbol w}}\nc\bfW{{\boldsymbol W}}\nc\cW{{\mathcal W}}
\nc\bfx{{\boldsymbol x}}\nc\bfX{{\boldsymbol X}}\nc\cX{{\mathcal X}}
\nc\bfy{{\boldsymbol y}}\nc\bfY{{\boldsymbol Y}}\nc\cY{{\mathcal Y}}
\nc\bfz{{\boldsymbol z}}\nc\bfZ{{\boldsymbol Z}}\nc\cZ{{\mathcal Z}}
\nc\diff{{\mathrm d}}
\nc\e{{\mathrm e}}
\nc\calC{{\mathcal C}}
\DeclareMathOperator{\average}{avg}
\newcommand{\remove}[1]{}
\newcommand{\avg}{{\mathbb E}}
\newtheorem{lemma}{Lemma}
\newtheorem{corollary}{Corollary}
\newtheorem{theorem}{Theorem}
\theoremstyle{definition}
\newtheorem*{definition}{Definition}
\theoremstyle{corollaryn}
\newtheorem*{corollaryn}{Corollary}
\newtheorem*{problem}{Problem}
\newtheorem{example}{Example}
\newtheorem{claim}{Claim}
\theoremstyle{theorem-n}
\newtheorem*{theorem-n}{Theorem}
\newcommand{\cc}{{\sf Crowd-Cluster}}
\def\DEBUG{true}
  \def\rem#1{{\marginpar{\raggedright\scriptsize #1}}}
  \newcommand{\barnr}[1]{\rem{\textcolor{red}{$\bullet$ #1}}}
  \newcommand{\aryar}[1]{\rem{\textcolor{green}{$\bullet$ #1}}}
  \newcommand{\barnr}[1]{}
  \newcommand{\aryar}[1]{}
\newcommand\reals{{\mathbb R}}
\author{Arya Mazumdar\thanks{University of Massachusetts Amherst, \url{arya@cs.umass.edu}. This work is supported in part by an NSF CAREER award CCF 1453121 and NSF award CCF 1526763.}~~  ~~~~ Barna Saha\thanks{University of Massachusetts Amherst, \url{barna@cs.umass.edu}.This work is partially supported by a NSF CCF 1464310 grant, a Yahoo ACE Award and a Google Faculty Research Award.} \\
College of Information \& Computer Science\\
University of Massachusetts Amherst\\
Amherst, MA, 01002
}
\title{Clustering via Crowdsourcing}
\date{}
\begin{document}
\sloppy
\maketitle

\pagestyle{plain}
\setcounter{page}{0}
\begin{abstract}
In recent years, crowdsourcing, aka human aided computation has emerged as an effective platform for solving problems that are considered complex for machines alone. Using human is time-consuming and costly due to monetary compensations. Therefore, a crowd based algorithm must judiciously use any information computed through an automated process, and ask minimum number of questions to the crowd adaptively. 

One such problem which has received significant attention is {\em entity resolution}. Formally, we are given a graph $G=(V,E)$ with unknown edge set $E$ where $G$ is a union of $k$ (again unknown, but typically large $O(n^\alpha)$, for $\alpha>0$) disjoint cliques $G_i(V_i, E_i)$, $i =1, \dots, k$. The goal is to retrieve the sets $V_i$s by making minimum number of pair-wise queries $V \times V\to\{\pm1\}$ to an oracle (the crowd). When the answer to each query is correct, e.g. via resampling, then this reduces to finding connected components in a graph. On the other hand, when crowd answers may be incorrect, it corresponds to clustering over minimum number of noisy inputs. Even, with perfect answers, a simple lower and upper bound of $\Theta(nk)$ on query complexity can be shown. A major contribution of this paper is to reduce the query complexity to linear or even sublinear in $n$ when mild side information is provided by a machine, and even in presence of crowd errors which are not correctable via resampling. We develop new information theoretic lower bounds on the query complexity of clustering with side information and errors, and our upper bounds closely match with them. Our algorithms are naturally parallelizable, and also give near-optimal bounds on the number of adaptive rounds required to match the query complexity.

% reduces to finding connected components in $G$. On the other hand, when crowd answers may be incorrect and not correctable via resampling, the problem of finding the maximum-likelihood clustering is akin to correlation clustering. A major focus of this work is to study the query complexity of these clustering algorithms when as a side-information a machine computed noisy similarity matrix is provided. We develop new information theoretic lower bounds on the query complexity of clustering under both these models, and design algorithms nearly matching the upper bounds.
\end{abstract}
\newpage
\section{Introduction}
Consider we have an undirected graph $G(V\equiv[n],E)$, $[n]\equiv\{1,\dots, n\}$, such that $G$ is a union of $k$ disjoint cliques $G_i(V_i, E_i)$, $i =1, \dots, k$, but the subsets $V_i \subset [n]$, $k$ and $E$ are unknown to us. We want to make minimum number of adaptive pair-wise queries from $V \times V$ to an oracle, and recover the clusters. Suppose, in addition, we are also given a noisy weighted similarity matrix $W=\{w_{i,j}\}$ of $G$, where $w_{i,j}$ is drawn from a probability distribution $f_+$ if $i$ and $j$ belong to the same cluster, and else from $f_{-}$. However, the algorithm designer does not know either $f_{+}$ or $f_{-}$. How does having this side information affect the number of queries to recover the clusters, which in this scenario are the hidden connected components of $G$? To add to it, let us also consider the case when some of the answers to the queries are erroneous. We want to recover the  clusters with minimum number of noisy inputs possibly with the help of some side information. In the applications that motivate this problem, the oracle is the crowd. 

In the last few years, crowdsourcing has emerged as an effective solution for large-scale ``micro-tasks''. Usually, the micro-tasks that are accomplished using crowdsourcing tend to be those that computers cannot solve very effectively, but are fairly trivial for humans with no specialized training. Consider for example six places, all named after John F. Kennedy \footnote{
\url{http://en.wikipedia.org/wiki/Memorials_to_John_F._Kennedy}}:
 ($r_a$)~{\sf John F. Kennedy International Airport},
($r_b$)~{\sf JFK Airport}, 
($r_c$)~{\sf Kennedy Airport, NY}
($r_d$)~{\sf John F. Kennedy Memorial Airport},
($r_e$)~{\sf Kennedy Memorial Airport, WI},
($r_f$)~{\sf John F. Kennedy Memorial Plaza}. 
Humans can determine using domain knowledge that the above six places correspond to 
three different entities: $r_a, r_b$, and $r_c$ refer to one entity, $r_d$ and 
$r_e$ refer to a second entity, and $r_f$ refers to a third entity. However, for a computer, it is hard to distinguish them. This problem known as entity resolution is a basic task in classification, data mining and database management \cite{fellegi1969theory,elmagarmid2007duplicate,getoor2012entity}. It has many alias in literature, and also known as coreference/identity/name/record resolution, entity disambiguation/linking, duplicate detection, deduplication, record matching etc. There are several books that just focus on this topic \cite{christen2012data,Herzog:2007}. For a comprehensive study and applications, see \cite{getoor2012entity}.

Starting with the work of Marcus et al.~\cite{marcus2011human}, there has been a flurry of works that have aimed at using human power for entity resolution \cite{gokhale2014corleone,vesdapunt2014crowdsourcing,dkmr:14,wang2012crowder,fss:16,DBLP:conf/icde/VerroiosG15, dalvi2013aggregating,ghosh2011moderates,
karger2011iterative}. Experimental results using crowdsourcing platforms such as Amazon Mechanical Turk have exceeded the machine only performance \cite{vesdapunt2014crowdsourcing,wang2012crowder}. In all of these works, some computer generated pair-wise similarity matrix is used to order the questions to crowd. Using human in large scale experiments is costly due to monetary compensation paid to them, in addition to being time consuming. Therefore, naturally these works either implicitly or explicitly aim to minimize the number of queries to crowd. 
%
%Entity resolution can be cast as a simple graph problem. We have an undirected graph $G(V\equiv[n],E)$, such that $G$ is a union of $k$ disjoint cliques $G_i(V_i, E_i)$, $i =1, \dots, k$, but the subsets $V_i \in [n]$, $k$ and $E$ are unknown to us. The goal is to determine these clusters accurately.  In the crowdsourcing framework, one can make a subset of $V \times V$ queries to the crowd, and crowd returns for every query, either $+1$ denoting the two vertices belong to the same cluster, or $-1$ denoting the two vertices belong to different clusters. The goal is to issue minimum number of such queries adaptively and obtain the clusters accurately. 
Assuming the crowd returns answers correctly, entity resolution using crowdsourcing corresponds exactly to the task of finding connected components of $G$ with minimum number of adaptive queries to $V \times V$. Typically $k$ is large \cite{vesdapunt2014crowdsourcing,wang2012crowder,fss:16,DBLP:conf/icde/VerroiosG15}, and we can, not necessarily,  take $k \geq n^{\alpha}$ for some constant $\alpha \in [0,1]$.

It is straightforward to obtain an upper bound of $nk$ on the number of queries: simply ask one question per cluster for each vertex, and is achievable even when $k$ is unknown. Except for this observation \cite{vesdapunt2014crowdsourcing,dkmr:14}, no other theoretical guarantees on the query complexity were known so far. Unfortunately, $\Omega(nk)$ is also a lower bound \cite{dkmr:14}. Bounding query complexity of basic problems like selection and sorting have received significant attention in the theoretical computer science community \cite{frpu:94,bmw:16,akss:86, bg:90}. Finding connected components is the most fundamental graph problem, and given the matching upper and lower bounds, there seems to be a roadblock in improving its query complexity beyond $nk$.

In contrast, the heuristics developed in practice often perform much better, and all of them use some computer generated similarity matrix to guide them in selecting the next question to ask. We call this {\em crowdsourcing using side information}. So, we are given a similarity matrix $W=\{w_{i,j}\}_{i,j \in V\times V}$, which is a noisy version of the original adjacency matrix of $G$ as discussed in the beginning. Many problems such as sorting, selection, rank aggregation etc. have been studied using noisy input where noise is drawn from a distribution \cite{bm:08,bm:09,mmv:13}. Many probabilistic generative models, such as stochastic block model, are known for clustering \cite{abh:16,hajek2015achieving,chin2015stochastic,mossel2014consistency}. However, all of these works assume the underlying distributions are known and use that information to design algorithms.  Moreover, none of them consider query complexity while dealing with noisy input. 

 We show that with side information, even with unknown $f_+$ and $f_-$, a drastic reduction in query complexity is possible. We propose a randomized algorithm that reduces the number of queries from $O(nk)$ to $\tilde{O}(\frac{k^2}{\Delta(f_+,f_-)})$, where $\Delta(f_+,f_-)\equiv D(f_+\|f_-)+D(f_-\|f_+)$ and $D(p\|q)$ is the Kullback-Leibler divergence between the probability distributions $p$, and $q$, and recovers the clusters accurately with high probability. Interestingly, we show $\Omega\left(\frac{k^2}{\Delta(f_+,f_-)}\right)$ is also an information-theoretic lower bound, thus matching the query complexity upper bound within a logarithmic factor. This lower bound could be of independent interest, and may lead to other lower bounds in related communication complexity models. To obtain the clusters accurately with probability $1$, we propose a Las Vegas algorithm with expected query complexity $\tilde{O}\left(n+\frac{k^2}{\Delta(f_+,f_-)}\right)$ which again matches the corresponding lower bound. 
 
 So far, we have considered the case when crowd answers are accurate. It is possible that crowd answers contain errors, and remain erroneous even after repeating a question multiple times. That is, resampling, repeatedly asking the same question and taking the majority vote, does not help much. Such observation has been reported in \cite{DBLP:conf/icde/VerroiosG15,DBLP:journals/corr/GruenheidNKGK15} where resampling only reduced errors by $\sim 20\%$. Crowd workers often use the same source (e.g., Google) to answer questions. Therefore, if the source is not authentic, many workers may give the same wrong answer to a single question. Suppose that error probability is $p <\frac{1}{2}$. Under such crowd error model, our problem becomes that of clustering with noisy input, where this noisy input itself is obtained via  adaptively querying the crowd. 
 
  We give the first information theoretic lower bounds in this model to obtain the maximum likelihood estimator, and again provide nearly matching upper bounds with and without side information. Side information helps us to drastically reduce the query complexity, from $\tilde{O}(\frac{nk}{D(p\|(1-p))})$  to  
 $\tilde{O}(\frac{k^2}{D(p\|1-p)\Delta(f_+,f_-)})$ where $D(p\|1-p) = (1-2p)\log\frac{1-p}{p}$.  An intriguing fact about this algorithm is that it has running time $O(k^{\frac{\log{n}}{D(p\|1-p)}})$, and assuming the conjectured hardness of finding planted clique from an Erd\H{o}s-R\'{e}nyi random graph \cite{hk:11}, this running time cannot be improved\footnote{Note that a query complexity bound does not necessarily removes the possibility of a super-polynomial running time.}. However, if we are willing to pay a bit more on the query complexity, then the running time can be made into polynomial. This also provides a better bound on an oft-studied clustering problem, correlation clustering over noisy input \cite{ms:10,bbc:04}. While prior works have considered sorting without resampling \cite{bm:09}, these are the first results to consider crowd errors for a natural clustering problem.

The algorithms proposed in this work are all intuitive, easily implementable, and can be parallelized. 
They do not assume any knowledge on the value of $k$, or the underlying distributions $f_+$ and $f_-$. On the otherhand, our information theoretic lower bounds work even with the complete knowledge of $k, f_+,f_-$. While queries to crowd can be made adaptively, it is also important to minimize the number of adaptive rounds required maintaining the query upper bound. Low round complexity helps to obtain results faster. We show that all our algorithms extend nicely to obtain close to optimal round complexity as well. Recently such results have been obtained for sorting (without any side information) \cite{bmw:16}. Our work extends nicely to two more fundamental problems: finding connected components, and noisy clustering.

%The lower bounds, on the other hand, assume the knowledge of $f_+$, $f_-$ and the value of $k$. Our information-theoretic techniques are transparent and  motivated by multi-hypothesis testing type applications. Somewhat similar framework has been previously
%used to give lower bounds on regrets in  multi arm bandit  problems \cite{auer2002nonstochastic}, where connection to hypothesis testing is apparent. 
%However, beyond that it requires quite nontrivial design of settings such that these techniques can be used in our problems.
%As far as we know, this type of techniques have not been used in the  problems related to what we consider. %  and do not have a parallel for this type of problems. 
%For example,  our techniques are completely novel compared to  \cite{bmw:16,dkmr:14}. 

\subsection{Related Work}
In a recent work \cite{bmw:16}, Braverman, Mao and Weinberg studied the round complexity of selection and obtaining the top-$k$ and bottom-$k$ elements when crowd answers are all correct, or are erroneous with probability $\frac{1}{2}-\frac{\lambda}{2}$, or erased with probability $1-\lambda$, for some $\lambda >0$. They do not consider any side information. There is an extensive literature of algorithms in the TCS community where the goal is to do either selection or sorting with $O(n)$ comparisons in the fewest interactive rounds, aka parallel algorithms for sorting \cite{v:75,r:81,aa:87,aa:88,akss:86, bg:90}. However, those works do not consider any erroneous comparisons, and of course do not incorporate side information. Feige et al., study the depth of noisy decision tree for simple boolean functions, and selection, sorting, ranking etc. \cite{frpu:94}, but not with any side information. Parallel algorithms for finding connected components and clustering have similarly received a huge deal of attention \cite{h:91,g:94,ckt:96,pporrj:15}. Neither those works, nor their modern map-reduce counterparts \cite{ksv:10,eim:11,gsz:11,ahn:15} study query complexity, or noisy input. There is an active body of work dealing with sorting and rank aggregation with noisy input under various models of noise generation \cite{bm:08,bm:09,mmv:13}. However these works aim to recover the maximum likelihood ordering without any querying. Similarly, clustering algorithms like correlation clustering has been studied under various random and semirandom noise models without any active querying \cite{bbc:04,ms:10,mmv:14}. Stochastic block model is another such noisy model which has recently received a great deal of attention \cite{abh:16,hajek2015achieving,chin2015stochastic,mossel2014consistency}, but again prior to this, no work has considered the querying capability when dealing with noisy input. In all these works, the noise model is known to the algorithm designer, since otherwise the problems become NP-Hard \cite{bm:08,bbc:04,acn:08}.

In more applied domains, 
many frameworks have been developed to leverage humans for performing entity resolution 
~\cite{wang2012crowder,gokhale2014corleone}.
Wang et al.~\cite{wang2012crowder} describe a hybrid human-machine
framework CrowdER, that automatically detects pairs or clusters that 
have a high likelihood of matching based on a similarity function, which are then verified by humans. Use of similarity function is common across all these works to obtain querying strategies~\cite{gokhale2014corleone,vesdapunt2014crowdsourcing}, but hardly any provide bounds on the query complexity. The only exceptions are \cite{vesdapunt2014crowdsourcing, dkmr:14} where a simple $nk$ bound on the query complexity has been derived when crowd returns correct answers, and no side information is available. This is also a lower bound even for randomized algorithms \cite{dkmr:14}. Firmani et al. \cite{fss:16} analyzed the algorithms of \cite{wang2012crowder} and \cite{vesdapunt2014crowdsourcing} under a very stringent noise model.

To deal with the possibility that the crowdsourced oracle may %sometimes 
give wrong answers, there are simple majority voting mechanisms or more
complicated heuristic techniques~\cite{DBLP:conf/icde/VerroiosG15, dalvi2013aggregating,ghosh2011moderates,
karger2011iterative} to handle such errors. No theoretical guarantees exist in any of these works. Davidson et al., consider a variable error model where clustering is based on a numerical value--in that case clusters are intervals with few jumps (errors), and the queries are unary (ask for value) \cite{dkmr:14}. This error model is not relevant for pair-wise comparison queries.

\subsection{Results and Techniques}

\begin{problem}[\cc]

Consider an undirected graph $G(V\equiv[n],E)$, such that $G$ is a union of $k$ disjoint cliques (clusters) $G_i(V_i, E_i)$, $i =1, \dots, k$, where $k$, the subsets $V_i \subseteq [n]$ and $E$ are unknown. There is an oracle $\mathcal{O}: V\times V \to \{\pm1\},$ which takes as input a pair of vertices $u,v \in V \times V$, and returns either $+1$ or $-1$. Let $\mathcal{O}(Q)$, $Q \subseteq V \times V$ correspond to oracle answers to all pairwise queries in $Q$. The queries in $Q$ can be done adaptively.

The adjacency matrix of $G$ is a block-diagonal matrix. Let us denote this matrix by $A = (a_{i,j})$.
 Consider $W$,  an $n \times n$ matrix, which is the noisy version of the matrix $A$.
  %such that $z_{u,v}$s are independent random variables. 
Assume that the $(u,v)$th entry of the matrix $W$, $w_{u,v}$, is a nonnegative random variable in $[0,1]$ drawn from a probability density or mass function $f_+$ for $a_{i,j} =1$, and is drawn from a probability density or mass function $f_{-}$ if $a_{i,j} =0$. $f_{+}$ and $f_{-}$ are unknown.
\begin{itemize}
\item {\bf \cc~with Perfect Oracle} Here $\mathcal{O}(u,v)=+1$ iff $u$ and $v$ belong to the same cluster and $\mathcal{O}(u,v)=-1$ iff $u$ and $v$ belong to different clusters.
\begin{enumerate}
\item {\bf Without Side Information.} Given $V$, find $Q \subseteq V \times V$ such that $|Q|$ is minimum, and from  $\mathcal{O}(Q)$ it is possible to recover $V_i$, $i=1,2,...,k$.
\item {\bf With Side Information.} Given $V$ and $W$, find $Q \subseteq V \times V$ such that $|Q|$ is minimum, and from  $\mathcal{O}(Q)$ it is possible to recover $V_i$, $i=1,2,...,k$.
\end{enumerate}
\item {\bf \cc~with Faulty Oracle} There is an error parameter $p=\frac{1}{2}-\lambda$ for some $\lambda >0$. 
We denote this oracle by  $\mathcal{O}_p$. Here if $u,v$ belong to the same cluster then $\mathcal{O}_p(u,v)=+1$ with probability $1-p$ and $\mathcal{O}_p(u,v)=-1$ with probability $p$. On the otherhand, if $u,v$ do not belong to the same cluster then $\mathcal{O}_p(u,v)=-1$ with probability $1-p$ and $\mathcal{O}_p(u,v)=+1$ with probability $p$ (in information theory literature, such oracle is called {\em binary symmetric channel}). 
\begin{enumerate}
\item {\bf Without Side Information.} Given $V$, find $Q \subseteq V \times V$ such that $|Q|$ is minimum, and from  $\mathcal{O}_{p}(Q)$ it is possible to recover $V_i$, $i=1,2,...,k$ with high probability.
\item {\bf With Side Information.} Given $V$ and $W$, find $Q \subseteq V \times V$ such that $|Q|$ is minimum, and from  $\mathcal{O}_{p}(Q)$ it is possible to recover $V_i$, $i=1,2,...,k$ with high probability.
\end{enumerate}
\item {\bf \cc~with Round Complexity} Consider all the above problems where $\mathcal{O}$ (similarly $\mathcal{O}_{p}$) can answer to $n\log{n}$ queries simultaneously, and the goal is to minimize the number of adaptive rounds of queries required to recover the clusters.
\end{itemize}
\end{problem}
\subsubsection{Lower Bounds}
When no side information is available, it is somewhat straight-forward to have a lower bound on the query complexity if the oracle is perfect. Indeed, in that case the query complexity of \cc~is $\Omega(nk)$ where $n$ is the total number of elements and $k$ is the number of clusters.

To see this, note that, any algorithm can be provided with a clustering designed adversarially in the following way. First, $k$ elements residing in $k$ different clusters are revealed to the 
algorithm. For a vertex among the remaining $n-k$ vertices, if the algorithm makes any less than $k-2$ queries, the adversary still can place the vertex in one of the remaining $2$ clusters--resulting in a query complexity of $(n-k)(k-1)$. 
%In that case the algorithm will be unsuccessful in identifying the proper cluster for the vertex. Since this holds for
%all $n-k$ vertices, the query complexity is at least $(n-k)(k-1)$ for any deterministic algorithm. 
This argument can be extended towards randomized 
algorithms as well, by using Yao's min-max principal, and has been done in \cite{dkmr:14}.  However \cite{dkmr:14} left open the case of proving lower bound for randomized algorithms when the clusters are nearly balanced (ratio between the minimum and maximum cluster size is bounded). One of the lower bound results proved in this paper resolves it. 

Our main technical results for perfect oracle are for \cc~with side information. Our lower bound results are information theoretic, and can be summarized in the following theorem.

\begin{theorem}\label{thm:lb-main} %\label{thm:side}
Any (possibly randomized) algorithm with the knowledge of $f_+, f_-,$ and the number of clusters $k$,  that does not perform at least $\Omega\Big(\frac{k^2}{\Delta(f_+,f_-)}\Big)$ queries, $\Delta(f_+,f_-)> 0$, will be unable to return the correct clustering  with probability at least $\frac1{10}$. (Proof in Sec.~\ref{sec:perfectlb}).
\end{theorem}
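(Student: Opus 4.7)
My plan is to prove the bound via an information-theoretic multi-hypothesis testing argument, in the spirit of Assouad's lemma combined with Pinsker's inequality.

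\medskip

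\textbf{Hard prior.} I would construct a ``hard'' family of alternative clusterings $\{C_\theta\}_{\theta \in \{0,1\}^m}$ with $m = \Theta(k^2/\Delta(f_+,f_-))$. Starting from a balanced reference $k$-clustering of $[n]$ with $|V_i|=n/k$, each bit $\theta_i$ toggles an independent local perturbation between a designated pair of clusters (e.g.\ swapping a designated pair of vertices across a designated pair of clusters, or flipping a small designated block). Roughly $\Theta(1/\Delta)$ such swap coordinates are attached to each unordered cluster pair, yielding $m = \binom{k}{2}\cdot O(1/\Delta)$ binary choices. The identities of the vertices involved in each perturbation are drawn uniformly at random inside their cluster pair and are hidden from the algorithm. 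To succeed with probability $\geq 9/10$ in recovering the clustering, the algorithm must recover $\theta$ with at least that probability.

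\medskip

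\textbf{Per-coordinate divergence bound.} For any two $\theta,\theta'$ differing in a single coordinate, the induced laws on $W$ differ only on entries incident to the swapped vertices, and each such entry contributes at most $\Delta(f_+,f_-)$ to the KL. Tuning the perturbation size (the ``swap block'') so that the total number of affected entries is $O(1/\Delta)$, I can arrange a per-coordinate KL contribution from $W$ of order $O(1)$, which, via Pinsker, yields a per-coordinate TV distance of $O(\sqrt{\Delta})$. Each query to the perfect oracle contributes at most one additional bit to the mutual information, and by the hidden randomization of the swap elements, a single (even adaptive) query is ``relevant'' to any fixed coordinate with only small probability, so $q$ queries collectively inform only $O(q)$ of the $m$ coordinates in expectation over the prior.

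\medskip

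\textbf{Assouad and conclusion.} Assouad's lemma then gives $\mathbb{E}[d_H(\hat\theta,\theta)] \geq (m/2)(1 - TV_{\max})$ for any estimator $\hat\theta$, where $TV_{\max}$ is the maximum per-coordinate TV distance of the algorithm's joint observation $(W,\mathcal{O}(Q))$. Combining Steps 1--2, the information budget per coordinate is $O(\Delta) + O(q/m)$; for the algorithm to succeed this must push $TV_{\max}$ close to $1$ on nearly all coordinates, which in turn forces $q = \Omega(m) = \Omega(k^2/\Delta(f_+,f_-))$, matching the claim.

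\medskip

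\textbf{Main obstacle.} The subtle step is the adaptivity of the queries, whose choice may depend on the realization of $W$. I would handle this via a chain-rule decomposition of $I(\theta;W,\mathcal{O}(Q))$: condition on $W$ first, then bound the conditional mutual information contributed by the adaptive query transcript using that each query is a single bit informative about $\theta$ only through the coordinate(s) it can ``witness.'' The random hiding of swap-element identities within cluster pairs --- a device analogous to the standard hiding trick in communication-complexity lower bounds --- is what prevents the algorithm from concentrating its $q$ queries on the $m$ sensitive coordinates, and is essential for pushing the bound from a naive $\Omega(k^2)$ level up to the tight $\Omega(k^2/\Delta(f_+,f_-))$.
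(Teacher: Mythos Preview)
Your Assouad-style plan has a concrete scaling gap in the per-coordinate KL step. You begin from balanced clusters of size $n/k$ and then assert you can ``tune the perturbation size so that the total number of affected entries is $O(1/\Delta)$.'' But swapping even a single vertex $v$ between clusters $A$ and $B$ changes the law of every entry $w_{v,u}$ with $u\in (A\cup B)\setminus\{v\}$, i.e.\ $\Theta(n/k)$ entries, each contributing $\Theta(\Delta)$ to the KL. Hence the per-coordinate KL from $W$ alone is $\Theta((n/k)\,\Delta)$, not $O(1)$, and Pinsker gives a per-coordinate TV that is already near $1$ whenever $n/k\gg 1/\Delta$. Making the ``swap block'' larger only increases the number of affected entries; the only way to force $O(1)$ KL per coordinate is to make the \emph{clusters themselves} have size $O(1/\Delta)$, which contradicts your balanced start for general $n$. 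Without this, the side information $W$ alone resolves each coordinate and no query lower bound follows.

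The paper's argument is structured quite differently and sidesteps this issue. It builds the hard instance with $k-1$ small clusters of size exactly $a=\lfloor 1/\Delta\rfloor$ (plus one large cluster handed out for free), reveals one seed element per small cluster, and then analyzes the assignment of a \emph{single} remaining vertex $l$ as a $(k-1)$-way hypothesis test. Because the small clusters have size $a$, the KL between any two hypotheses $P_\ell$ and $P_t$ is at most $a\,\Delta\le 1$, so Pinsker bounds the TV by $\sqrt{1/2}$. An averaging argument over cluster indices (in the spirit of the biased-coin-among-$k$-coins lower bound) then shows: if the expected number of queries for $l$ is below $9(k-1)/100$, there exist $\Theta(k)$ clusters that are simultaneously rarely queried and rarely output under some reference hypothesis $P_t$, and when the true cluster is one of these the success probability is bounded below $9/10$. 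Otherwise each of the $(k-1)(a-1)$ unassigned vertices costs $\Omega(k)$ queries, giving $\Omega(k^2 a)=\Omega(k^2/\Delta)$ total. Adaptivity is handled directly by this per-vertex averaging rather than by a hypercube/Fano/Assouad construction.
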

\begin{corollary}
\label{cor:lb-main}
Any (possibly randomized but Las Vegas) algorithm with the knowledge of $f_+, f_-,$ and the number of clusters $k$,  that does not perform at least $\Omega\Big(n+\frac{k^2}{\min\{1,\Delta(f_+,f_-)\}}\Big)$ queries, $\Delta(f_+,f_-)> 0$, will be unable to return the correct clustering. (Proof in Sec.~\ref{sec:perfectlb}).
\end{corollary}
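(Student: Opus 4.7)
My plan is to derive three separate lower bounds on the expected query complexity of any Las Vegas algorithm and take their maximum, which gives the claimed expression.

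First I would establish an $\Omega(n)$ bound by arguing that every vertex must be an endpoint of at least one oracle query. Suppose, for contradiction, some vertex $v$ is never queried by the algorithm on some run. Then the algorithm's label for $v$ is a measurable function of $W$ and of its internal random bits alone. Since $\Delta(f_+,f_-)$ is finite and positive, $f_+$ and $f_-$ are mutually absolutely continuous, so for any two clusterings that differ only in $v$'s assignment the induced laws of $W$ overlap on a set of positive measure. Hence there is a positive probability of misclassifying $v$ on at least one of those two clusterings, contradicting the Las Vegas guarantee. Since each query has two endpoints, this forces at least $n/2=\Omega(n)$ queries in expectation.

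Second, for the $\Omega(k^2)$ piece I would give a verification-style argument: for every ordered pair of distinct clusters $V_i,V_j$ in the true clustering, the algorithm must issue at least one oracle query with endpoints straddling $V_i$ and $V_j$. If it did not, then the alternative clustering obtained by merging $V_i$ and $V_j$ is consistent with every oracle response, and the $|V_i|\cdot|V_j|$ side-information entries across the pair, however informative when $\Delta$ is large, can never produce a likelihood ratio that is $0$ or $\infty$ with probability one whenever $f_+,f_-$ have overlapping supports. Summing over the $\binom{k}{2}$ pairs of clusters produces $\Omega(k^2)$ mandatory queries.

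Third, to pick up the $\Omega(k^2/\Delta(f_+,f_-))$ piece, which dominates when $\Delta(f_+,f_-)<1$, I would reduce to Theorem~\ref{thm:lb-main} via Markov's inequality: if the Las Vegas algorithm uses $T$ queries in expectation, then with probability at least $9/10$ it halts within $10T$ queries, so truncating it at that budget yields a Monte Carlo algorithm with success probability at least $9/10$; Theorem~\ref{thm:lb-main} then forces $10T\ge \Omega(k^2/\Delta(f_+,f_-))$. Combining the three lower bounds and using the identity $\max\{k^2,\,k^2/\Delta\}=k^2/\min\{1,\Delta\}$ yields the stated bound $\Omega\bigl(n+k^2/\min\{1,\Delta(f_+,f_-)\}\bigr)$.

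The step I expect to be the most delicate is the verification argument giving $\Omega(k^2)$: one must formalise that probability-one certainty about whether two putative clusters are identical or distinct cannot be extracted from any finite collection of noisy side-information samples, and that this conclusion is stable under the algorithm's adaptive choice of queries and arbitrary use of $W$ in deciding its output. The $\Omega(n)$ and the Markov-reduction arguments are, by comparison, routine once the measure-theoretic overlap condition is noted.
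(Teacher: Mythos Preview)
Your decomposition into $\Omega(n)$, $\Omega(k^2)$, and $\Omega(k^2/\Delta)$ matches the paper exactly: its one-line argument for the corollary observes that fewer than $(n-k)+\binom{k}{2}$ queries leaves either a vertex in a non-singleton cluster with no query to a same-cluster member, or a pair of clusters with no separating cross-query, and leaves the $k^2/\Delta$ piece implicit from Theorem~\ref{thm:lb-main}. Your explicit Markov reduction from Las Vegas to the Monte Carlo setting of Theorem~\ref{thm:lb-main} is a detail the paper omits but clearly intends.

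Your instinct that the $\Omega(k^2)$ verification step is the delicate one is right, and in fact your stated alternative of \emph{merging} $V_i$ and $V_j$ has a small gap: the merged clustering has only $k-1$ parts, so an algorithm that is given $k$ can reject it on that basis alone, without appealing to either queries or side information. The paper's terse argument shares this loose end. The repair is to construct an alternative that preserves $k$: for instance, on a hard instance where every cluster has size at least~$2$, if clusters $V_i,V_j$ have no cross-query and some $u\in V_i$ also lacks any intra-$V_i$ query, move $u$ into $V_j$; or swap a vertex of $V_i$ with a vertex of $V_j$. With such a $k$-preserving alternative in hand your mutual-absolute-continuity argument goes through unchanged.
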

The main high-level technique is the following. Suppose, a vertex is to be assigned to a cluster. We have some side-information and answers to queries involving this vertex at hand. Let these constitute a random variable 
$X$ that we have observed. Assuming that there are $k$ possible clusters to assign this vertex to, we have a $k$-hypothesis testing problem. 
By observing $X$, we have to decide which of the $k$ different distributions (corresponding to the vertex being in $k$ different clusters)
it is coming from. If the distributions are very close  (in the sense of total variation distance or divergence), then we are bound to make an error in deciding.

We can compare this problem of assigning a vertex to one of the $k$-clusters to finding a biased coin among $k$ coins. In the later problem, we are asked to find out the minimum number of coin tosses needed for correct identification. This type of idea has previously been applied
%in the game theory literature
 to design adversarial strategies that lead to lower bounds on average regret for the multi-arm bandit problem (see, \cite{auer2002nonstochastic, cesa2006prediction}).

The problem that we have in hand, for lower bound on query-complexity, is substantially different.  It becomes  a nontrivial task to identify the correct input and design 
the set-up so that we can handle the problem in the framework of finding a biased coin. The key insight here is that, given a vertex, the combined side-information pertaining to this vertex and a cluster plays the role of tossing a particular coin (multiple times) in the coin-finding problem. However the liberty of an algorithm designer to query freely 
creates the main  challenge.

%As a first step we try to create enough small sized clusters such that there is not enough side information available regarding a vertex's membership in that cluster. This part is somewhat similar to the coin-finding. However, when an algorithm is faced with such scenario, it can still make a query to be certain about whether a vertex belong to a cluster. We somehow have to show that, no matter what querying strategy has been used there has to be some clusters which will not be queried for a large number of vertices.

%The problem that we have in hand, for lower bound on query-complexity, is quite different. When we have a faulty oracle, we are not allowed to ask the same question multiple times to get the correct answer with high probability. When the oracle is error-free, and we are relying on only side-information 
%to get a bound on query complexity the nature of the problem changes quite a bit.  It becomes  a nontrivial task to identify the correct input and design 
%the set-up property so that we can reduce the problem to the framework of finding a biased coin. The key insight here is that, given a vertex, the combined side-information pertaining to this vertex and a cluster plays the role of tossing a particular coin (multiple times) in the coin-finding problem.

For faulty oracle, note that we are not allowed to ask the same question multiple times to get the correct answer with high probability. This changes the situation quite a bit, though in some sense this is closer to coin-tossing experiment than the previous one as we handle binary random variables here (the answer to the queries). 
We first note that, for faulty-oracle, even for probabilistic recovery a minimum size bound on cluster size is required. 
%If there is no minimum size requirement on a cluster then the input graph can always consist of multiple clusters of very small size. 
For example, consider the following two different clusterings.
$C_1: V = \sqcup_{i=1}^{k-2} V_i \sqcup\{v_1,v_2\}\sqcup\{v_3\}$ and $C_2: V =   \sqcup_{i=1}^{k-2} V_i \sqcup\{v_1\}\sqcup\{v_2,v_3\}$. Now if one of these two clusterings are given two us uniformly at random, no matter how many queries we do, we will fail to recover the correct cluster with probability at least $p$. Our lower bound result works even when all the clusters are close to their average size (which is $\frac{n}{k}$), and resolves a question from \cite{dkmr:14} for $p=0$ case.

This removes  the constraint on the algorithm designer on how many times a cluster can be queried with a vertex and the algorithms can have greater flexibility.
While we have to show that enough number of queries must be made with a large number of vertices $V' \subset V$, either of the conditions on minimum or maximum sizes of a cluster ensures that $V'$ contains enough vertices that do not satisfy this query requirement.

\begin{theorem} \label{thm:faulty}
Assume either of the following cases:
\begin{itemize}
\item  the maximum size of a cluster is $\le \frac{4n}{k}$.
\item   the minimum size of a cluster is $\ge \frac{n}{20k}$.
\end{itemize}
For a clustering that satisfies either of the above two conditions, any (randomized) algorithm must make $\Omega\Big(\frac{nk}{D(p\|1-p)}\Big)$ queries
to recover the correct clusters with  probability $0.9$ when $p >0$. For $p =0$ any (randomized) algorithm must make $\Omega(nk)$ queries
to recover the correct clusters with  probability $0.9$. (Proof in Sec.~\ref{sec:error-lc}).
\end{theorem}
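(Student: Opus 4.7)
The plan is to invoke Yao's minimax principle with the following hard distribution $\mu$ on clusterings: sample each vertex's label independently and uniformly from $[k]$, then condition on the high-probability event that every cluster has size in $[n/(2k), 2n/k]$, so that both size hypotheses of the theorem hold. Under $\mu$ the labels are (nearly) independent, and in particular the conditional distribution of any $v$'s label given the labels of $V \setminus \{v\}$ is essentially uniform on $[k]$. It then suffices to show that any deterministic algorithm that recovers the clustering with probability $\ge 0.9$ under $\mu$ needs $\Omega(nk/D(p\|1-p))$ queries in expectation; Yao then gives the randomized worst-case lower bound.

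The heart of the argument is a per-vertex reduction to a ``find-the-biased-coin'' problem. Fix $v$ and condition on the labels of $V\setminus\{v\}$ as well as on all oracle responses to queries not touching $v$; since oracle noise is independent across queries and the other labels are now fixed, this conditioning leaves $v$'s posterior uniform on $[k]$. The remaining data are the responses to the $q_v$ queries of the form $(v,u)$, each of which is Bernoulli$(1-p)$ if $v$ lies in $u$'s (known) cluster and Bernoulli$(p)$ otherwise. Identifying $v$'s cluster is therefore an adaptive $k$-ary hypothesis test equivalent to finding a single Bernoulli$(1-p)$ coin among $k-1$ Bernoulli$(p)$ coins; the standard Le Cam / change-of-measure argument controlled by $D(p\|1-p)$ gives a lower bound of $\Omega(k/D(p\|1-p))$ queries involving $v$ for constant success probability. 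A union/averaging step then finishes the proof: if $Q < c\,nk/D(p\|1-p)$ then $\sum_v q_v = 2Q$ forces at least $n/2$ vertices to satisfy $q_v < 4ck/D(p\|1-p)$, each of these is misclassified with constant probability, and the probability of exact recovery drops below $0.1$. For $p=0$ the same distribution works, but the per-vertex subproblem becomes noiseless identification among $k$ candidates; the uniform posterior forces expected cost $(k-1)/2$ against any deterministic strategy, yielding $\Omega(nk)$.

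The hardest part will be the per-vertex reduction in the presence of full adaptivity. The algorithm's choice of which $(v,u)$ to query can depend on responses to queries among $V\setminus\{v\}$, so one must check that those ``external'' responses carry no information about $v$'s label; this follows from the product structure of the oracle noise together with the exchangeability of the i.i.d.\ label distribution, but the bookkeeping is delicate and is exactly the place where the analogy with the biased-coin problem must be made rigorous rather than intuitive. A secondary issue will be verifying that after conditioning on the size event the posterior of $v$ is sufficiently close to uniform to preserve the constants in the coin-identification lower bound; a standard Chernoff estimate should suffice, but it must be handled carefully since the hypotheses ($\le 4n/k$ and $\ge n/(20k)$) are non-asymptotic and govern the constants in the final $\Omega$.
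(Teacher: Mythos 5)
The proposal takes essentially the same two-step route as the paper: (i) a per-vertex hypothesis-testing lower bound showing that identifying a vertex's cluster with a faulty oracle requires $\Omega(k/D(p\|1-p))$ queries touching that vertex, and (ii) a counting argument that uses the balance assumptions to find such a poorly queried vertex. Both reduce the per-vertex problem to a ``find-the-biased-coin'' / multi-hypothesis test controlled by $D(p\|1-p)$, and both then average over vertices and clusters. The main difference is framing. The paper works directly with a \emph{fixed worst-case balanced clustering} and bounds the failure probability over the oracle noise (and the algorithm's internal coins), using a null hypothesis $P_0$ (``$v$ belongs to no cluster'') as the reference measure so that, via the chain rule for divergence, only queries touching the true cluster $V_j$ contribute to $D(P_0\|P_j)$; it then finds a cluster $j$ that is simultaneously rarely queried and rarely output. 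You instead invoke Yao's minimax with an i.i.d.\ uniform label distribution conditioned on the balance event.

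What each buys: the paper's fixed-input formulation avoids reasoning about a conditional posterior at all — the clustering of $V\setminus\{v\}$ is simply given — whereas your conditioning on balance breaks the i.i.d.\ structure, so the ``$v$'s posterior is essentially uniform'' claim requires genuine work and, more importantly, the balance event can in principle \emph{eliminate} candidate clusters (e.g.\ if some clusters sit at the boundary of the allowed range), which is an extra leakage channel you need to rule out. A cleaner version of your route is to prove the bound on the unconditioned i.i.d.\ distribution and then transfer via total-variation closeness to the conditioned one, rather than conditioning first. You correctly flag the adaptivity bookkeeping as the hardest point — the paper's chain-rule-for-divergence decomposition is exactly what makes this rigorous, and any Le Cam / change-of-measure argument must reproduce that step, since the number of queries touching $v$, and to which cluster, is a random, history-dependent quantity. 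Related: your averaging step should be over $\avg[q_v]$ rather than the realized $q_v$, for the same reason. Finally, for $p=0$ your ``$(k-1)/2$ expected queries'' per-vertex bound implicitly assumes no help from the external structure; under the balance-conditioned prior this is not exactly so, which is another place the paper's worst-case input is tidier. None of these are fatal — your proposal correctly identifies the key lemma, the right divergence quantity, and the right averaging structure — but the conditioning and adaptivity issues you flag are real and the paper's formulation is designed precisely to avoid them.
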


We believe that our lower bound techniques are of independent interest, and can spur new lower bounds for communication complexity problems.

\subsubsection{Upper Bounds}

Our upper bound results are inspired by the lower bounds. For \cc~with perfect oracle, a straight forward algorithm achieves a $nk$ query complexity. One of our main contributions is a drastic reduction in query complexity of \cc~when side information is provided. Let $\mu_+\equiv \int{xf_+(x)dx}, \mu_-\equiv\int{xf_-(x)dx}$. Our first theorem that assumes $\mu_+>\mu_-$ is as follows.

\begin{theorem}[Perfect Oracle+Side Information]
\label{thm:mu}
With known $\mu_+,\mu_-$, there exist a Monte Carlo algorithm for \cc~with query complexity $O(\frac{k^2\log{n}}{(\mu_+-\mu_{-})^2})$,  and a Las Vegas algorithm with expected query complexity $O(n+\frac{k^2\log{n}}{(\mu_+-\mu_{-})^2})$ even when $\mu_+,\mu_-$ are unknown. (Proof in Sec.~\ref{sec:perfectub}).
\end{theorem}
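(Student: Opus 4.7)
The plan is a two-phase algorithm: use the oracle to cluster a random sample exactly, and then use concentration of the side-information averages to assign the remaining vertices for free.

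First, draw a uniform sample $S\subseteq V$ of size $s=\Theta(k\log n/(\mu_+-\mu_-)^2)$ and cluster it exactly: process its vertices incrementally and, for each new vertex, query the oracle against one stored representative of every already-discovered sub-cluster, opening a new sub-cluster when all answers are $-1$. Since at most $k$ sub-clusters ever exist, this uses $O(sk)=O(k^2\log n/(\mu_+-\mu_-)^2)$ queries. Second, for every $v\in V\setminus S$ and every sub-cluster $S_i$ compute $\bar w_i(v)=\frac{1}{|S_i|}\sum_{u\in S_i}w_{v,u}$ and assign $v$ to $\arg\max_i\bar w_i(v)$; this phase uses no queries.

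Correctness rests on two concentration inequalities. A Chernoff bound on $|S\cap V_j|$ shows that with high probability every true cluster $V_j$ of size at least $n/k$ contributes at least $c\log n/(\mu_+-\mu_-)^2$ representatives to $S$. On this event, Hoeffding's inequality applied to each $\bar w_i(v)$, combined with a union bound over the at most $nk$ such averages, gives $|\bar w_i(v)-\mu_{\pm}|<(\mu_+-\mu_-)/3$ simultaneously, so the argmax picks the correct cluster of $v$. The main obstacle I anticipate is true clusters of size $o(n/k)$, which may be under- or un-represented in $S$; I would handle this by iterating the sample-and-assign procedure $O(\log n)$ times on the residual set of vertices whose best $\bar w_i(v)$ is not significantly above $\mu_-$. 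Because this residual shrinks geometrically in expectation, the total query budget stays $O(k^2\log n/(\mu_+-\mu_-)^2)$.

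For the Las Vegas version with unknown $\mu_+,\mu_-$, I would compute empirical estimates $\hat\mu_\pm$ from the already-observed weights in $S$ (within-sub-cluster vs.\ between-sub-cluster averages, sharply concentrating because they aggregate $\Omega(s^2)$ samples), run the assignment phase using $\hat\mu_\pm$ in place of $\mu_\pm$, and finally verify every vertex's assignment with one extra oracle query against a representative of the assigned cluster. This verification step accounts for the additive $O(n)$ term. Whenever a verification fails, fall back on the trivial $O(k)$-query procedure for that vertex; because individual verifications fail with probability at most $1/\mathrm{poly}(n)$, the expected extra work is $o(n)$, yielding the claimed $O\bigl(n+k^2\log n/(\mu_+-\mu_-)^2\bigr)$ expected query complexity.
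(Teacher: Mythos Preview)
Your sample-then-assign plan is a genuinely different route from the paper's. The paper never takes a random sample; instead it grows clusters sequentially: each incoming vertex is queried against every currently \emph{active} cluster (there are at most $k$ of them), and as soon as a cluster accumulates $M=\Theta\bigl(\log n/(\mu_+-\mu_-)^2\bigr)$ members it ``graduates'' --- all remaining members are absorbed for free via the threshold test $\average(v,\cC)\ge \mu_+-\theta_{gap}/2$. The key accounting is that at most $Mk$ vertices are ever processed in the query phase, each costing at most $k$ queries, giving $O(k^2M)$ directly with no iteration.

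Your proposal has two concrete gaps. First, the iteration for small clusters does \emph{not} keep the budget at $O(k^2M)$: even though the residual set halves each round, your per-round cost does not shrink. You still sample $s=\Theta(kM)$ vertices and cluster them against up to $k$ sub-clusters, costing $\Theta(k^2M)$ per round; over $O(\log n)$ rounds this is $O(k^2M\log n)$, an extra $\log n$ factor. The paper's sequential scheme avoids this because small clusters (size $<M$) are handled entirely inside the single query phase --- their $<M$ members each cost at most $k$ queries, which is already inside the $k^2M$ budget.

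Second, your sample size $s=\Theta\bigl(k\log n/(\mu_+-\mu_-)^2\bigr)$ depends on $k$ (which is unknown throughout) and, in the Las Vegas version, on $\mu_+-\mu_-$ as well. You propose to estimate $\hat\mu_\pm$ from the clustered sample $S$, but you cannot size $S$ without already knowing the gap, so the construction is circular. The paper's Las Vegas algorithm for unknown $\mu_\pm$ (its Algorithm~\ref{algo:cc-exact}) sidesteps this entirely: it never needs $M$ explicitly. Instead, for each new vertex it queries the cluster of highest empirical membership, and if that fails it buckets the existing clusters into $O(\log n)$ geometric size groups and queries the best candidate in each group; a Hoeffding argument shows that once the true cluster has $\ge 2M$ members it is hit within those $O(\log n)$ probes. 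Your verification-plus-fallback idea for the $O(n)$ term is the same as the paper's, but the analysis that individual verifications fail with probability $1/\mathrm{poly}(n)$ presupposes that $s$ was set correctly, which you have not established.
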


Many natural distributions such as $\mathcal{N}(\mu_+,1)$ and $\mathcal{N}(\mu_-,1)$ have  $\Delta(\mathcal{N}(\mu_+,1)\|\mathcal{N}(\mu_-,1))=(\mu_+-\mu_{-})^2$. But, it is also natural to have distributions where $\mu_+=\mu_-$ %or $\mu_+<\mu_-$ 
but $\Delta(f_+,f_-) > 0$. As a simple example, consider two discrete distributions with mass $1/4,1/2,1/4$ and $1/3,1/3,1/3$ respectively at points $0,1/2,1$. Their means are the same, but divergence is  constant.The following theorem matches the lower bound upto a $\log{n}$ factor with no assumption on $\mu_+,\mu_-$.  %Suppose $f_+$ and $f_-$ are quantized probability mass functions corresponding to $f_+$ and $f_-$, and has $q$ quantization points: let us call them $\{a_1,a_2 \dots, a_q\} \subset [0,1]$.\footnote{This means the side information $w_{i,j}$ can take value in the set $\{a_1,a_2 \dots, a_q\}$. Indeed, with a digital  machine, we are bound to get only discrete side information.}
 \begin{theorem}[Perfect Oracle+Side Information]
 \label{thm:div-new}
 Let $f_+$ and $f_-$ be pmfs \footnote{We can handle probability density functions as well for Theorem \ref{thm:div-new} and Theorem \ref{thm:div-new-err}, if the quantization error is small. Our other theorems are valid for $f_+$ and $f_-$ being both probability mass functions and density functions.} and $\min_i f_+(i)$, $\min_i f_-(i) \ge \epsilon$ for a constant $\epsilon$.  There exist a Monte Carlo algorithm for \cc~with query complexity $O(\frac{k^2\log{n}}{\Delta(f_+,f_-)})$ with known $f_+$ and $f_-$, and a Las Vegas algorithm with expected query complexity $O(n\log n+\frac{k^2\log{n}}{\Delta(f_+,f_-)})$ even when $k$, $f_+$ and $f_-$ are unknown. (Proof in Sec.~\ref{sec:perfectub}).
 \end{theorem}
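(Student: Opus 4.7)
The plan is to design a two-phase "grow-and-classify" algorithm. In the first phase we process vertices sequentially and, for each incoming vertex $v$, either (a) classify $v$ into an existing sub-cluster $C_i$ using only the side information $W$ via a log-likelihood-ratio (LLR) test, or (b) fall back to $O(k)$ oracle queries (one per existing sub-cluster) to identify the correct cluster, creating a new one if $v$ matches none. The key quantity is the LLR
\[
L_i(v) \;=\; \sum_{u \in C_i} \log \frac{f_+(w_{u,v})}{f_-(w_{u,v})},
\]
whose expectation is $|C_i|\,D(f_+\|f_-)>0$ when $v$ is co-clustered with $C_i$ and $-|C_i|\,D(f_-\|f_+)<0$ otherwise. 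Since $\min_x f_\pm(x)\ge\epsilon$, the summands are bounded, so a Chernoff/Hoeffding-style bound yields: once $|C_i|\ge \tau := c\log n/\Delta(f_+,f_-)$, thresholding $L_i(v)$ classifies $v$ against $C_i$ correctly with probability at least $1-n^{-c}$. A union bound over the $nk$ (vertex, sub-cluster) pairs gives overall correctness.

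For the query count, observe that a vertex $v \in V_j$ needs queries in phase~(b) only if no sub-cluster contained in $V_j$ has yet reached size $\tau$. Consequently, the number of vertices that trigger queries is bounded by $\sum_{j=1}^{k} \min(|V_j|,\tau) \le k\tau = O(k\log n/\Delta(f_+,f_-))$, and each such vertex costs at most $k$ queries, giving the advertised $O(k^2\log n/\Delta(f_+,f_-))$ bound. The scheduling detail to verify is that as long as we process vertices in a random order (or equivalently argue via a charging scheme), sub-clusters corresponding to clusters of size $\ge\tau$ reach the threshold $\tau$ before too many additional vertices from them are processed; this is a straightforward balls-in-bins argument.

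For the Las Vegas version we must (i) drop the assumption that $k$, $f_+$, $f_-$ are known, and (ii) guarantee \emph{certainty}, not just high probability. Unknown $k$ is handled automatically since the algorithm creates new sub-clusters on the fly. Unknown $f_+, f_-$ are handled by replacing the exact LLR with an empirical version: once any sub-cluster reaches size $\Theta(\log n/\epsilon^2)$ we can form empirical estimates $\hat f_+$ (from intra-cluster $w$-values) and $\hat f_-$ (from inter-cluster $w$-values) that, by standard VC/empirical-measure bounds together with $\min f_\pm \ge \epsilon$, approximate $f_\pm$ well enough to preserve the LLR test up to constants in $\Delta(f_+,f_-)$. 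To turn the Monte Carlo guarantee into a Las Vegas one, after the algorithm produces a putative clustering we run a verification sweep that, for each vertex, issues $O(\log n)$ queries against randomly chosen elements of its assigned sub-cluster (and a few other sub-clusters). Any vertex flagged as misclassified is re-routed through the query-based branch at cost $O(k)$. Since each vertex is misclassified with probability $O(n^{-c})$, the expected correction cost is $o(1)\cdot k$, leaving a total expected complexity of $O(n\log n + k^2\log n/\Delta(f_+,f_-))$.

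The main obstacle is the Las Vegas step with unknown distributions: we must avoid a chicken-and-egg situation in which accurate estimates of $f_+,f_-$ require many correctly clustered vertices, yet accurate clustering requires good estimates. The proposed remedy is a bootstrap phase that uses $O(k)$ queries per vertex until at least one sub-cluster crosses the estimation threshold $\Theta(\log n/\epsilon^2)$; because this phase is used only for the first $O(\log n/\epsilon^2)$ vertices of the first growing cluster, its cost is absorbed into the $O(n\log n)$ term. A secondary subtlety is uniform control of the LLR test across all sub-clusters and vertices when using $\hat f_\pm$ in place of $f_\pm$; this requires plugging the estimation error into the concentration argument, which is where the hypothesis $\min_i f_\pm(i)\ge\epsilon$ (and the pmf assumption) is used to keep $\log(\hat f_+/\hat f_-)$ bounded and to ensure that $\Delta(\hat f_+,\hat f_-)$ is within a constant factor of $\Delta(f_+,f_-)$.
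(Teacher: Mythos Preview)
Your Monte Carlo scheme is structurally the same as the paper's: the test $L_i(v)>0$ is literally equivalent to the paper's test $D(p_{v,\cC}\|f_+)<D(p_{v,\cC}\|f_-)$, and the query budget $\sum_j\min(|V_j|,\tau)\cdot k\le k^2\tau$ matches. However, your concentration step has a gap. Hoeffding with bounded summands only gives
\[
\Pr\bigl(L_i(v)<0\mid v\in C_i\bigr)\le\exp\Bigl(-\tfrac{|C_i|\,D(f_+\|f_-)^2}{2\log^2(1/\epsilon)}\Bigr),
\]
so the implied threshold is $\tau=\Theta(\log n/D^2)$, not $\Theta(\log n/\Delta(f_+,f_-))$; when $\Delta$ is small these differ by a factor of $1/\Delta$. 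The paper closes this gap by using Sanov's theorem instead of Hoeffding: the error exponent becomes $\min_{p:\,D(p\|f_+)=D(p\|f_-)}D(p\|f_+)$, and a Pinsker/triangle-inequality/reverse-Pinsker chain (this is precisely where $\min f_\pm\ge\epsilon$ enters) shows this exponent is at least $\tfrac{\epsilon}{4}\Delta(f_+,f_-)$. Your argument is repaired the same way, but ``summands bounded $\Rightarrow$ Hoeffding'' is not enough to land on $\Delta$.

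For the Las Vegas part your route genuinely differs from the paper's. The paper avoids your estimate-then-test bootstrap entirely: it defines a \emph{distribution-free} membership score
\[
\text{\sf Membership}(v,\cC)=-\|p_{v,\cC}-p_{\cC}\|_{TV},
\]
where $p_{v,\cC}$ is the empirical law of $\{w_{u,v}:u\in\cC\}$ and $p_{\cC}$ is the empirical intra-$\cC$ law. Because total variation is a metric, memberships can be compared across clusters without knowing $f_\pm$ (Lemma~\ref{lem:unknown}), sidestepping the chicken-and-egg issue you flag. The algorithm always verifies by an oracle query before inserting a vertex, so it is Las Vegas by construction rather than via a post-hoc verification sweep; the $O(n\log n)$ term arises from a logarithmic search over size-grouped clusters, not from $O(\log n)$ verification queries per vertex. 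Your bootstrap-plus-verification plan can likely be pushed through, but the TV-based score is the cleaner device and is what makes the unknown-$f_\pm$ case go through without the bootstrap you identify as the main obstacle. (Incidentally, since the oracle here is perfect, a single query per vertex suffices for verification, not $O(\log n)$.)
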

 
 %The above theorem extends to continuous distributions  $f_+$ and $f_-$  if $f_+\log{\frac{f_+}{f_-}}$ and $f_-\log{\frac{f_-}{f_+}}$ are Riemann integrable.
 To improve from Theorem \ref{thm:mu} to Theorem \ref{thm:div-new}, we would need a more precise approach.  The minor restriction that we have on $f_+$ and $f_-$, namely, $\min_i f_+(i)$, $\min_i f_-(i) \ge \epsilon$ allows $\Delta(f_+,f_-) \leq \frac{2}{\epsilon}$. Note that, by our lower bound result, Lemma \ref{thm:lb-main}, it is not possible to achieve query complexity below $k^2$.
 
 %much stronger concentration inequalities.
 While our lower bound results assume knowledge of $k$, $f_+$ and $f_-$, our Las Vegas algorithms do not even need to know them, and none of the algorithms know $k$. For Theorem \ref{thm:div-new}, indeed, either of $\min_i f_-(i)$ or $\min_i f_+(i)$ having at least $\epsilon$ will serve our purpose. 
% Note that, since our bounds become relevant only when $D(f_+\|f_-)$ is less than $1$, it is impossible 
%to have  $f_-(i)=0$ for some $i$, unless $f_+(i)=0$ for that $i$. On the other hand they can both be $o_n(1)$ and still have bounded divergence, which is the only case
%we leave out.

The main idea is as follows. It is much easier to determine whether a vertex belongs to a cluster, if that cluster has enough number of members. On the other hand, if a vertex $v$ has the highest membership in some cluster $\calC$ with a suitable definition of membership, then $v$ should be queried with $\calC$ first.  For any vertex $v$ and
 a cluster $\cC$, define the empirical ``inter'' distribution $p_{v,\cC}$ in the following way. For, $i =1,\dots,  q,$:~
 $
 p_{v,\cC}(i) = \frac{1}{|\cC|} \cdot |\{u: w_{u,v} = a_i \}|.
 $ Also compute the `intra' distribution $p_{\cC}$  for $i =1,\dots,  q,$
 $
 p_{\cC}(i) = \frac{1}{|\cC|(|\cC|-1)} \cdot |\{(u,v): u \ne v, w_{u,v} = a_i \}|.
 $
Then {\sf Membership}($v, \cC$) = $- \| p_{v,\cC} -  p_{\cC}\|_{TV}$, where $\| p_{v,\cC} -  p_{\cC}\|_{TV}$ denotes the total variation distance between distributions defined in Section \ref{sec:tool}. If {\sf Membership}($v, \cC$) is highest for $\cC$, then using Sanov's Theorem (Theorem \ref{thm:sanov}) it is highly likely that $v$ is in $\cC$,    if $|\cC|$ is large enough. However we do not know $f_+$ or $f_-$.  Therefore, the highest membership could be misleading since we do not know the desired size threshold that $\cC$ must cross to be reliable. But yet, it is possible to query a few clusters and determine correctly the one which contains $v$. The main reason behind using total variation distance as opposed to divergence, is that divergance is not a metric, and hence do not satisfy the triangle inequality which
becomes crucial in our analysis. This is the precise reason why we need the minimum value to be at least $\epsilon$ in Theorem \ref{thm:div-new}. Under these restrictions, a close relationship between divergence and total variation distance can be established using Pinsker's and Reverse Pinsker's inequalities (see, Section \ref{sec:tool}).

%However, , under the mild restrictions of .

For faulty oracle, let us first take the case of no side information (later, we can combine it with the previous algorithm to obtain similar results with side information). Suppose all $V \times V$ queries have been made. If the  maximum likelihood (ML) estimate on $G$ with these $\binom{n}{2}$ query answers is same as the true clustering of $G$, then Algorithm \ref{algo:cc-error-noside} finds the true clustering with high probability. We sample a small graph $G'$ from $G$, by asking all possible queries in $G'$, and check for the heaviest weight subgraph (assuming $\pm 1$ weight on edges) in $G'$. If that subgraph crosses a desired size, it is removed from $G'$. If this cluster is detected correctly, then it has enough members; we can ask separate queries to them to determine if a vertex belongs to that cluster. The main effort goes in showing that the computed cluster from $G'$ is indeed correct, and that $G'$ has small size. 
\vspace{-0.1in}
\begin{theorem}[Faulty Oracle with No Side Information]
\label{theorem:cc-error}
There exists an algorithm with query complexity $O(\frac{1}{\lambda^2}nk\log{n})$ for \cc~that returns $\hat{G}$, ML estimate of $G$ with all $\binom{n}2$ queries, with high probability when query answers are incorrect with probability $p=\frac{1}{2}-\lambda$. Noting that, $D(p\|1-p) \le \frac{4\lambda^2}{1/2-\lambda}$, this matches the information theoretic lower bound on the query complexity within a $\log{n}$ factor. Moreover, the algorithm returns all the true clusters of $G$ of size at least $\frac{36}{\lambda^2}\log{n}$ with high probability. (Proof in Sec.~\ref{sec:faultyub}).
\end{theorem}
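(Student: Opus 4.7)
The plan is to implement a two-phase ``sample, cluster, and broadcast'' scheme. In Phase~1 the algorithm samples a subset $S\subseteq V$ of size $|S|=\Theta(k\log n/\lambda^2)$ and queries every pair in $S$, producing a noisy $\pm 1$-labeled complete graph $G[S]$. It then iteratively extracts candidate clusters from $G[S]$ by repeatedly picking the subset of size $\geq \tau:=\Theta(\log n/\lambda^2)$ whose induced edge-labels have the largest sum (the heaviest subgraph, i.e.\ the ML cluster for the sub-problem) and removing it, stopping when no such subset remains. In Phase~2, each extracted candidate $\hat{C}\subseteq S$ is used as a trusted reference: for every $v\in V\setminus S$ the algorithm issues the $|\hat{C}|$ queries $\{\mathcal{O}_p(v,u):u\in \hat{C}\}$ and assigns $v$ to $\hat{C}$ by majority vote.

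The correctness argument has three Chernoff-type ingredients. (i)~A Chernoff bound on random sampling ensures that every true cluster $C_i$ of size $\geq 36\log n/\lambda^2$ satisfies $|C_i\cap S|\geq\tau$ with probability $1-1/\mathrm{poly}(n)$. (ii)~Within $G[S]$, a pure subset of size $m$ of a single true cluster has expected total weight $(1-2p)\binom{m}{2}=2\lambda\binom{m}{2}$, whereas any subset straddling two true clusters loses expected weight $2\lambda$ per inter-cluster edge; Hoeffding gives fixed-subset deviation probability $\exp(-\Omega(\lambda^2 m^2))$, and a union bound over the $\leq \binom{|S|}{m}\leq 2^{m\log|S|}$ candidates of size $m$ is overwhelmed once $m\geq \tau=\Omega(\log n/\lambda^2)$. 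Hence each extracted $\hat{C}$ equals $C\cap S$ for some true cluster $C$ of size $\geq 36\log n/\lambda^2$ with high probability. (iii)~Once $\hat{C}\subseteq C$ with $|\hat{C}|\geq\tau$, the $|\hat{C}|$ answers $\mathcal{O}_p(v,\cdot)$ are i.i.d.\ Bernoulli with bias $\pm(1-2p)$ depending on whether $v\in C$, so the majority-vote error is $\exp(-\Omega(\lambda^2\tau))\leq 1/n^c$; union-bounding over all $n$ vertices and $\leq k$ extracted clusters yields total correctness. That the algorithm's output coincides with the ML estimate $\hat{G}$ on all $\binom{n}{2}$ queries then follows because the same Chernoff argument, applied to the full graph, shows $\hat{G}$ also reconstructs each large true cluster exactly, and the low-signal residual vertices are handled consistently by both.

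The query-complexity bookkeeping is direct: Phase~1 uses $\binom{|S|}{2}=O(|S|^2)$ queries, while Phase~2 uses $\sum_i|\hat{C}_i|\cdot (n-|S|)\leq n|S|$ queries, since the extracted $\hat{C}_i$ are disjoint subsets of $S$. With $|S|=\Theta(k\log n/\lambda^2)$ the total is $O(nk\log n/\lambda^2)$. Using $D(p\|1-p)\leq \tfrac{4\lambda^2}{1/2-\lambda}$ converts this into $O(nk\log n/D(p\|1-p))$, matching the lower bound of Theorem~\ref{thm:faulty} within a $\log n$ factor.

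The main obstacle is the union-bound step in ingredient~(ii): one must rule out that any of the $\exp(\Theta(|S|))$ ``mixed'' subsets of $G[S]$ of size $\geq\tau$ outweighs a pure subset of comparable size. This works precisely because the expected-weight gap between pure and mixed subsets grows quadratically in subset size while the union-bound penalty grows only linearly, so the crossover happens exactly at $\tau=\Theta(\log n/\lambda^2)$; any smaller threshold would fail. A secondary technical wrinkle is that $k$ is unknown to the algorithm, which is handled by looping Phase~1 until no size-$\tau$ subset with positive empirical weight remains, at which point every leftover true cluster is already below the $36\log n/\lambda^2$ recovery promise.
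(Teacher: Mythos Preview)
Your approach differs from the paper's in a natural-looking way---upfront random sampling versus the paper's sequential construction of $G'$---but claim~(i) has a genuine gap that breaks the argument. You assert that with $|S|=\Theta(k\log n/\lambda^2)$, every true cluster $C_i$ of size $\ge 36\log n/\lambda^2$ satisfies $|C_i\cap S|\ge\tau=\Theta(\log n/\lambda^2)$ with high probability. But the expected intersection is $|C_i|\cdot|S|/n=\Theta\bigl(\tfrac{\log n}{\lambda^2}\cdot\tfrac{k\log n}{\lambda^2 n}\bigr)=\Theta\bigl(\tfrac{k\log^2 n}{\lambda^4 n}\bigr)$, which is $\ge\tau$ only when $k=\Omega(n\lambda^2/\log n)$. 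For smaller $k$ (say $k=O(1)$, where the average cluster has size $\Theta(n)$), a cluster that barely meets the $36\log n/\lambda^2$ threshold is tiny relative to $n$, and a uniform sample of size $\Theta(k\log n/\lambda^2)$ will almost surely contain far fewer than $\tau$ of its vertices---possibly none. So your Phase~1 simply misses such clusters, and neither the extraction step nor Phase~2 can recover them.

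The paper's algorithm avoids this precisely by \emph{not} sampling upfront. It processes vertices one at a time; a vertex enters the working graph $G'$ only after it fails the majority test against every cluster already in $A$. This guarantees that $G'$ accumulates members of each not-yet-extracted cluster deterministically, one per arrival, until that cluster reaches the $c'\log n$ threshold inside $G'$ and is pulled out. The bound $|V'(G')|\le (k-k')\,c'\log n$ (with $k'$ clusters already in $A$) then follows for free, giving the $O(nk\log n/\lambda^2)$ query count without any assumption relating cluster sizes to $n/k$. Your ``looping Phase~1'' remark at the end does not fix the issue, since the problem is not the stopping rule but the sample size: no uniform sample of size $O(k\log n/\lambda^2)$ can simultaneously hit all $\Theta(\log n/\lambda^2)$-sized clusters when $k\ll n\lambda^2/\log n$. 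Finally, your treatment of the residual (``handled consistently by both'') is too vague; the paper explicitly retains all pairwise queries inside $G'$ and returns the ML clustering on that residual, which is what makes the output match $\hat G$.
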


 \begin{theorem}[Faulty Oracle with  Side Information]
 \label{thm:div-new-err}
 Let $f_+,f_-$ be pmfs and $\min_i f_+(i), \min_i f_-(i) \ge \epsilon$ for a constant $\epsilon$.  With side information and faulty oracle with error probability $\frac{1}{2}-\lambda$, there exist an algorithm for \cc~with query complexity $O(\frac{k^2\log{n}}{\lambda^2\Delta(f_+,f_-)})$ when  $f_+,f_-$ known, and an algorithm with expected query complexity $O(n+\frac{k^2\log{n}}{\lambda^2\Delta(f_+,f_-)})$ when  $f_+,f_-$ unknown, that recover $\hat{G}$, ML estimate of $G$ with all $\binom{n}2$ queries, with high probability. {\small (}Proof in Sec.~\ref{sec:faultysideub}{\small )}.
 \end{theorem}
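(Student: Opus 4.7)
The plan is to graft the noise-handling strategy of Theorem~\ref{theorem:cc-error} onto the side-information algorithm of Theorem~\ref{thm:div-new}. In the perfect-oracle algorithm, each oracle query serves a single purpose: to verify whether a candidate vertex $v$ belongs to a specific cluster $\mathcal{C}$ that has been singled out by the total-variation membership score $-\|p_{v,\mathcal{C}}-p_{\mathcal{C}}\|_{TV}$. In the faulty-oracle setting I would replace every such atomic ``is $v$ in $\mathcal{C}$?'' test by querying $v$ against $\Theta(\log n/\lambda^2)$ uniformly sampled members of $\mathcal{C}$ and returning the majority answer. Since the noise bits on distinct pairs are mutually independent, a Chernoff bound (using bias $\lambda$) shows that the majority agrees with the noiseless ground truth with probability $1-1/\mathrm{poly}(n)$. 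A union bound over the at most polynomially many membership tests then yields that all majorities are simultaneously correct with high probability, and conditioned on this event the trajectory of the noisy algorithm is identical to that of the perfect-oracle algorithm from Theorem~\ref{thm:div-new}. Combining the original $O(k^2\log n/\Delta(f_+,f_-))$ oracle queries with the $\Theta(\log n/\lambda^2)$ multiplicative blowup per test, and absorbing one $\log n$ factor into the per-test confidence budget already present in the noiseless analysis, yields the claimed $O(k^2\log n/(\lambda^2\Delta(f_+,f_-)))$ bound; using $D(p\|1-p)=\Theta(\lambda^2)$ for $p=\tfrac12-\lambda$ this matches the lower bound up to a $\log n$ factor.

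Next I would show that the algorithm's output coincides with the maximum-likelihood estimate $\hat G$ obtained from all $\binom{n}{2}$ noisy queries. As in the proof of Theorem~\ref{theorem:cc-error}, when every true cluster has size at least $c\log n/\lambda^2$ for a sufficiently large constant $c$, a Chernoff calculation on the signed edge weights within and across clusters shows that the true clustering maximizes the likelihood with high probability. Since our algorithm recovers the true clustering on the high-probability event above, its output equals $\hat G$. For clusters below this threshold, the same fallback used in Theorem~\ref{theorem:cc-error} (reporting whatever is consistent with the queries actually performed) preserves agreement with $\hat G$ without affecting the query count.

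The main obstacle is the Las Vegas variant with unknown $f_+,f_-$. Without access to $\Delta(f_+,f_-)$ the algorithm cannot fix in advance the cluster size above which the TV-score test is reliable, so it must grow candidate clusters adaptively and re-certify every tentative assignment, exactly as in Theorem~\ref{thm:div-new}. Correctness still hinges on the Sanov-type concentration combined with Pinsker's and the reverse Pinsker inequality (which is where the hypothesis $\min_i f_+(i), \min_i f_-(i) \ge \epsilon$ enters): once a cluster has size $\Omega(\log n/\Delta(f_+,f_-))$, the empirical TV score separates the in-cluster and out-of-cluster distributions by a $\Delta(f_+,f_-)/2$ margin with failure probability $1/\mathrm{poly}(n)$, so only a polylogarithmic amount of side information per candidate pair is needed. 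The additive $n$ term in the expected query count reflects the unavoidable cost of touching every vertex at least once to place it; the remaining cost is the $k^2\log n/(\lambda^2\Delta(f_+,f_-))$ verification budget from the first paragraph, where the $1/\lambda^2$ factor still comes from the noisy majority. The delicate part of the argument is to show that adaptive growth of clusters plus the noisy majority test do not interact to blow up the expected overhead beyond this target, which requires careful bookkeeping of the failure probabilities across rounds in the spirit of Theorem~\ref{thm:div-new}'s Las Vegas analysis.
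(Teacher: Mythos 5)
Your high-level plan is the same as the paper's: combine the side-information algorithm (Algorithm~\ref{algo:cc-exact}) that uses the TV-based {\sf Membership} score to decide \emph{which} cluster to test, with the noisy-majority machinery of Theorem~\ref{theorem:cc-error}. You also correctly observe that once a cluster has $\Theta(\log n/\lambda^2)$ members, an atomic ``is $v\in\cC$?'' test can be replaced by $\Theta(\log n/\lambda^2)$ independent queries to distinct members of $\cC$ plus a majority vote, and that the final output must be tied to the ML estimate $\hat G$ of the residual graph. So the approach is not wrong in spirit.

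However there is a genuine gap in the way you describe the algorithm. You frame the reduction as ``take Algorithm~\ref{algo:cc-exact}, replace every atomic oracle call by a $\Theta(\log n/\lambda^2)$-wise majority vote,'' and then relegate everything else to ``the same fallback used in Theorem~\ref{theorem:cc-error}.'' But that substitution is not available at the step where new clusters are created. In the perfect-oracle algorithm, when no existing cluster accepts $v$, the algorithm opens a singleton cluster $\{v\}$; subsequent vertices are tested against this singleton by a \emph{single} reliable query. In the faulty setting that test is essentially a coin flip, and there are not yet enough members of the nascent cluster to majority-vote against. This is not a corner case: every cluster starts out below the $\Theta(\log n/\lambda^2)$ threshold, so the bootstrapping phase dominates the first $\Theta(\log n/\lambda^2)$ members of \emph{every} cluster. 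The paper's Algorithm~\ref{algo:cc-error-side} therefore does something structurally different there: a vertex that fails all majority tests is routed into a residual sample graph $G'$, all pairwise queries within $G'$ are issued, and the heaviest $\pm 1$-weighted subgraph of $G'$ is extracted (Lemma~\ref{lemma:mlG'}); only when such a subgraph reaches size $c\log n$ is it promoted to the active list $A$, at which point the majority-vote-on-a-sample scheme becomes usable. This $G'$/heaviest-subgraph mechanism is what certifies new clusters and is also exactly what makes the output coincide with the ML estimate on the residual graph---it is not a reporting fallback for small clusters but the cluster-creation engine, and without it the ``drop-in substitution'' description of the algorithm does not compile into a correct procedure. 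Your query-count accounting (``absorb one $\log n$ into the per-test budget'') also glosses over the bookkeeping that, in the paper, is split into three regimes according to how many members of $v$'s true cluster have already been processed ($r\le c\log n$, $c\log n<r\le 2M$, $r>2M$), with $G'$ doing the work in the first regime; once you incorporate the $G'$ phase explicitly the per-vertex costs in each regime need to be re-derived rather than inferred multiplicatively.
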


 A subtle part of these results is that, the running time is $O(k^{\frac{\log{n}}{\lambda^2}})$, which is optimal assuming the hardness of planted cliques. However, by increasing the query complexity, the running time can be reduced to polynomial.
 \begin{corollary}[Faulty Oracle with/without Side Information]
 \label{cor:er-poly}
 For faulty oracle with error probability $\frac{1}{2}-\lambda$, there exists a polynomial time algorithm with query complexity $O(\frac{1}{\lambda^2}nk^2)$ for \cc~that recovers all clusters of size at least $O(\max{ \{\frac{1}{\lambda^2}\log{n},k\}})$. (Proof in Sec.~\ref{sec:faultyub}).
 \end{corollary}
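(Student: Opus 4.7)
The plan is to replace the only super-polynomial step in the algorithm behind Theorem~\ref{theorem:cc-error} --- namely, the exhaustive search for a heaviest $\pm 1$-weighted induced subgraph of a sampled subgraph $G'$ --- with a polynomial-time subroutine. The statement of Corollary~\ref{cor:er-poly} allows two slacks relative to Theorem~\ref{theorem:cc-error}: the cluster-size threshold is inflated from $\Omega(\log n/\lambda^2)$ to $\Omega(\max(k,\log n/\lambda^2))$, and the query budget is inflated by a factor of $k/\log n$. These two slacks together are precisely what is needed to sidestep the planted-clique bottleneck that forces the exhaustive search in the proof of Theorem~\ref{theorem:cc-error}.

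First, sample a uniform random subset $G' \subseteq V$ of size $m$, with $m$ chosen so that every cluster of size $s \geq \max(k,\log n/\lambda^2)$ contributes $\Omega(\sqrt{m}/\lambda)$ members of $G'$ with high probability (by Chernoff this needs $m = \Theta(n^2/(s\lambda)^2)$), and so that the $\binom{m}{2}$ queries inside $G'$ fit within the $O(nk^2/\lambda^2)$ budget. Query all pairs in $G'$. In the parameter regime where both constraints can be met simultaneously, run a standard polynomial-time noisy planted-partition recovery on $G'$ --- e.g.\ McSherry-style spectral clustering, or the simple greedy rule that iteratively removes the vertex of largest $+1$-degree together with its $+1$-neighborhood. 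With planted sets of size $\Omega(\sqrt{m}/\lambda)$ and per-edge bias $2\lambda$, this step recovers the partition of $G'$ correctly with probability $1 - n^{-\Omega(1)}$, producing representative sets $R_1,\dots,R_{k'}$ for every cluster of size $\geq \max(k,\log n/\lambda^2)$. In the complementary regime of very small $k$, the algorithm of Theorem~\ref{theorem:cc-error} itself already runs in polynomial time (its running time $k^{O(\log n/\lambda^2)}$ is polynomial for $k = n^{O(1/\log n)}$) and can be invoked as a black box.

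Second, extend the partition of $G'$ to all of $V$ exactly as in Theorem~\ref{theorem:cc-error}: for each $v \in V$ and each recovered cluster $R_j$, query $v$ against $\Theta(\log n/\lambda^2)$ members of $R_j$ and assign $v$ by majority vote. Since there are at most $n/\max(k,\log n/\lambda^2) \leq n/k$ large clusters, this extension phase costs $O(n \cdot (n/k) \cdot \log n/\lambda^2)$ queries, which is dominated by $O(nk^2/\lambda^2)$ in the relevant regime. A standard Chernoff bound followed by a union bound over all vertex-cluster pairs certifies correct assignment with high probability. Clusters of size below the threshold $\max(k,\log n/\lambda^2)$ are neither guaranteed to have representatives in $G'$ nor correctly classified by the extension step, which is exactly the content of the corollary's weaker recovery guarantee.

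The main obstacle will be the careful tuning of $m$ so that the planted-cluster representation inside $G'$ crosses the polynomial-time spectral-recovery threshold $\Omega(\sqrt{m}/\lambda)$ \emph{and} the total query count $\binom{m}{2} + n^2 \log n/(k\lambda^2)$ stays within $O(nk^2/\lambda^2)$, simultaneously across the full range of $k$ --- most likely by splitting the analysis into a small-$k$ regime (Theorem~\ref{theorem:cc-error} is directly polynomial) and a large-$k$ regime (the subsample-and-spectral approach succeeds). The remaining ingredients --- concentration of cluster representation in $G'$, spectral-recovery guarantees for noisy planted partitions, and the routine majority-vote union bound for the extension --- mirror the proof of Theorem~\ref{theorem:cc-error} and require no new ideas.
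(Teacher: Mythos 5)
There is a genuine gap. The key to the paper's argument is to keep the \emph{adaptive} construction of $G'$ from Algorithm~\ref{algo:cc-error-noside} (a vertex is added to $G'$ only after it has failed to join any cluster already in $A$), but to raise the acceptance threshold in step~\ref{eq:find_set} from $c\log n$ to $\Omega(k)$. With that change, each of the $k$ clusters contributes at most $O(k)$ vertices to $G'$ before it is detected and removed, so $|G'| = O(k^2)$ at all times. Crucially, this means every cluster of size $\ge \Omega(k)$ contributes $\Omega(k) = \Omega(\sqrt{|G'|})$ vertices to $G'$, which is exactly the threshold at which the Mathieu--Schudy polynomial-time algorithm for correlation clustering over noisy input \cite{ms:10} is guaranteed to recover it. The per-vertex query cost is bounded by $O(k^2 + k\cdot\frac{\log n}{\lambda^2}) = O(\frac{k^2}{\lambda^2})$, giving the claimed budget.

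Your proposal replaces this adaptive construction with a uniformly random sample $G'$, and that substitution breaks the argument. A cluster of size $s$ shows up in a random sample of size $m$ with about $sm/n$ representatives, so to guarantee $\Omega(\sqrt{m}/\lambda)$ representatives per cluster you correctly derive $m = \Theta(n^2/(s\lambda)^2)$, but then $\binom{m}{2} = \Theta\bigl(n^4/(s\lambda)^4\bigr)$ exceeds the budget $O(nk^2/\lambda^2)$ unless roughly $k \ge n^{1/2}/\lambda^{1/3}$. Your fallback for ``small $k$'' (running the quasi-polynomial Algorithm~\ref{algo:cc-error-noside} directly) only becomes polynomial when $k = n^{O(1/\log n)}$, i.e.\ $k = O(1)$, leaving the entire range $\omega(1) \le k \le o(\sqrt{n})$ uncovered. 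The adaptive construction of $G'$ is not a convenience you can swap for random sampling: it is precisely what makes each cluster's share of $G'$ scale linearly in $|G'|^{1/2}$ rather than in $|G'|/n$, and hence what lets the polynomial-time correlation-clustering threshold be met without blowing the query budget.
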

 
 As it turns out the ML estimate of $G$ with all $\binom{n}2$ queries is equivalent to computing correlation clustering on $G$ \cite{bm:08,bbc:04,acn:08,cmsy:15,cdk:14}. As a side result, we get a new algorithm for correlation clustering over noisy input, where any cluster of size $\min{(k, \sqrt{n})}$ will be recovered exactly with high probability as long as $k=\Omega(\frac{\log{n}}{\lambda^2})$. When $k \in [\Omega(\frac{\log{n}}{\lambda^2}), o(\sqrt{n})]$, our algorithm strictly improves over \cite{bm:08,bbc:04}.

We hope our work will inspire new algorithmic works in the area of crowdsourcing where both query complexity and side information are important.
\subsubsection{Round Complexity}
Finally, we extend all our algorithms to obtain near optimal round complexity. 

 \begin{theorem}[Perfect Oracle with Side Information]
 \label{thm:perfect-parallel}
  There exists an algorithm for \cc~with perfect oracle and unknown side information $f_+$ and $f_-$ such that it achieves a round complexity within $\tilde{O}(1)$ factor of the optimum when $k=\Omega(\sqrt{n})$ or $k=O(\frac{\sqrt{n}}{\Delta(f_+\|f_-)})$, and otherwise within $\tilde{O}(\frac{1}{{\Delta(f_+\|f_-)}})$.  (Proof in Sec.~\ref{sec:roundpo}).
  \end{theorem}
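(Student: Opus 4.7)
The plan is to parallelize the sequential algorithm of Theorem~\ref{thm:div-new}. The key structural observation is that once a set of ``candidate'' clusters has been identified, the Membership scores for different unassigned vertices are computed from the side-information matrix $W$ independently, and a single confirming query per vertex (against a witness of its top-ranked candidate) suffices by the Sanov-type concentration used in the proof of Theorem~\ref{thm:div-new}. Thus the sequential algorithm naturally admits a batched implementation, and the question reduces to how few adaptive rounds are needed to identify enough candidates.

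First I would run a single \emph{sampling round}: pick a uniform sample $S\subseteq V$ of size $\Theta(\sqrt{n\log n})$ and query every pair in $\binom{S}{2}\le n\log n$ in one shot. This recovers the clustering of $S$ exactly, and by a Chernoff bound every ground-truth cluster of size $\tilde{\Omega}(\sqrt n)$ has $\Omega(\log n)$ representatives in $S$ and is therefore extracted along with a witness $\widehat{\mathcal{C}}_1,\dots,\widehat{\mathcal{C}}_t$. In an \emph{assignment round}, for every $v\notin S$ I would compute Membership$(v,\widehat{\mathcal{C}}_j)$ from $W$ alone and, in parallel, ask one query between $v$ and a witness of its top-Membership cluster. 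By the argument used in Theorem~\ref{thm:div-new}, this single confirming query places $v$ correctly whenever the true cluster of $v$ was extracted in the sampling round, provided $|\widehat{\mathcal{C}}_j|=\Omega(1/\Delta(f_+,f_-))$. This assignment round uses at most $n$ queries, well within the per-round budget.

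Now I would argue by regime. When $k=O(\sqrt n/\Delta(f_+,f_-))$, the average cluster size is at least $\sqrt n\,\Delta(f_+,f_-)$, which already exceeds the detection threshold above, so the two rounds resolve every vertex and the round complexity is $\tilde O(1)$, matching the trivial lower bound. When $k=\Omega(\sqrt n)$, the query lower bound of Theorem~\ref{thm:lb-main} is $\tilde\Omega(k^2/\Delta(f_+,f_-))$, which already forces $\tilde\Omega\!\bigl(k^2/(n\,\Delta(f_+,f_-))\bigr)$ rounds; the algorithm achieves this by simply repeating the sample-assign pair $\tilde O\!\bigl(k^2/(n\,\Delta(f_+,f_-))\bigr)$ times, each round using the full budget $n\log n$, so we are again within a $\tilde O(1)$ factor of optimum.

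The main obstacle is the intermediate regime $\sqrt n/\Delta(f_+,f_-)<k<\sqrt n$, which is precisely where the stated $\tilde O(1/\Delta(f_+,f_-))$ overhead appears. Here clusters can be too small for one random sample of $\Theta(\sqrt{n\log n})$ to hit, yet $k$ is small enough that the per-round budget $n\log n$ could in principle accommodate every uncovered vertex. The fix is to recursively peel: after each sample-assign pair, remove the clusters already identified and repeat on the residual vertices. Because $f_+,f_-$ are unknown, the algorithm cannot pick the optimal sample size upfront, and one can only guarantee that each recursion level shrinks the residual population by a multiplicative factor tied to $\Delta(f_+,f_-)$; this produces $\tilde O(1/\Delta(f_+,f_-))$ levels of recursion and hence the claimed bound. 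Making this shrinkage rigorous without assuming knowledge of $\Delta(f_+,f_-)$ — by using triangle-inequality arguments on total variation distance as in the sequential proof, combined with the Pinsker/reverse-Pinsker link — is expected to be the technical core of the argument.
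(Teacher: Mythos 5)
Your overall strategy---sample $\Theta(\sqrt{n\log n})$ vertices and query all pairs, then do a parallel assignment round driven by Membership scores, and peel/recurse---is structurally the same as the paper's algorithm in Section~\ref{sec:roundpo}. However, there are several concrete gaps.

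\paragraph{The intermediate regime you define is empty.} You write the regime in which the $\tilde O(1/\Delta)$ overhead appears as $\sqrt{n}/\Delta < k < \sqrt{n}$, but when $\Delta(f_+,f_-) \le 1$ (the interesting case, where the side information is noisy) we have $\sqrt{n}/\Delta \ge \sqrt{n}$, so that interval is empty. Looking at the paper's own proof, the relevant threshold is $kM \le \sqrt{n\log n}$ with $M = \Theta(\log n / \Delta)$, i.e.\ $k = \tilde O(\Delta\sqrt n)$, not $k = O(\sqrt{n}/\Delta)$; the theorem statement appears to contain a typo, and the correct picture is: $\tilde O(1)$ overhead for $k = \tilde O(\Delta\sqrt n)$ or $k = \Omega(\sqrt n)$, and $\tilde O(1/\Delta)$ overhead in between. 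Your arithmetic for the first regime inherits the error: a cluster of average size $n/k \ge \sqrt n\Delta$ contributes only about $\sqrt n\Delta\cdot\sqrt{\log n/n} = \Delta\sqrt{\log n}$ samples, which is far less than the detection threshold $M = \Theta(\log n/\Delta)$ once $\Delta$ is small; the inequality only goes through with $n/k \ge \sqrt n/\Delta$, i.e.\ $k = O(\Delta\sqrt n)$.

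\paragraph{One confirming query per vertex is not enough.} The paper's Algorithm~\ref{algo:cc-exact} does \emph{not} simply query the single top-Membership cluster; after the initial guess it performs a logarithmic search over clusters grouped by size (steps 11 and 13), costing up to $\lceil\log n\rceil$ queries per vertex. This matters because small candidate clusters (with few sampled representatives) produce noisy Membership estimates: a vertex $v$ may have its globally highest Membership pointing to a small noisy cluster rather than its true (large, well-represented) cluster, in which case a single confirming query fails to place $v$. The size-grouping ensures that within the group of large clusters (all having $\ge M$ samples), the true cluster does dominate. The parallel algorithm of Section~\ref{sec:roundpo} explicitly imports this by selecting ``at most $\lceil \log{n} \rceil$ clusters using steps (11) and (13)'' per unassigned vertex.

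\paragraph{The merge step is missing, and the shrinkage argument is vague.} In the paper's algorithm, each fresh sample of $\sqrt{n\log n}$ vertices produces new candidate clusters which must be merged with the existing ones (step 4 of its iteration), at a cost of $ll'$ queries across $\lceil ll'/(n\log n)\rceil$ rounds; this is exactly what makes the $k = \Omega(\sqrt n)$ regime work out. You don't describe this step, and without it the same ground-truth cluster appears fragmented across rounds. Finally, your ``each recursion level shrinks the residual population by a factor tied to $\Delta$'' is hand-waving; the paper uses a clean Markov argument: after $6kM$ sampled vertices are accumulated, every cluster of size $\ge n/(2k)$ has $\ge 2M$ representatives, and since (by Markov) at least half of $V$ lives in such clusters, each macro-phase halves the residual vertex set. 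The $1/\Delta$ only enters through $M$ in the count of rounds needed to accumulate $6kM$ samples, not as a per-level shrinkage rate.
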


\begin{theorem}[Faulty Oracle with no Side Information]
\label{thm:error-parallel}
 There exists an algorithm for \cc~with faulty oracle with error probability $\frac{1}{2}-\lambda$ and no side information such that it achieves a round complexity within $\tilde{O}(\sqrt{\log{n}})$ factor of the optimum that recovers $\hat{G}$, ML estimate of $G$ with all $\binom{n}2$ queries with high probability. (Proof in Sec.~\ref{sec:roundfo}).
  \end{theorem}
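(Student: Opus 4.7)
The plan is to parallelize the sequential algorithm underlying Theorem~\ref{theorem:cc-error}, exploiting the per-round budget of $n\log n$ queries to both (i) extract all sufficiently large clusters from a single sampled subgraph and (ii) batch-test the remaining vertices against every detected cluster in parallel. The total query budget of $\tilde{O}(nk/\lambda^2)$ from Theorem~\ref{theorem:cc-error} implies a lower bound of $\Omega(k/(\lambda^2\log n))$ rounds just on information-theoretic grounds, so the target is to match this up to a $\tilde{O}(\sqrt{\log n})$ factor.

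First, in $\tilde{O}(1)$ rounds I would sample a subgraph $G' \subseteq V$ of size $s = \tilde{\Theta}(\sqrt{n}/\lambda)$, chosen so that $\binom{s}{2} = \tilde{O}(n/\lambda^2) \le \tilde{O}(n\log n)$, and issue all pairwise queries within $G'$ simultaneously. Offline, I would run the heaviest-weight subgraph extraction routine of Theorem~\ref{theorem:cc-error} to recover a representative set from every cluster whose intersection with $G'$ has at least $\Omega(\log n/\lambda^2)$ vertices; by standard sampling concentration, this captures, with high probability, every true cluster of size $\tilde{\Omega}(\sqrt{n}/\lambda)$. Clusters smaller than this threshold are handled by a recursive pass on the still-unclassified vertex set with a rescaled sample; because the residual set shrinks geometrically, the recursion has depth $O(\log n)$, but the levels can be pipelined so they contribute only $\tilde{O}(1)$ to the round count.

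Second, for the classification phase, each unclassified vertex $v$ is paired with $t = \Theta(\log n/\lambda^2)$ distinct members of each detected cluster $\cC_j$, and its membership decided by a majority vote with failure probability $n^{-\omega(1)}$ via a Chernoff bound. Naively distributing these $\tilde{O}(nk/\lambda^2)$ queries across rounds of size $n\log n$ yields $\tilde{O}(k/\lambda^2)$ rounds, which is a $\tilde{O}(\log n)$-factor away from optimum. To close it to $\tilde{O}(\sqrt{\log n})$, I would use a two-phase voting scheme: in phase~1, every $(v,\cC_j)$ pair receives only $\sqrt{\log n}/\lambda$ queries; pairs whose running vote margin already exceeds a threshold are decided immediately, and only the remaining "indecisive" pairs participate in phase~2 with the full $\log n/\lambda^2$ budget. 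A tail-of-binomial argument shows the indecisive set is at most a $1/\sqrt{\log n}$ fraction of the original, so phase~2 fits within the same per-round budget as phase~1, and the two phases together require only $\tilde{O}(\sqrt{\log n})$ rounds beyond the unavoidable ones.

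The main obstacle will be the concentration analysis justifying the two-phase voting scheme: I must quantify how, after $\sqrt{\log n}/\lambda$ queries, the surviving indecisive fraction is small enough to allow the phase-2 queries to fit in $\tilde{O}(\sqrt{\log n})$ rounds without blowing past the $n\log n$ per-round budget, while maintaining the overall failure probability at $n^{-\Omega(1)}$ by a union bound over all $nk$ (vertex, cluster) pairs. This tight amortization---combined with the round lower bound $\Omega(k/(\lambda^2\log n))$ inherited from Theorem~\ref{thm:faulty}---is precisely where the $\sqrt{\log n}$ factor arises, and its verification is the technical heart of the proof.
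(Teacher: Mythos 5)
Your proposal diverges from the paper in one crucial place, and that place is exactly where the argument breaks. The paper's algorithm always spends the full $c\log n=\Theta(\log n/\lambda^2)$ queries per (vertex, cluster) comparison; its $\sqrt{\log n}$ factor comes purely from a case analysis on the parameter $kc$ versus $\sqrt{\log n}$ and $n/\sqrt{\log n}$, comparing the algorithm's round count $O(\lceil k^2c^2/n\rceil\log n+kc)$ against the query-driven round lower bound $\Theta(kc/\log n)$. You instead locate the $\sqrt{\log n}$ factor inside a two-phase per-pair voting scheme, and that scheme does not work.

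The obstacle is quantitative. In phase~1 you allocate $m_1=\sqrt{\log n}/\lambda$ queries to each $(v,\cC_j)$ pair. The vote margin has mean $\pm\lambda m_1=\pm\sqrt{\log n}$ and standard deviation $\sqrt{m_1}=(\log n)^{1/4}/\sqrt{\lambda}$. To declare a pair ``decisive'' with the $n^{-\Omega(1)}$ per-pair error you need for a union bound over the $\Theta(nk)$ pairs, a wrongly labelled pair must fail to cross the threshold $T$, which forces $(T+\lambda m_1)^2/m_1=\Omega(\log n)$, i.e.\ $T=\Omega((\log n)^{3/4}/\sqrt{\lambda})-\sqrt{\log n}$. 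But then the probability that a \emph{correctly} labelled pair clears $T$ is only $\exp(-\Omega((T-\lambda m_1)^2/m_1))=n^{-\Omega(\lambda)}$, not $1-1/\sqrt{\log n}$; essentially no pair is resolved in phase~1. Equivalently, the information-theoretic cost of an $n^{-\Omega(1)}$-confident Bernoulli hypothesis test here is $\Theta(\log n/D(p\|1-p))=\Theta(\log n/\lambda^2)$ queries, and $\sqrt{\log n}/\lambda$ is always strictly smaller for $\lambda<\sqrt{\log n}$, which always holds. So the ``tail-of-binomial'' step you flag as the technical heart is not merely unverified, it is false, and with it the claimed amortization collapses. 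Your sampling/extraction skeleton (steps one and two) does match the paper in spirit, but you would need to replace the two-phase voting entirely --- for instance by the paper's fixed-budget classification combined with the $kc$-regime case split --- to reach the $\tilde O(\sqrt{\log n})$ guarantee.
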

  This also leads to a new parallel algorithm for correlation clustering over noisy input where computation in every round is bounded by $n\log{n}$.

\section{Organization of the remaining paper}
The rest of the paper is organized as follows. 
In Section \ref{sec:tool}, we provide the information theoretic tools (definitions and basic results) necessary for our 
upper and lower bounds.

In Section \ref{sec:perfect} we provide our main upper bound results for the perfect oracle case when $f_+$ and $f_-$ are unknown.
In Section \ref{sec:res} we give some more insight into the working of Algorithm \ref{algo:cc-exact} and for the case when $f_+$ and $f_-$ are known, provide near optimal Monte Carlo/Las Vegas algorithms
for  \cc~with side information and perfect oracle.
In Section \ref{sec:fault}, we  consider the case when crowd may return erroneous answers. In this scenario we give tight lower and upper bounds on query complexity
in both the cases when we have or lack side information. 
In  Section \ref{sec:round},  we show that the algorithms developed for optimizing query complexity naturally extend to the parallel version of minimizing the round complexity.

 \section{Information Theory Toolbox}\label{sec:tool}
The lower bounds for randomized algorithms presented in this paper are all information theoretic.  We also use information theoretic tools of large-deviations in upper bounds. To put these bounds into perspective, we will need definition of many information theoretic quantities and some results. Most of this material can also be found in a standard information theory textbook, such as Cover and Thomas \cite{cover2012elements}.
\begin{definition}[Divergence]
The Kullback-Leibler divergence, or simply divergence, between two probability measures  $P$ and $Q$ on a set $\cX$,  is defined to be
$$
D(P\|Q) = \int_{\cX} dP \ln \frac{dP}{dQ}.
$$
When $P$ and $Q$ are distributions of a continuous random variable, represented by probability densities $f_p(x)$ and $f_q(x)$ respectively, we have,
$
D(f_p \| f_q) = \int_{-\infty}^\infty f_p(x) \ln \frac{f_p(x)}{f_q(x)} dx.
$ 
Similarly when $P$ and $Q$ are discrete random variable taking values in the set $\cX$, and represented by the probability mass
functions $p(x)$ and $q(x)$, where $x \in \cX$ respectively, we have
$
D(p(x)\|q(x)) = \sum_{x \in \cX} p(x) \ln \frac{p(x)}{q(x)}.
$
\end{definition}
For two Bernoulli distributions with parameters $p$ and $q$, where $0 \le p,q \le 1$, by abusing the notation 
the divergence is written as,
$$
D(p\|q) = p \ln \frac{p}{q} + (1-p) \ln \frac{1-p}{1-q}.
$$ 
In particular, 
$
D(p\|1-p) = p \ln \frac{p}{1-p} +(1-p) \ln \frac{1-p}{p} = (1-2p) \ln \frac{1-p}{p}.
$
Although $D(P \| Q) \ge 0$, with equality when $P =Q$,  note that in general $D(P\|Q) \ne D(Q\|P)$. Define the {\em symmetric divergence}
between two distribution $P$ and $Q$ as,
$$
\Delta(P, Q) = D(P\|Q) +D(Q\|P).
$$

The following property of the divergence is going to be useful to us.
Consider a set of  random variables $X_1, \dots, X_m$, and consider the two joint distribution of the 
random variables, $P^m$ and $Q^m$. When the random variables are independent, let $P_i$ and $Q_i$ be the corresponding marginal distribution of the 
random variable $X_i, i =1,\dots, m.$  In other words, we have, 
$P^m(x_1, x_2, \dots, x_m ) = \prod_{i =1}^m P_i(x_i)$ and
$Q^m(x_1, x_2, \dots, x_m ) = \prod_{i =1}^m Q_i(x_i).$
Then we must have,
\begin{equation}
D(P^m \| Q^m) = \sum_{i=1}^m D(P_i \| Q_i).
\end{equation}

A more general version, when the random variables are not independent,  is given by the chain-rule, described below for discrete
random variables.
\begin{lemma}\label{lem:chain}
Consider a set of  discrete random variables $X_1, \dots, X_m$, and consider the two joint distribution of the 
random variables, $P$ and $Q$. 
The chain-rule for divergence states that,
$$
D(P(x_1, \dots, x_m)\|Q(x_1, \dots, x_m)) = \sum_{i=1}^m D(P(x_i \mid x_1, \dots, x_{i-1})\|Q(x_i \mid x_1, \dots, x_{i-1})),
$$
where,
$$
D(P(x|y) \| Q(x|y)) = \sum_{y} P(Y=y)D(P(x|Y=y)\|Q(x|Y=y)).
$$
\end{lemma}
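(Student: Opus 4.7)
The proof plan is to reduce the joint divergence to a sum of conditional ones by expanding the definition and then applying the multiplicative chain rule for probability inside the logarithm. I will work in the discrete setting throughout, consistent with the statement. Using the shorthand $x^i = (x_1,\dots,x_i)$, the first step is to write
\[
D(P(x^m)\|Q(x^m)) = \sum_{x^m} P(x^m) \ln \frac{P(x^m)}{Q(x^m)},
\]
and then factor $P(x^m) = \prod_{i=1}^m P(x_i\mid x^{i-1})$ and similarly for $Q$. Taking the logarithm turns the product into a sum, so that $\ln \frac{P(x^m)}{Q(x^m)} = \sum_{i=1}^m \ln \frac{P(x_i \mid x^{i-1})}{Q(x_i \mid x^{i-1})}$, and swapping the two finite sums yields
\[
D(P(x^m)\|Q(x^m)) = \sum_{i=1}^m \sum_{x^m} P(x^m) \ln \frac{P(x_i\mid x^{i-1})}{Q(x_i\mid x^{i-1})}.
\]

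The second step is to marginalize each $i$-th term over the variables it does not depend on. Since the summand for fixed $i$ depends only on $x^i$, we can collapse the sum over $x_{i+1},\dots,x_m$ and use $\sum_{x_{i+1},\dots,x_m} P(x^m) = P(x^i)$. Splitting $P(x^i) = P(x^{i-1}) P(x_i \mid x^{i-1})$ then rearranges the $i$-th term into
\[
\sum_{x^{i-1}} P(x^{i-1}) \sum_{x_i} P(x_i\mid x^{i-1}) \ln \frac{P(x_i\mid x^{i-1})}{Q(x_i\mid x^{i-1})}.
\]
The inner sum over $x_i$ is by definition $D\bigl(P(x_i\mid X^{i-1}=x^{i-1})\,\|\,Q(x_i\mid X^{i-1}=x^{i-1})\bigr)$, and averaging this over $x^{i-1}$ with weight $P(x^{i-1})$ is exactly the conditional divergence $D(P(x_i\mid x^{i-1})\|Q(x_i\mid x^{i-1}))$ as defined in the lemma. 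Summing over $i$ gives the claimed identity.

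I do not anticipate any genuine obstacle: the argument is purely algebraic manipulation that uses only (i) the chain rule of probability, (ii) linearity of the logarithm over products, and (iii) Fubini/rearrangement of finite sums together with marginalization. The only minor care needed is the standard convention that $0\ln\frac{0}{q}=0$ and that terms with $Q(x^m)=0$ require $P(x^m)=0$ for $D(P\|Q)$ to be finite, which is inherited by each conditional term and thus does not affect the identity. If desired, one can alternatively give a one-line induction on $m$: the base case $m=1$ is a tautology, and the inductive step follows by applying the two-variable case $D(P(x,y)\|Q(x,y)) = D(P(x)\|Q(x)) + D(P(y\mid x)\|Q(y\mid x))$ with $x = (x_1,\dots,x_{m-1})$ and $y = x_m$, where the two-variable case is exactly what the computation above proves for $m=2$.
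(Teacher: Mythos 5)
Your proof is correct and is the standard derivation: factor the joint distributions via the probability chain rule, split the logarithm, interchange finite sums, and marginalize each term back to a conditional divergence. The paper does not actually supply a proof of this lemma — it is quoted as a textbook fact (Cover and Thomas is cited at the head of the section) — so there is nothing to compare against; your argument matches the standard textbook one.
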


\begin{definition}[Total Variation Distance]
For two probability distributions $P$ and $Q$ defined on a sample space $\cX$ and same sigma-algebra $\cF$,
the total variation distance between them is defined to be,
$$
\|P -Q\|_{TV} = \sup \{P(A) -Q(A): A \in \cF\}.
$$
In words, the distance between two distributions is 
their largest difference over any measurable set. For finite $\cX$ total variation distance is half of the $\ell_1$ distance
between pmfs.  
\end{definition}

The total variation distance and the divergence are related by the Pinsker's inequality.
\begin{lemma}[Pinsker's inequality] \label{lem:pinsker}
For any two probability measures $P$ and $Q$,
$$
\|P -Q\|^2_{TV} \le \frac12 D(P\|Q).
$$
\end{lemma}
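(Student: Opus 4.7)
The plan is to reduce the general inequality to the two-point (Bernoulli) case via the data processing inequality, and then verify the Bernoulli case by elementary calculus.

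First I would recall/establish the data processing inequality for divergence in the special form we need: for any measurable set $A \in \cF$, consider the two-point pushforward distributions $P_A = (P(A), 1-P(A))$ and $Q_A = (Q(A), 1-Q(A))$ on $\{0,1\}$ induced by the indicator map $x \mapsto \mathbf{1}[x \in A]$. Lumping outcomes can only decrease divergence, so
\[
D(P_A \| Q_A) \;\le\; D(P\|Q).
\]
This is a direct consequence of the log-sum inequality (or can be seen from Lemma~\ref{lem:chain} by conditioning on the indicator). So it will suffice to prove that for Bernoulli parameters $p,q\in[0,1]$,
\[
2(p-q)^2 \;\le\; D(p\|q),
\]
because then, taking $A$ to be the (or any approaching) set achieving the supremum in the definition of $\|P-Q\|_{TV}$, we obtain $\|P-Q\|_{TV}^2 = \sup_A (P(A)-Q(A))^2 \le \tfrac12 \sup_A D(P_A\|Q_A) \le \tfrac12 D(P\|Q)$.

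Next I would prove the Bernoulli inequality $D(p\|q) - 2(p-q)^2 \ge 0$. Fix $q \in (0,1)$ and set $g(p) := D(p\|q) - 2(p-q)^2 = p\ln\frac{p}{q} + (1-p)\ln\frac{1-p}{1-q} - 2(p-q)^2$. Differentiating,
\[
g'(p) \;=\; \ln\frac{p}{1-p} - \ln\frac{q}{1-q} - 4(p-q), \qquad g''(p) \;=\; \frac{1}{p(1-p)} - 4.
\]
Since $p(1-p) \le \tfrac14$ for $p\in[0,1]$, we have $g''(p) \ge 0$, so $g'$ is nondecreasing. Because $g'(q)=0$, this shows $g'(p) \le 0$ for $p \le q$ and $g'(p)\ge 0$ for $p\ge q$, so $g$ attains its minimum at $p=q$, where $g(q)=0$. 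Hence $g(p)\ge 0$ for all $p\in[0,1]$. The boundary cases where $q\in\{0,1\}$ or $p\in\{0,1\}$ either reduce $D(p\|q)$ to $+\infty$ (making the inequality trivial) or to $0$ with $p=q$.

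The main obstacle is honestly just the two-point verification above — everything else is a one-line reduction. I expect no real difficulty, but a subtle point worth handling carefully is that $\|P-Q\|_{TV}$ is defined as a supremum over measurable sets which may not be attained, so I would either argue by taking a sequence of sets $A_n$ with $P(A_n)-Q(A_n) \to \|P-Q\|_{TV}$ and passing to the limit in the Bernoulli inequality, or invoke the standard fact that on any $(\cX,\cF)$ the supremum is attained (e.g.\ by the set $\{dP/d(P+Q) > dQ/d(P+Q)\}$ via the Lebesgue decomposition). Either route is routine.
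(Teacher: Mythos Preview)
Your proof is correct and is in fact the standard textbook argument for Pinsker's inequality (reduce to the Bernoulli case via data processing, then verify the two-point inequality by showing $g(p)=D(p\|q)-2(p-q)^2$ is convex in $p$ with minimum at $p=q$). There is nothing to compare against, however: the paper does not prove this lemma at all. It is stated in the ``Information Theory Toolbox'' section as a standard result, with the general citation to Cover and Thomas~\cite{cover2012elements}, and is simply invoked as a black box in the lower-bound arguments. So your write-up supplies more than the paper does.
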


It is easy to see that, there cannot be a universal `reverse' Pinsker's inequality, i.e., an upper bound on the divergence by
the total variation distance (for example, the total variation distance is always less than 1, while the divergence can be
infinity). However, under various assumptions, such upper bounds have been proposed \cite{sason2015reverse,csiszar2006context}. For example we provide one such
inequality below.

\begin{lemma}[Reverse Pinsker's inequality\cite{sason2015reverse}]\label{lem:revp}
For any two probability measures on finite alphabet $\cX$, given by probability mass functions $p$ and $q$, we must have,
\begin{equation}
\|p-q\|^2_{TV} \ge \frac{\min_{x \in \cX}q(x)}{2}D(p\|q)
\end{equation} 
\end{lemma}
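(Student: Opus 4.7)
The natural approach is to pass through the chi-squared divergence $\chi^2(p\|q):=\sum_{x\in\cX}\frac{(p(x)-q(x))^2}{q(x)}$, which is sandwiched between divergence and total variation and is quadratic in $p-q$. My plan has three steps: bound $D$ by $\chi^2$, bound $\chi^2$ by $\sum(p-q)^2/q_{\min}$, and bound $\sum(p-q)^2$ by a small constant times $\|p-q\|_{TV}^2$.

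First I would establish $D(p\|q)\le \chi^2(p\|q)$. Applying the elementary inequality $\ln t\le t-1$ (valid for $t>0$) to $t=p(x)/q(x)$ and multiplying by $p(x)\ge 0$ gives $p(x)\ln\frac{p(x)}{q(x)}\le (p(x)-q(x))+\frac{(p(x)-q(x))^2}{q(x)}$. Summing over $x\in\cX$ and using $\sum_x(p(x)-q(x))=0$ (both are pmfs) yields $D(p\|q)\le \chi^2(p\|q)$. The second step is immediate from the hypothesis: $q(x)\ge q_{\min}:=\min_{x\in\cX}q(x)$ pointwise implies $\chi^2(p\|q)\le \frac{1}{q_{\min}}\sum_x(p(x)-q(x))^2$.

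The third and most delicate step is the bound $\sum_x(p(x)-q(x))^2\le 2\|p-q\|_{TV}^2$. Split $\cX$ into $A=\{x:p(x)>q(x)\}$ and $B=\{x:p(x)<q(x)\}$. Since $p$ and $q$ both sum to $1$, $\sum_{x\in A}(p(x)-q(x))=\sum_{x\in B}(q(x)-p(x))$, and this common value is exactly $\|p-q\|_{TV}$. Applying the elementary fact that $\sum_i y_i^2\le \bigl(\sum_i y_i\bigr)^2$ for nonnegative $y_i$ separately on $A$ and $B$ gives $\sum_x(p(x)-q(x))^2=\sum_{A}(p(x)-q(x))^2+\sum_{B}(q(x)-p(x))^2\le 2\|p-q\|_{TV}^2$. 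Combining the three steps produces $D(p\|q)\le \frac{2}{q_{\min}}\|p-q\|_{TV}^2$, which is the stated claim after rearrangement.

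I expect the constant in Step 3 to be the main obstacle. A cruder bound $\sum_x(p(x)-q(x))^2\le\bigl(\sum_x|p(x)-q(x)|\bigr)^2=4\|p-q\|_{TV}^2$ would only yield $\|p-q\|_{TV}^2\ge \frac{q_{\min}}{4}D(p\|q)$, losing a factor of two. Getting the sharp factor $q_{\min}/2$ really does need the $A/B$ split together with the observation that the positive and negative deviations each carry mass exactly $\|p-q\|_{TV}$; there is no slack to recover elsewhere, since $\ln t\le t-1$ is tight to second order at $t=1$ and the bound $\chi^2\le \sum(p-q)^2/q_{\min}$ is tight when $q$ is concentrated near $q_{\min}$.
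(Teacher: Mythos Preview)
Your proof is correct: each of the three steps is valid, including the ``elementary fact'' $\sum_i y_i^2 \le (\sum_i y_i)^2$ for nonnegative reals (which follows since the cross terms in the square are nonnegative), and the $A/B$ split does recover the sharp constant $q_{\min}/2$. The paper itself does not give a proof but simply cites Eq.~(28) of \cite{sason2015reverse}; your argument is a clean self-contained derivation via the chi-squared divergence, which is essentially the standard route to this inequality.
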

This inequality can be derived from Eq.(28) of \cite{sason2015reverse}.

A particular basic large-deviation inequality that we use for the upper bounds  is Sanov's theorem.
\begin{theorem}[Sanov's theorem] \label{thm:sanov}
Let $X_1, \dots, X_n$ are  iid random variables with a finite sample space $\cX$ and distribution $P$. Let $P^n$ denote their joint distribution.
Let $E$ be a set of probability distributions on $\cX$. The empirical distribution $\tilde{P}_n$ gives probability $\tilde{P}_n(\cA) = \frac{1}{n}\sum_{i=1}^n {\bf 1}_{X_i \in \cA}$ to any event $\cA$. Then,
$$
P^n(\{x_1,\dots, x_n\}: \tilde{P}_n \in E) \le (n+1)^{|\cX|} \exp(-n \min_{P^\ast\in E}D(P^\ast\|P)).
$$
\end{theorem}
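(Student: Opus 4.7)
The plan is to prove Sanov's theorem through the standard \emph{method of types}, since the statement concerns a finite sample space $\cX$ where the combinatorics of empirical distributions is tractable. First I would introduce the set $\cP_n$ of $n$-types on $\cX$, i.e., all probability mass functions $Q$ on $\cX$ whose values are integer multiples of $1/n$. A straightforward counting argument bounds $|\cP_n| \le (n+1)^{|\cX|}$, since each of the $|\cX|$ coordinates can take at most $n+1$ distinct rational values. This explains the polynomial prefactor in the theorem.

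Next, for each type $Q \in \cP_n$, I would analyze the type class $T(Q) = \{(x_1,\dots,x_n) \in \cX^n : \tilde{P}_n = Q\}$. The key observations are: (i) every sequence in $T(Q)$ has probability $P^n(x_1,\dots,x_n) = \prod_{x \in \cX} P(x)^{nQ(x)} = \exp\bigl(-n(H(Q) + D(Q\|P))\bigr)$, a calculation that follows by taking logarithms and matching with the definitions of entropy and divergence; and (ii) the multinomial cardinality bound gives $|T(Q)| \le \exp(nH(Q))$. Multiplying these yields the crucial single-type estimate
\[
P^n(T(Q)) \;\le\; \exp(-nD(Q\|P)).
\]

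Finally, I would assemble the union bound. Writing $P^n(\tilde{P}_n \in E) = \sum_{Q \in E \cap \cP_n} P^n(T(Q))$ and applying the per-type estimate gives
\[
P^n(\tilde{P}_n \in E) \;\le\; \sum_{Q \in E \cap \cP_n} \exp(-nD(Q\|P)) \;\le\; |\cP_n| \cdot \exp\!\Bigl(-n \min_{Q \in E \cap \cP_n} D(Q\|P)\Bigr),
\]
and then replacing the minimum over types with the (no-larger) minimum over all of $E$ and inserting the type-count bound $(n+1)^{|\cX|}$ yields the claimed inequality.

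I expect the main subtlety to be the step in which we replace $\min_{Q \in E \cap \cP_n} D(Q\|P)$ by $\min_{P^* \in E} D(P^*\|P)$: the latter is smaller (or equal), so it only weakens the bound, which is exactly what we want for an upper bound; however, in a more refined statement one would need $E$ to be the closure of its interior to ensure that types approximate the minimizer. Since the theorem is stated as an inequality (not an asymptotic equality), this monotonicity suffices and no density argument is needed. Everything else reduces to routine calculation with the log-likelihood and multinomial coefficients.
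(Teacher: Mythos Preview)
Your proof is correct and is precisely the standard method-of-types argument from Cover and Thomas. The paper, however, does not supply a proof of Sanov's theorem at all: it is stated in the ``Information Theory Toolbox'' section as a known result, with an explicit pointer to a standard textbook, and is used as a black box in the upper-bound arguments. So there is no discrepancy to analyze---you have simply filled in a proof the paper chose to cite rather than reproduce.
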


A continuous version of Sanov's theorem is also possible but we omit here for clarity.

Hoeffding's  inequality for large deviation of sums  of bounded independent random variables is well known \cite[Thm. 2]{hoeffding1963probability}.
\begin{lemma}[Hoeffding]\label{lem:hoef1}
If $X_1, \dots, X_n$ are  independent random variables   and $a_i\le X_i\le b_i$ for all $i\in [n].$ Then
$$
\Pr(|\frac1n\sum_{i=1}^n (X_i - \avg X_i) | \ge t) \le 2 \exp(-\frac{2n^2t^2}{\sum_{i=1}^n (b_i-a_i)^2}). 
$$
\end{lemma}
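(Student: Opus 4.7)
The plan is to follow the classical Chernoff--Cram\'er route: convert to a tail bound on exponential moments, control each factor via a sub-Gaussian estimate for bounded random variables (Hoeffding's lemma), and then optimize the free parameter. The statement in the lemma is symmetric, so I would establish one tail and then union-bound with the same argument applied to $-X_i$.

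First, center the variables by setting $Y_i = X_i - \expect X_i$, so $\expect Y_i = 0$ and $Y_i$ lies in an interval of length $b_i - a_i$. Since the $X_i$ are independent, so are the $Y_i$. For any $s>0$, Markov's inequality applied to $e^{s\sum_i Y_i}$ gives
$$
\Pr\Bigl(\tfrac{1}{n}\sum_{i=1}^n Y_i \ge t\Bigr)
\;\le\; e^{-snt}\,\expect\Bigl[e^{s\sum_i Y_i}\Bigr]
\;=\; e^{-snt}\prod_{i=1}^n \expect[e^{sY_i}],
$$
where the product form uses independence. It remains to bound $\expect[e^{sY_i}]$.

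The main analytic ingredient is Hoeffding's lemma: for a mean-zero random variable $Y$ supported on $[\alpha,\beta]$ one has $\expect[e^{sY}] \le \exp\!\bigl(s^2(\beta-\alpha)^2/8\bigr)$. I would prove this by convexity of $y \mapsto e^{sy}$, writing $e^{sy}\le \tfrac{\beta-y}{\beta-\alpha}e^{s\alpha}+\tfrac{y-\alpha}{\beta-\alpha}e^{s\beta}$ for $y\in[\alpha,\beta]$, taking expectations, and then analyzing $\phi(s):=\log\!\bigl(\tfrac{\beta}{\beta-\alpha}e^{s\alpha}-\tfrac{\alpha}{\beta-\alpha}e^{s\beta}\bigr)$. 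A direct calculation yields $\phi(0)=\phi'(0)=0$ and $\phi''(s)\le (\beta-\alpha)^2/4$, so a second-order Taylor expansion gives the claim. Applied with $(\alpha,\beta)=(a_i-\expect X_i,\, b_i-\expect X_i)$, this produces $\expect[e^{sY_i}] \le \exp\!\bigl(s^2(b_i-a_i)^2/8\bigr)$.

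Substituting, one obtains $\Pr\!\bigl(\tfrac{1}{n}\sum_i Y_i \ge t\bigr) \le \exp\!\bigl(-snt + \tfrac{s^2}{8}\sum_i (b_i-a_i)^2\bigr)$, and optimizing over $s>0$ with the choice $s = 4nt/\sum_i (b_i-a_i)^2$ gives the one-sided bound $\exp\!\bigl(-2n^2t^2/\sum_i(b_i-a_i)^2\bigr)$. The same argument applied to $-Y_i$ yields the matching lower-tail bound, and a union bound supplies the factor of $2$ in the stated inequality. I expect the main obstacle to be establishing the sharp constant $1/8$ in Hoeffding's lemma: the convex upper bound step is routine, but the bound $\phi''(s)\le (\beta-\alpha)^2/4$ requires recognizing that $\phi''(s)$ equals the variance of a Bernoulli-type random variable whose range is contained in $[\alpha,\beta]$, which is what forces the optimal constant and in turn the factor $2$ (rather than a weaker factor) in the exponent.
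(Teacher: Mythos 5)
Your proof is correct. The paper does not actually prove this lemma; it cites it directly as Theorem 2 of Hoeffding (1963), which is precisely the classical Chernoff--moment-generating-function argument you reproduce. Your derivation is sound: the Markov/Chernoff step, the sub-Gaussian bound $\expect[e^{sY}]\le \exp(s^2(\beta-\alpha)^2/8)$ via convexity and the second-order Taylor bound on $\phi$, the optimization $s=4nt/\sum_i(b_i-a_i)^2$ recovering the exponent $-2n^2t^2/\sum_i(b_i-a_i)^2$, and the union bound for the two-sided factor of $2$ all check out.
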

This inequality can be used when the random variables are independently sampled with replacement from a finite sample space.  
However due to a result in the same paper  \cite[Thm. 4]{hoeffding1963probability}, this inequality also holds when the random variables are sampled
without replacement from a finite population.
\begin{lemma}[Hoeffding]\label{lem:hoef2}
If $X_1, \dots, X_n$ are  random variables  sampled without replacement from a finite set $\cX \subset \reals$, and $a\le x\le b$ for all $x\in \cX.$ Then
$$
\Pr(|\frac1n\sum_{i=1}^n (X_i - \avg X_i) | \ge t) \le 2 \exp(-\frac{2nt^2}{(b-a)^2}). 
$$
\end{lemma}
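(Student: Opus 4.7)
The plan is to follow Hoeffding's classical reduction from sampling without replacement to sampling with replacement. First I would set up the standard Chernoff framework: for $s > 0$,
\[
\Pr\Bigl(\tfrac{1}{n}\sum_{i=1}^n (X_i - \avg X_i) \ge t\Bigr) \le e^{-snt}\, \avg\exp\Bigl(s\sum_{i=1}^n (X_i - \avg X_i)\Bigr),
\]
which reduces the problem to bounding the moment-generating function of the centered without-replacement sum.

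The key step, and the main obstacle, is Hoeffding's convex-domination result (Theorem~4 of his 1963 paper): if $Y_1, \ldots, Y_n$ are drawn iid uniformly with replacement from the same finite population $\cX$, then for every continuous convex $\phi$,
\[
\avg\phi\Bigl(\sum_{i=1}^n X_i\Bigr) \le \avg\phi\Bigl(\sum_{i=1}^n Y_i\Bigr).
\]
I would establish this via Hoeffding's coupling argument: condition on the multiset of values produced by $(Y_1,\dots,Y_n)$, and observe that within this conditioning the distribution over orderings is exchangeable, so that the without-replacement sum can be represented as a conditional expectation of the with-replacement sum with respect to an appropriate $\sigma$-algebra generated by the sampling mechanism. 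Applying Jensen's inequality to $\phi$ yields the pointwise domination, and taking outer expectation restores the claim. This is the only delicate step; everything afterwards is routine.

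With the domination in hand, specialize to $\phi(u) = e^{su}$, noting that centering by the common mean $\avg X_i = \avg Y_i$ is a deterministic shift, to obtain
\[
\avg\exp\Bigl(s\sum_{i=1}^n (X_i - \avg X_i)\Bigr) \le \prod_{i=1}^n \avg\exp\bigl(s(Y_i - \avg Y_i)\bigr) \le \exp\Bigl(\tfrac{n s^2 (b-a)^2}{8}\Bigr),
\]
where the last inequality is the standard Hoeffding moment-generating-function bound for iid $[a,b]$-valued variables, i.e., the same ingredient underlying Lemma~\ref{lem:hoef1}. Optimizing with $s = 4t/(b-a)^2$ gives the one-sided tail $\exp(-2nt^2/(b-a)^2)$; applying the same argument to $-X_i$ together with a union bound produces the two-sided factor of $2$ in the statement.
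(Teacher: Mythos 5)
The paper does not actually prove this lemma; it is cited directly from Hoeffding's 1963 paper (Theorem~4 there), exactly as you do. Your reduction---Chernoff bound, convex domination (Hoeffding's Theorem~4), specialize $\phi(u)=e^{su}$, factor by independence, the standard sub-Gaussian MGF bound, optimize $s$, and a union bound for the two-sided version---is the right skeleton and is what the citation implicitly invokes.

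However, your sketch of the convex-domination step has the conditioning backwards, and as written it would collapse. You propose to ``condition on the multiset of values produced by $(Y_1,\dots,Y_n)$'' and apply Jensen. But once the multiset of $Y$-values is given, $\sum_{i=1}^n Y_i$ is a deterministic function of the conditioning, so $\avg\bigl[\sum Y_i \mid \text{multiset}\bigr] = \sum Y_i$ and Jensen is vacuous; nothing about the without-replacement sum appears. In Hoeffding's actual coupling, the conditioning runs the other way: one constructs $(X_1,\dots,X_n)$ and $(Y_1,\dots,Y_n)$ on a common space so that $\sum_{i=1}^n X_i = \avg\bigl[\sum_{i=1}^n Y_i \mid \sigma(X_1,\dots,X_n)\bigr]$. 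Concretely, start from the without-replacement sample $X_1,\dots,X_n$, draw (independently of the $X$'s) the number $D$ of distinct indices and multiplicity pattern of a size-$n$ with-replacement draw, pick a uniform $D$-subset $J\subseteq[n]$, and form $(Y_1,\dots,Y_n)$ as the values $\{X_j : j\in J\}$ with those multiplicities in random order. One checks this produces a genuine with-replacement sample, and by exchangeability $\avg\bigl[\sum_i Y_i \mid X_1,\dots,X_n\bigr] = \sum_{j=1}^n X_j$, after which Jensen gives the domination. If you state the coupling in this direction (or simply cite Hoeffding's Theorem~4 as the paper does), the remainder of your argument goes through exactly as you wrote it.
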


\section{\cc~with Perfect Oracle}
\label{sec:perfect}
In this section, we consider the clustering problem using crowdsourcing when crowd always returns the correct answers, and there is side information.
\subsection{Lower Bound}\label{sec:perfectlb}

Recall that  there are $k$ clusters in the $n$-vertex graph. That is $\cG(V,E)$ is such that, $V = \sqcup_{i=1}^k V_i$ and
$E = \{(i,j) : i, j \in V_\ell \text{ for some } \ell\}$. In other words, $\cG$ is a union of at most $k$ disjoint cliques. 
%Let $A \equiv [a_{i,j}]$ be the adjacency matrix of $\cG$. 
Every entry of the side-information matrix $W$ is generated independently as described in the introduction. We now prove Theorem \ref{thm:lb-main}.

\begin{proof}[Proof of Theorem \ref{thm:lb-main}]
We are going to construct an input that any randomized algorithm will be unable to correctly identify with positive probability. 

Suppose, 
$
a = \Big\lfloor \frac{1}{\Delta(f_+,f_-)}\Big\rfloor.
$
Consider the situation when we are already given a complete cluster $V_k$ with $n - (k-1)a$ elements,  remaining $(k-1)$
clusters each has 1 element, and the rest $(a-1)(k-1)$
 elements are evenly distributed 
(but yet to be assigned) to the $k-1$ clusters. This means each of the smaller clusters has size $a$ each.
Note that, we assumed the knowledge of  the number of clusters $k$.

The side information matrix $W= (w_{i,j})$ is provided. Each $w_{i,j}$ are independent random variables.

Now assume the scenario
when we use an algorithm {\rm ALG} to assigns a vertex to  one of the $k-1$ clusters, $V_u, u =1, \dots, k-1$.
Note that, for any vertex, $l$, the side informations $w_{i,j}$ where $i \ne l$ and $j\ne l$, do not help in assigning $l$ to a cluster (since in that case $w_{i,j}$ is independent of $l$).
Therefore, given a vertex $l$,  {\rm ALG} takes as input the random variables  $w_{i,l}$s where $i \in \sqcup_t V_t$, and
  makes some queries involving $l$ and  outputs a cluster index, which is an assignment for $l$. 
Based on the observations $w_{i,l}$s, the task of algorithm {\rm ALG} is thus a multi-hypothesis testing among $k-1$ hypotheses.
Let $H_u,u =1, \dots k-1$ denote the $k-1$ different hypotheses  $H_u : l \in V_u$. And let  $P_u, u =1, \dots k-1$ denote the joint probability distributions  of the random variables $w_{i,j}$s when $l \in V_u$. In short, for any event $\cA$,
$
P_u(\cA) = \Pr(\cA | H_u) 
$.
Going forward, the subscript of probabilities or expectations will denote the appropriate conditional distribution.

For this hypothesis testing problem, let
$
\avg\{\text{Number of queries made by {\rm ALG}}\} =T.
$
Then, there exist $t \in \{1,\dots, k-1\}$ such that,
$
\avg_t\{\text{Number of queries made by {\rm ALG}}\} \le T.
$
Note that,
\begin{align*}
\sum_{v=1}^{k-1} P_t\{\text{ a query made by {\rm ALG} involving cluster } V_v\} \le \avg_t\{\text{Number of queries made by {\rm ALG}}\} \le T.
\end{align*}

Consider the set
\begin{align*}
J' \equiv \{v\in \{1,\dots,k-1\}:  P_t\{\text{ a query made by {\rm ALG} involving cluster } V_v\} <\frac1{10}\}.
\end{align*}

We must have, 
$
(k-1-|J'|)\cdot \frac1{10} \le T,
$
which implies,
$
|J'| \ge k-1 - 10T.
$

%Without  loss of generality, then for vertex $j$ that has to be assigned to a cluster,

%and we have not made any query involving $j$. 
%Further assumed {\rm ALG} has  made only at most $\frac{k-1}{2}$ queries involving $j$.
%Let $J_1$ be the 

%Then, ${\rm ALG}$ takes as input the random variables $w_{i,j}$'s where $i \in \sqcup_u A_u$, and outputs an assignment.

Note that, to output a cluster  without using the side information, {\rm ALG}  has to either make a query to the actual cluster the element is from, or query at least $k-2$ times. In any other case, {\rm ALG} must use the side information (in addition to using queries) to output a cluster.
 Let $\cE^u$ denote the event that {\rm ALG} output cluster  $V_u$ by using the side information. 

%Now, {\rm ALG} can also assign an element to a cluster without making any queries to the actual cluster the element is from,
 %or by using the
%side information.

Let $J'' \equiv \{u\in \{1,\dots,k-1\}: P_t(\cE^u) \le \frac{10}{k-1} \}.$ 
Since, $\sum_{u=1}^{k-1} P_t(\cE^u) \le 1,$ we must have, 
$$(k-1- |J''|) \cdot \frac{10}{k-1} \le 1,
\text{ or } |J''| \ge \frac{9(k-1)}{10}.$$
$$
\text{We have, }~~~ |J' \cap J''| \ge k-1 - 10T + \frac{9(k-1)}{10} - (k-1) = \frac{9(k-1)}{10} -10T.
$$

Now consider two cases.

\vspace{0.2in}

\noindent{\em Case 1: $T \ge \frac{9(k-1)}{100}$}.
In this case, average number of queries made by {\rm ALG} to assign one vertex to a cluster is at least $ \frac{9(k-1)}{10}$. Since there
are $(k-1)(a-1)$ vertices that needs to be assigned to clusters,  the expected total number of queries performed by {\rm ALG} is 
$\frac{9(k-1)^2(a-1)}{10}$.

\vspace{0.1in}

\noindent{\em Case 2: $T < \frac{9(k-1)}{100}$ }.
In this case, $J' \cap J''$ is nonempty. Assume that we  need to assign  the vertex $j\in V_\ell$ for some $\ell \in J' \cap J''$ to a cluster ($H_\ell$ is the true hypothesis).
We now consider the following two events.
\begin{align*}
\cE_1 &=  \Big\{\text{ a query made by {\rm ALG} involving cluster } V_\ell \Big\}\\
\cE_2 & =  \Big\{ k-2 \text{ or more queries were made by {\rm ALG}} \Big\}.
\end{align*}
Note that, if the algorithm {\rm ALG} can correctly assign $j$ to a cluster without using the side information then either of $\cE_1$
or $\cE_2$ must have to happen. 
Recall, $\cE^\ell$ denote the event that {\rm ALG} output cluster  $V_\ell$  using the side information.
Now consider the event
$
\cE \equiv \cE^\ell \bigcup \cE_1 \bigcup \cE_2.
$
The probability of correct assignment is at most $P_\ell(\cE).$
We  have,
\begin{align*}
P_\ell(\cE) &\le P_t(\cE) + |P_\ell(\cE) - P_t(\cE)|
 \le P_t(\cE) + \|P_\ell - P_t\|_{TV}
 \le  P_t(\cE) + \sqrt{\frac12 D(P_\ell \|P_t)},
 \end{align*}
 where we first used the definition of the total variation distance and  in the last step we have used Pinsker's inequality (Lemma \ref{lem:pinsker}).
Now we bound the divergence $D(P_\ell \|P_t)$. Recall that $P_\ell$ and $P_t$ are the joint distributions of the 
independent random variables $w_{i,j}, i\in \cup_u V_u$. Now, using lemma \ref{lem:chain}, and noting that the 
divergence between identical random variables are $0$, we obtain
$$
D(P_\ell \|P_1) \le a D(f_-\|f_+)+ a D(f_+\|f_-) = a \Delta \le 1.
$$
 This is true because the only times when $w_{i,j}$ differs under $P_t$ and under $P_\ell$ is when $i \in V_t$ or $i \in V_\ell.$
As a result we have,
$
P_\ell(\cE)  \le P_t(\cE) + \sqrt{\frac12}.
$

Now, using Markov inequality $P_t(\cE_2) \le \frac{T}{k-2} \le \frac{9(k-1)}{100(k-2)} \le \frac{9}{100} + \frac{9}{100(k-2)}.$ Therefore, 
\begin{align*}
P_t(\cE)& \le P_t(\cE^\ell) + P_t(\cE_1) +P_t(\cE_2) \le \frac{10}{k-1} +  \frac1{10}+  \frac{9}{100} + \frac{9}{100(k-2)}.
\end{align*}

For large enough $k$, we overall have $
P_\ell(\cE)  \le \frac{19}{100} + \sqrt{\frac12} < \frac{9}{10}.
$
This means ${\rm ALG}$ fails to assign $j$ to the correct cluster %, or make a query involving $j$ and the correct cluster 
with probability at least $\frac{1}{10}$. % for an $j \in \sqcup_{u\in J'\cap J''} V_u$. 

\vspace{0.1in}
Considering the above two cases, we can say that any algorithm either makes on average $\frac{9(k-1)^2(a-1)}{10}$ queries, or
makes an error with probability at least $\frac{1}{10}$.
\end{proof}

Note that, in this proof we have not in particular tried to optimize the constants. Corollary \ref{cor:lb-main} follows by noting that to recover the clusters exactly, the query complexity has to be at least $(n-k)+{{k}\choose {2}}$. If the number of queries issued is at most $(n-k)+{{k}\choose {2}}-1$, then either there exists a vertex $v$ in a non-singleton cluster which has not been queried to any other member of that same cluster, or there exist two clusters such that no inter-cluster edge across them have been queried.

\subsection{Upper Bound}\label{sec:perfectub}
 We do not know $k$, $f_+$, $f_-$, $\mu_+$, or $\mu_-$, and our goal, in this section, is to design an algorithm with optimum query complexity for exact reconstruction of the clusters with probability $1$.
We are provided with the side information matrix $W = (w_{i,j})$ as an input. Let $\theta_{gap} = \mu_+ - \mu_-$.

The algorithm uses a subroutine called {\sf  Membership} that takes as input a vertex $v$ and a subset of vertices $\cC\subseteq V.$
At this point, let the {\em membership} of a vertex $v$ in cluster $\cC$ is defined as follows:
$
\average(v, \cC) = \frac{\sum_{u \in \calC} w_{v,u}}{|\calC|},
$
and we use {\sf  Membership}($v, \cC$) = $\average(v, \cC)$.

The pseudocode of the algorithm is given in Algorithm \ref{algo:cc-exact}. The algorithm works as follows. Let $\calC_1, \calC_2,...,\calC_l$ be the current clusters in nonincreasing order of size.  
We find the minimum index $j \in [1,l]$ such that there exists a vertex $v$ not yet clustered, with the highest average membership to $\calC_j$, that is {\sf  Membership}($v, \cC_j$)$ \geq ${\sf  Membership}($v, \cC_{j'}$), $\forall j' \neq j$, and $j$ is the smallest index for which such a $v$ exists. We first check if $v \in \calC_j$ by querying $v$ with any current member of $\calC_j$. If not, then we group the clusters $\calC_1,\calC_2,..,\calC_{j-1}$ in at most $\lceil \log{n} \rceil$ groups such that clusters in group $i$ has size in the range $[\frac{|\calC_1|}{2^{i-1}}, \frac{|\calC_1|}{2^i})$. For each group, we pick the cluster which has the highest average membership with respect to $v$, and check by querying whether $v$ belongs to that cluster. Even after this, if the membership of $v$ is not resolved, then we query $v$ with one member of each of the clusters that we have not checked with previously. If $v$ is still not clustered, then we create a new singleton cluster with $v$ as its sole member.

We now give a proof of the Las Vegas part of Theorem \ref{thm:mu} here using Algorithm \ref{algo:cc-exact}, and defer the more formal discussions on the Monte Carlo part to the next section. 

\begin{proof}[Proof of Theorem \ref{thm:mu}, Las Vegas Algorithm] First, The algorithm never includes a vertex in a cluster without querying it with at least one member of that cluster.  Therefore, the clusters constructed by our algorithm are always proper subsets of the original clusters.  Moreover, the algorithm never creates a new cluster with a vertex $v$ before first querying it with all the existing clusters. Hence, it is not possible that two clusters produced by our algorithm can be merged.

Let $\calC_1,\calC_2,...,\calC_l$ be the current non-empty clusters that are formed by Algorithm \ref{algo:cc-exact}, for some $l \leq k$. Note that Algorithm \ref{algo:cc-exact} does not know $k$. Let without loss of generality $|\calC_1| \geq |\calC_2| \geq ...\geq |\calC_l|$. Let there exists an index $i \leq l$ such that $|\calC_1| \geq |\calC_2| \ge \dots\geq |\calC_{i}| \geq M$,  where $M = \frac{6\log n}{\theta_{gap}^2}$. Of course, the algorithm does not know either $i$ or $M$. If even $|\calC_1| < M$, then $i=0$. Suppose $j'$ is the minimum index such that there exists a vertex $v$ with highest average membership in $\calC_{j'}$. There are few cases to consider based on $j' \leq i$, or $j' > i$ and the cluster  that truly contains $v$.

{\it Case 1. $v$ truly belongs to $\calC_{j'}$.} In that case, we just make one query between $v$ and an existing member of $\calC_{j'}$ and the first query is successful.

{\it Case 2. $j' \leq i$ and $v$ belongs to $\calC_j, j \neq j'$ for some $j \in \{1,\dots, i\}$.} Let $\average(v,\calC_{j})$ and $\average(v,\calC_{j'})$ be the average membership of $v$ to $\calC_j$, and $\calC_{j'}$ respectively. Then we have $\average(v,\calC_{j'}) \geq \average(v,\calC_{j})$, that is {\sf  Membership}($v, \cC_j'$)$ \geq ${\sf  Membership}($v, \cC_{j}$). This is only possible if either $\average(v,\calC_{j'}) \geq \mu_r+\frac{\theta_{gap}}{2}$ or $\average(v,\calC_{j}) \leq \mu_g-\frac{\theta_{gap}}{2}$. Since both $\calC_j$ and $\calC_j'$ have at least $M$ current members, then using the Chernoff-Hoeffding's bound (Lemma \ref{lem:hoef1}) followed by union bound this happens with probability at most $\frac{2}{n^3}$. Therefore, the expected number of queries involving $v$ before its membership gets determined is $\leq 1+\frac{2}{n^3} k < 2$.

{\it Case 4. $v$ belongs to $\calC_j, j \neq j'$ for some $j > i$.} In this case the algorithm may make $k$ queries involving $v$ before its membership gets determined.

{\it Case 5. $j' > i$, and $v$ belongs to $\calC_j$ for some $j \leq i$.} 
In this case, there exists no $v$ with its highest membership in $\calC_1,\calC_2,...,\calC_i$.

Suppose $\calC_1,\calC_2,...,\calC_j'$ are contained in groups $H_1,H_2,...,H_s$ where $s \leq \lceil \log{n}\rceil$. Let $\calC_j \in H_t$, $t \in [1,s]$. Therefore, $|\calC_j| \in [\frac{|\calC_1|}{2^{t-1}}, \frac{|\calC_1|}{2^{t}}]$. If $|\calC_j| \geq 2M$, then all the clusters in group $H_t$ have size at least $M$. Now with probability at least $1-\frac{2}{n^2}$, $\average(v,\calC_j) \geq \average(v,\calC_{j''})$, that is {\sf  Membership}($v, \cC_j$)$ \geq ${\sf  Membership}($v, \cC_{j''}$) for every cluster $\calC_{j''} \in H_t$. In that case, the membership of $v$ is determined within at most $\lceil \log{n}\rceil$ queries. Otherwise, with probability at most $\frac{2}{n^2}$, there may be $k$ queries to determine the membership of $v$.

Therefore, once a cluster has grown to size $2M$, the number of queries to resolve the membership of any vertex in those clusters is at most $\lceil \log{n} \rceil$ with probability at least $1-\frac{2}{n}$. Hence, for at most $2kM$ elements, the number of queries made to resolve their membership can be $k$. Thus the expected number of queries made by Algorithm \ref{algo:cc-exact} is 
$O(n\log{n}+Mk^2)=O(n\log{n}+\frac{k^2\log{n}}{(\mu_+-\mu_{-})^2})$. Moreover, if we knew $\mu_{+}$ and $\mu_{-}$, we can calculate $M$, and thus whenever a clusters grows to size $M$, remaining of its members can be included in that cluster without making any error with high probability. This leads to Theorem \ref{thm:mu}.
\end{proof}

We can strengthen this algorithm by changing the subroutine {\sf Membership} in the following way. Assume that $f_+, f_-$ are discrete distributions over $q$ points $a_1, a_2, \dots, a_q$; that is $w_{i,j}$ takes value in the set $\{a_1, a_2, \dots, a_q\}\subset [0,1].$

%consider probability mass functions $f_+$ and $f_-$ over $q$ points.
%Also, .

The subroutine {\sf Membership} takes $v \in V$ and $\cC \subseteq V\setminus\{v\}$ as inputs.
Compute the `inter' distribution  $p_{v,\cC}$ for $i =1,\dots,  q,$
 $
 p_{v,\cC}(i) = \frac{1}{|\cC|} \cdot |\{u: w_{u,v} = a_i \}|.
 $
 
 Also compute the `intra' distribution $p_{\cC}$  for $i =1,\dots,  q,$
 $
 p_{\cC}(i) = \frac{1}{|\cC|(|\cC|-1)} \cdot |\{(u,v): u \ne v, w_{u,v} = a_i \}|.
 $
Then define {\sf Membership}($v, \cC$) = $- \| p_{v,\cC} -  p_{\cC}\|_{TV}.$ Note that, since the membership is 
always negative, a higher membership implies that the `inter' and `intra' distributions are closer in terms
of total variation distance. 
With this modification in the subroutine we can prove what is claimed in Theorem \ref{thm:div-new}.
% \begin{theorem}
%There exists a Las Vegas algorithm for \cc~with expected query complexity $O(n\log{n}+\frac{k^2\log{n}}{\Delta(f_+, f_-)})$, where $k$, $f_+$, $f_-$ are all unknown to the algorithm but  $\min_i f_+(i), \min_i f_-(i) \ge \epsilon$ for a constant $\epsilon$.
%\end{theorem}

The analysis for this case proceeds exactly as above. However, to compare memberships we use Lemma \ref{lem:unknown} below.
Indeed, Lemma \ref{lem:unknown} can be used in the cases 2 and 5 in lieu of Chernoff-Hoeffding bounds to obtain the exact same result.
\begin{lemma}\label{lem:unknown}
Suppose, $\cC, \cC' \subseteq V$, $\cC \cap \cC' = \emptyset$ and $|\cC| \ge M, |\cC'| \ge M =\frac{16 \log n}{\epsilon \Delta(f_+,f_-)}$, with where $\min_i f_+(i), \min_i f_-(i) \ge \epsilon$ for a constant $\epsilon$. Then, 
$$
\Pr\Big({\sf Membership}(v, \cC') \ge {\sf Membership}(v, \cC) \mid v \in \cC\Big) \le \frac{4}{n^3}.
$$
\end{lemma}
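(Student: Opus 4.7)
The plan is to reduce the bad event to four concentration failures, one for each of the empirical distributions $p_{v,\cC}, p_\cC, p_{v,\cC'}, p_{\cC'}$. Since $v\in\cC$, $\cC\cap\cC'=\emptyset$, and the algorithm's sets are (partial) subsets of true clusters, the sampling structure is clean: $p_{v,\cC}$ is the empirical pmf of $|\cC|$ independent draws from $f_+$; $p_\cC$ is the empirical pmf of $\binom{|\cC|}{2}$ independent draws from $f_+$ (one $w_{u,u'}$ per unordered pair in $\cC$, counted twice in the ordered-pair normalization); $p_{v,\cC'}$ is based on $|\cC'|$ independent draws from $f_-$ (since $v$ and the members of $\cC'$ lie in different true clusters); and $p_{\cC'}$ is based on $\binom{|\cC'|}{2}$ independent draws from $f_+$.

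Set $\eta = \|f_+-f_-\|_{TV}/5$ and let $\cG$ be the intersection of the four events $\{\|p_{v,\cC}-f_+\|_{TV}\le\eta\}$, $\{\|p_\cC-f_+\|_{TV}\le\eta\}$, $\{\|p_{v,\cC'}-f_-\|_{TV}\le\eta\}$, $\{\|p_{\cC'}-f_+\|_{TV}\le\eta\}$. Two applications of the triangle inequality give, on $\cG$, both $\|p_{v,\cC}-p_\cC\|_{TV}\le 2\eta$ and $\|p_{v,\cC'}-p_{\cC'}\|_{TV}\ge \|f_+-f_-\|_{TV}-2\eta = 3\eta$, whence ${\sf Membership}(v,\cC) = -\|p_{v,\cC}-p_\cC\|_{TV} > -\|p_{v,\cC'}-p_{\cC'}\|_{TV} = {\sf Membership}(v,\cC')$. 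Thus the bad event sits inside $\cG^c$, and it suffices to bound $\Pr(\cG^c)$.

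To bound $\Pr(\cG^c)$, I would apply Sanov's theorem (Theorem \ref{thm:sanov}) to each of the four events, taking $E=\{P^*:\|P^*-P\|_{TV}>\eta\}$; Pinsker's inequality (Lemma \ref{lem:pinsker}) then ensures $\min_{P^*\in E}D(P^*\|P)\ge 2\eta^2$, so the empirical pmf of $m$ iid samples deviates in TV by more than $\eta$ with probability at most $(m+1)^q\exp(-2m\eta^2)$. The remaining task is to lower bound $\eta^2$ by something comparable to $\Delta(f_+,f_-)$, and this is where the assumption $\min_i f_+(i),\min_i f_-(i)\ge\epsilon$ is used through the Reverse Pinsker inequality (Lemma \ref{lem:revp}): applying it once with $f_+$ and once with $f_-$ as the reference distribution and adding yields $\|f_+-f_-\|_{TV}^2\ge \tfrac{\epsilon}{4}\Delta(f_+,f_-)$, so $\eta^2 = \Omega(\epsilon\,\Delta(f_+,f_-))$. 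Combined with $m\ge M = \Theta\!\left(\log n / (\epsilon\,\Delta(f_+,f_-))\right)$, the exponent $2m\eta^2$ exceeds $3\log n$ (after the constant $16$ is enlarged enough to dominate the polynomial prefactor $(m+1)^q$ coming from the finite alphabet of size $q$), so each of the four probabilities is at most $1/n^3$, and a union bound gives $\Pr(\cG^c)\le 4/n^3$.

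The main obstacle is this last calibration. Pinsker provides concentration in TV whenever one has control in divergence, but the membership quantity is itself defined in TV, so a converse passing back from TV to divergence is indispensable for the final substitution $\|f_+-f_-\|_{TV}^2\asymp \Delta(f_+,f_-)$. Without the support lower bound $\epsilon$ no such converse holds, and the factor $\epsilon$ in $M$ is precisely the price paid for the TV-based definition of {\sf Membership}; that definition is itself forced, as the paper notes, by the need for a triangle inequality, which divergence does not satisfy.
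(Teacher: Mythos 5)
Your proof is correct in essence and follows the same chain of tools as the paper's: Sanov's theorem bounds the TV-deviation of each of the four empirical pmfs ($p_{v,\cC}, p_\cC, p_{v,\cC'}, p_{\cC'}$) from its generating distribution, Pinsker converts the TV-ball condition into a divergence exponent, and reverse Pinsker converts $\|f_+-f_-\|_{TV}^2$ back into $\epsilon\,\Delta(f_+,f_-)$. Your organization is slightly cleaner than the paper's (you package everything into a single good event $\cG$ and run a union bound, whereas the paper first splits the bad event into two cases via a pigeonhole on the $\beta/2$ threshold, then decomposes each by triangle inequality), but the decomposition into four Sanov applications is the same. One small caveat you already flag yourself: with $\eta=\|f_+-f_-\|_{TV}/5$ the exponent $2M\eta^2$ comes out to roughly $0.3\log n$ for the stated $M$, so the constant $16$ would indeed need enlarging; but the paper's own constant accounting (which informally uses sample size $M$ even for $p_\cC,p_{\cC'}$, whose true sample counts are $\binom{M}{2}$, and drops a factor of two at the reverse-Pinsker step) is similarly loose, so this is not a substantive gap.
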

\begin{proof}
Let $\beta = \frac{\|f_+-f_-\|_{TV}}{2}$. If ${\sf Membership}(v, \cC') \ge {\sf Membership}(v, \cC)$ then we must have,
$\|p_{v,\cC'} - p_{\cC'}\|_{TV} \le \|p_{v,\cC} - p_{\cC}\|_{TV} .$ This means, either
$
\|p_{v,\cC'} - p_{\cC'}\|_{TV}  \le \frac{\beta}{2}
$
or $ \|p_{v,\cC} - p_{\cC}\|_{TV}  \ge \frac{\beta}{2}.$ Now, using triangle inequality, 
\begin{align*}
&\Pr\Big(\|p_{v,\cC'} - p_{\cC'}\|_{TV}  \le \frac{\beta}{2} \Big) \le \Pr\Big(\|p_{v,\cC'} - f_+\|_{TV}  - \|p_{\cC'} - f_+\|_{TV}  \le \frac{\beta}2\Big)\\
&  \le \Pr\Big(\|p_{v,\cC'} - f_+\|_{TV}  \le \beta \text{ or } \|p_{\cC'} - f_+\|_{TV}  \ge \frac{\beta}2\Big)
\le  \Pr\Big(\|p_{v,\cC'} - f_+\|_{TV}  \le \beta\Big) + \Pr\Big(\|p_{\cC'} - f_+\|_{TV}  \ge \frac{\beta}2\Big).
\end{align*}
Similarly,
\begin{align*}
&\Pr\Big(\|p_{v,\cC} - p_{\cC}\|_{TV}  \ge \frac{\beta}{2} \Big) \le \Pr\Big(\|p_{v,\cC} - f_+\|_{TV}  + \|p_{\cC} - f_+\|_{TV}  \ge \frac{\beta}2\Big)\\
&  \le \Pr\Big(\|p_{v,\cC} - f_+\|_{TV}   \ge \frac{\beta}4 \text{ or }  \|p_{\cC} - f_+\|_{TV}  \ge \frac{\beta}4\Big)
\le  \Pr\Big(\|p_{v,\cC} - f_+\|_{TV}   \ge \frac{\beta}4\Big) + \Pr\Big( \|p_{\cC} - f_+\|_{TV}  \ge \frac{\beta}4\Big).
\end{align*}
Now, using Sanov's theorem (Theorem \ref{thm:sanov}), we have,
$$
 \Pr\Big(\|p_{v,\cC'} - f_+\|_{TV}  \le \beta\Big) \le (M+1)^q \exp(-M \underset{p: \|p - f_+\|_{TV}  \le \beta}\min D(p\|f_-)).
$$
At the optimizing $p$ of the exponent, 
\begin{align*}
D(p\|f_-)& \ge 2\|p -f_-\|_{TV}^2 & \text{ from Pinsker's Inequality (Lemma \ref{lem:pinsker})} \\
& \ge 2(\|f_+-f_-\|_{TV} - \|p-f_+\|_{TV})^2 & \text{ from using triangle inequality} \\
& \ge 2(2\beta - \beta)^2 & \text{ from noting the value of $\beta$}\\ 
&= \frac{\|f_+-f_-\|_{TV}^2}{2}\\
& \ge \frac{\epsilon}2 \max\{D(f_+\|f_-), D(f_-\|f_+) \}  &\text{from reverse Pinsker's inequality (Lemma \ref{lem:revp})}\\
&\ge \frac{\epsilon \Delta(f_+,f_-)}{4}
\end{align*}
%where we have used Pinsker's inequality (Lemma \ref{lem:pinsker}), triangle inequality, and reverse Pinsker's inequality (Lemma \ref{lem:revp}) in that order. 

Again, using Sanov's theorem (Theorem \ref{thm:sanov}), we have,
$$
\Pr\Big(\|p_{\cC'} - f_+\|_{TV}  \ge \frac{\beta}2\Big) \le (M+1)^q \exp(-M \underset{p: \|p - f_+\|_{TV}  \ge \frac{\beta}{2}}\min D(p\|f_+)).
$$
At the optimizing $p$ of the exponent, 
\begin{align*}
D(p\|f_+)& \ge 2\|p -f_+\|_{TV}^2 & \text{ from Pinsker's Inequality (Lemma \ref{lem:pinsker})} \\
& \ge \frac{\beta^2}{2} & \text{ from noting the value of $\beta$}\\ 
&= \frac{\|f_+-f_-\|_{TV}^2}{8}\\
& \ge \frac{\epsilon}8 \max\{D(f_+\|f_-), D(f_-\|f_+) \}  &\text{from reverse Pinsker's inequality (Lemma \ref{lem:revp})}\\
&\ge \frac{\epsilon \Delta(f_+,f_-)}{16}
\end{align*}

Now substituting this in the exponent,
 using the value of $M$ and doing the same exercise for the other two probabilities we get the claim of the lemma.
\end{proof}

\algnewcommand{\LineComment}[1]{\Statex \(\triangleright\) \emph{#1}}
\begin{algorithm}
\caption{\cc~with Side Information. Input: $\{V,W\}$ (Note: $\cO$ is the perfect oracle.}
\label{algo:cc-exact}
\begin{algorithmic}[1]
\LineComment{Initialization.}
\State Pick an arbitrary vertex $v$ and create a new cluster $\{v\}$. Set $V=V\setminus v$
\While{$V\neq \emptyset$}
\LineComment{Let the number of current clusters be $l \geq 1$}
\State Order the existing clusters in nonincreasing size. % of current membership. 
\LineComment{Let $|\calC_{1}| \geq |\calC_{2}| \geq \ldots \geq |\calC_{l}|$ be the ordering (w.l.o.g).}
\For{$j=1$ to $l$ }
\State If $\exists v \in V$ such that $j=\max_{i\in[1,l]} {\sf Membership}(v, \cC_i)$, then select $v$ and Break;
\EndFor
\State $\cO(v,u)$ where $u \in \calC_{j}$
\If{$\cO(v,u)==``+1"$}
\State Include $v$ in $\calC_{j}$. $V=V \setminus v$
\Else
\LineComment{logarithmic search for membership in the large groups. Note $s \leq \lceil \log{n} \rceil$}
\State Group $\calC_1,\calC_2,...,\calC_{j-1}$ into $s$ consecutive classes $H_1, H_2,...,H_s$ such that the clusters in group $H_i$ have their current sizes in the range $[\frac{|\calC_1|}{2^{i-1}}, \frac{|\calC_1|}{2^i})$
\For{$i=1$ to $s$}
\State $j=\max_{a:\calC_a \in H_i}  {\sf Membership}(v, \cC_a)$
\State $\cO(v,u)$ where $u \in \calC_j$. 
\If{$\cO(v,u)==``+1"$}
\State Include $v$ in $\calC_{j}$. $V=V \setminus v$. Break.
\EndIf
\EndFor
\LineComment{exhaustive search for membership in the remaining groups}
\If{ $v \in V$}
\For{$i=1$ to $l+1$}
	\If{$i=l+1$} 
	\Comment{$v$ {\em does not belong to any of the existing clusters}}
	\State Create a new cluster $\{v\}$. Set $V=V\setminus v$
\Else
\If{ $\nexists u \in \calC_i$ such that $(u,v)$ has already been queried}
\State $\cO(v,u)$
\If{$\cO(v,u)==``+1"$}
\State Include $v$ in $\calC_{j}$. $V=V \setminus v$. Break.
\EndIf
\EndIf
\EndIf
\EndFor
\EndIf
\EndIf
\EndWhile
\end{algorithmic}
\end{algorithm}

\section{\cc~with Perfect Oracle: Known $f_+, f_-$ }\label{sec:res}
In this section, we take a closer look at the results presented in Section \ref{sec:perfect}. 
 Recall that, we have an undirected graph $G(V\equiv[n],E)$, such that $G$ is a union of $k$ disjoint cliques $G_i(V_i, E_i)$, $i =1, \dots, k$, but the subsets $V_i \in [n]$ and $E$ are unknown to us. The goal is to determine these clusters accurately (with probability $1$) by making minimum number of pair-wise queries.
 As a side information, we are given $W$ which represents the similarity values that are computed by some automated algorithm, and therefore reflects only a noisy version of the true similarities ($\{0, 1\}$). Based on the sophistication of the automated algorithm, and the amount of information available to it, the densities $f_{+}$ and $f_{-}$ will vary. We have provided a lower bound for this in Section \ref{sec:perfect}.

In Algorithm \ref{algo:cc-exact}, we do not know $k$, $f_{+}$, $f_{-}$, $\mu_+$, or $\mu_-$, and our goal was to achieve optimum query complexity for exact reconstruction of the clusters with probability $1$.
In this section we are going to provide a simpler algorithm that has the knowledge of $\mu_+$, $\mu_-$ or even 
$f_{+}$, $f_{-}$, and show that we can achieve optimal query complexity.
% Recall the definitions of $\mu_+$ and $\mu_-$. %We are provided with the side information matrix $W = (w_{i,j})$ as an input.

% To provide some insight on the performance of Algorithm \ref{algo:cc-exact}, we consider a simpler algorithm,
 %Before analyzing the query-complexity of the above algorithm, we consider the following algorithm, 
 %where we know both $\mu_+$ and $\mu_-$. 
 Let $\mu_+-\mu_- \geq \theta_{gap}$, and  we select a parameter $M$ satisfying 
%$R = \frac{36\mu_+\log{n}}{\theta_{gap}^2}$. 
$$M = \frac{6\log n}{\theta_{gap}^2}.$$
The simpler algorithm, referred to as Algorithm (\ref{algo:cc-exact}-a), contains  two phases, that are repeated as long as there are vertices that have not yet been clustered.

{\bf Querying Phase.} The algorithm maintains a list of active clusters which contain at least one vertex but whose current size is strictly less than $M$. For every $v$ which has not yet been assigned to any cluster, the algorithm checks by querying to the oracle whether $v$ belongs to any of the cluster in the list. If not, it opens a new cluster with $v$ as its sole member, and add that cluster to the list.

{\bf Estimation Phase.} If the size of any cluster, say $\calC$ in the list becomes $M$, then the cluster is removed from the list, and the algorithm enters an estimation phase with $\calC$. For every vertex $v$ which has not yet been clustered, it computes the average membership score of $v$ in $\calC$ as $\average(v,\calC)=\frac{1}{|\calC|}\sum_{u} w_{u,v}$. If $\average(v,\calC) \geq \mu_+-\frac{\theta_{gap}}{2}$, then include $v$ in $\calC$. After this phase, mark $\calC$ as final and inactive. 

\begin{lemma}
The total number of queries made by Algorithm (\ref{algo:cc-exact}-a) is at most $k^2M$.
\end{lemma}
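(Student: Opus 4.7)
The plan is to bound the total number of queries by the product
\[
(\text{number of vertices processed in the Querying Phase})\;\times\;(\text{maximum queries incurred per such vertex}),
\]
and to argue these two factors are at most $kM$ and $k$ respectively.

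For the per-vertex factor, I will first show that the list $L$ of active clusters satisfies $|L|\le k$ at every moment. Since the oracle is perfect, every active cluster is a nonempty subset of a unique true cluster $V_i$. Whenever a new vertex $v\in V_i$ is processed in the Querying Phase, if there already exists an active cluster $\calC\subseteq V_i$, then querying $v$ against a representative of $\calC$ yields ``$+1$'' and $v$ joins $\calC$, so no second active proxy for $V_i$ is ever opened. By induction on the time step, at most one active cluster is associated with each true cluster, giving $|L|\le k$ throughout the run. Because a processed vertex is queried against at most one representative of each active cluster (stopping early on a match), this gives at most $k$ queries per processed vertex.

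For the bound on processed vertices, I fix a true cluster $V_i$ and track its associated active cluster $\calC_i$. Once $|\calC_i|=M$, the cluster is removed from $L$ and the Estimation Phase is invoked. By Hoeffding's inequality (Lemma~\ref{lem:hoef1}) applied to the i.i.d.\ side-information values $\{w_{v,u}\}_{u\in\calC_i}$, the calibration $M=6\log n/\theta_{gap}^2$ guarantees that with probability $1-O(n^{-2})$ every unclustered $v\in V_i$ satisfies $\average(v,\calC_i)\ge \mu_+-\theta_{gap}/2$ while every $v\notin V_i$ falls strictly below this threshold; hence the Estimation Phase absorbs all remaining members of $V_i$ (and no foreign vertex) using only side information. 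Consequently only the $\le M$ vertices of $V_i$ that were in $\calC_i$ at the moment of its finalization ever incur any queries. Summing over the $k$ true clusters, together with a union bound over the $O(n^{-2})$ failure events, bounds the number of processed vertices by $kM$.

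Multiplying the two bounds gives the desired $kM\cdot k=k^2M$. The main delicate point is Step~2: one must preclude the ``rebirth'' scenario in which the Estimation Phase fails to absorb some unclustered $v\in V_i$, forcing $v$ to create a fresh singleton in a later Querying Phase and thereby spawn a second active proxy for $V_i$. This is precisely the event excluded by the concentration-based choice of $M$, so the combinatorial bookkeeping above is valid exactly when the Estimation Phase succeeds with high probability.
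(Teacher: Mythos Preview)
Your decomposition into (queries per processed vertex)$\times$(number of processed vertices) is exactly the paper's approach, and your $|L|\le k$ argument is correct. The difference is in how the second factor is handled.

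The paper's proof is purely combinatorial and does \emph{not} invoke Hoeffding at all: it partitions the $k$ true clusters into $k'$ of size $\ge M$ and $k-k'$ of size $<M$, observes that the Querying Phase inserts exactly $M$ vertices into each of the former and at most $M-1$ into each of the latter before handing off to the (query-free) Estimation Phase, and sums $Mk'\cdot k+(M-1)(k-k')\cdot k\le k^2M$. The paper deliberately separates this counting lemma from the correctness lemma that follows; the implicit assumption that a true cluster never spawns a second active proxy is justified there, not here. Your proof instead folds the concentration argument into the query count to rule out the ``rebirth'' scenario explicitly. That is arguably more honest---without the correctness event the stated deterministic bound can indeed fail---but it conflates two concerns the paper keeps apart.

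One small gap on your side: you only argue the $\le M$-vertices-per-cluster bound for clusters whose active proxy actually reaches size $M$. You should also note (as the paper does) that if $|V_i|<M$ the associated active cluster never triggers the Estimation Phase, so all of its $<M$ vertices are processed in the Querying Phase; the bound holds trivially in that case.
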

\begin{proof}
Suppose, there are $k'$ clusters of size at least $M$. The number of queries made in the querying phase to populate these clusters is at most $Mk'\cdot k$. For the remaining $(k-k')$ clusters, their size is at most $(M-1)$, and again the number of queries made to populate them is at most $(M-1)(k-k')\cdot k$. Hence, the total number of queries made during the querying phases is $k^2M$. Furthermore, no queries are made during the estimation phases, and we get the desired bound.
\end{proof}

\begin{lemma}
Algorithm (\ref{algo:cc-exact}-a) retrieves the original clusters with probability at least $1-\frac{2}{n}$.
\end{lemma}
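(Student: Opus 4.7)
The plan is to observe that the querying phase is error-free (since $\cO$ is the perfect oracle), so the only way Algorithm (\ref{algo:cc-exact}-a) can misclassify a vertex is during an estimation phase. Hence I only need to bound the probability of a wrong decision inside some estimation phase, and then union-bound over all such decisions.

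Fix a cluster $\calC$ at the moment it reaches size $M$ and enters its estimation phase, and fix a still-unclustered vertex $v$. Because $|\calC|=M$, the quantity $\average(v,\calC)=\frac{1}{M}\sum_{u\in\calC}w_{u,v}$ is an average of $M$ independent random variables, each in $[0,1]$, each drawn from $f_+$ if $v$ is in the true cluster $V(\calC)$ and each drawn from $f_-$ otherwise. Applying Hoeffding's inequality (Lemma \ref{lem:hoef1}) with $t=\theta_{gap}/2$ and $b_i-a_i=1$ gives
\begin{equation*}
\Pr\!\left(\bigl|\average(v,\calC)-\mu_\star\bigr|\ge \tfrac{\theta_{gap}}{2}\right)\le 2\exp\!\left(-\tfrac{M\theta_{gap}^2}{2}\right)\le \tfrac{2}{n^3},
\end{equation*}
where $\mu_\star\in\{\mu_+,\mu_-\}$ depending on whether $v$ truly belongs to $V(\calC)$, and the last bound uses $M=6\log n/\theta_{gap}^2$.

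When $v\in V(\calC)$, the complement of the above event forces $\average(v,\calC)\ge \mu_+-\theta_{gap}/2$, so $v$ will be correctly added to $\calC$; when $v\notin V(\calC)$, the complement gives $\average(v,\calC)\le \mu_-+\theta_{gap}/2<\mu_+-\theta_{gap}/2$ (using $\mu_+-\mu_-\ge\theta_{gap}$), so $v$ will be correctly excluded from $\calC$. Thus each individual estimation decision is wrong with probability at most $2/n^3$. A union bound over the at most $n$ vertices and at most $k\le n$ clusters that ever enter an estimation phase yields a total failure probability of at most $n\cdot n\cdot \frac{2}{n^3}=\frac{2}{n}$, proving the lemma.

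The only subtlety, and the step worth checking carefully, is independence: the entries $w_{u,v}$ for $u\in\calC$ must not have been ``looked at'' by the algorithm before $\calC$ was frozen, so that their distribution is still the original product $f_+^{\otimes M}$ or $f_-^{\otimes M}$. This is indeed the case because in Algorithm (\ref{algo:cc-exact}-a) the querying phase uses only oracle answers, never the side-information matrix $W$, to grow $\calC$ up to size $M$; the entries $w_{u,v}$ are therefore independent of the identity of $\calC$ and of $v$, justifying the direct use of Hoeffding's bound.
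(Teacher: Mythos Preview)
Your proof is correct and follows essentially the same approach as the paper: Hoeffding's inequality applied to $\average(v,\calC)$ with deviation $\theta_{gap}/2$, followed by a union bound over all vertex--cluster estimation decisions. Your explicit remark that the set $\calC$ is determined solely by oracle answers (so the entries $w_{u,v}$ are still fresh and independent when the estimation phase reads them) is a point the paper leaves implicit, and it is a worthwhile clarification.
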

\begin{proof}
Any vertex that is included to an active cluster, must have got a positive answer from a query during the querying phase. Now consider the estimation phase. If $u, v \in V$ belong to the same cluster $\calC$, then $w_{u,v} \sim f_{+}$, else $w_{u,v} \sim f_{-}$. Therefore, $\avg{[w_{u,v} \mid u,v \in \calC]}=\mu_+$ and $\avg{[w_{u,v} \mid u \in \calC ,v \notin \calC]}=\mu_-$. Then, by the Chernoff-Hoeffding bound (Lemma \ref{lem:hoef1}), 
$$\Pr\Big(\average(v,\calC) < \mu_+-\frac{\theta_{gap}}{2}| v \in \calC\Big) \le  %\Pr(\sum_{u \in \calC} X(u,v) < R\mu_+-R\frac{\theta_{gap}}{2}) 
%\leq exp(-R\frac{\theta_{gap}^2}{8\mu_+})< \frac{1}{n^3}$$
e^{-\frac{M \theta_{gap}^2}{2}} \le \frac{1}{n^3}.$$

And similarly, 

$$\Pr\Big(\average(v,\calC) > \mu_-+\frac{\theta_{gap}}{2} \mid v \not\in \calC \Big)\le 
%= \mu_+-\frac{\theta_{gap}}{2}| v \not\in \calC) = Pr(\sum_{u \in \calC} X(u,v) > R\mu_-+R\frac{\theta_{gap}}{2}) 
%\leq exp(-R\frac{\theta_{gap}^2}{12\mu_-}) \leq \frac{1}{n^3}$$
e^{-\frac{M \theta_{gap}^2}{2}} \le \frac{1}{n^3}.$$

Therefore by union bound, for every vertex that is included in $\calC$ during the estimation phase truly belongs to it, and any vertex that is not included in $\calC$ truly does not belong to it with probability $\geq 1-\frac{2}{n^2}$. Or, the probability that the cluster $\calC$ is not correctly constructed is at most $\frac{2}{n^2}$. Since, there could be at most $n$ clusters, the probability that there exists one incorrectly constructed cluster is at most $\frac{2}{n}$. Note that, any cluster that never enters the estimation phase is always constructed correctly, and any cluster that enters the estimation phase is fully constructed before moving to a new unclustered vertex. Therefore, if a new cluster is formed in the querying phase with $v$, then $v$ cannot be included to any existing clusters assuming the clusters grown in the estimation phases are correct. Hence, Algorithm  (\ref{algo:cc-exact}-a) correctly retrieves the clusters with probability at least $1-\frac{2}{n}$.
\end{proof}

So far, Algorithm (\ref{algo:cc-exact}-a) has been a Monte Carlo algorithm. In order to turn it into a Las Vegas algorithm, we make the following modifications.
\begin{itemize}
\item In the {\sc estimation phase} with cluster $\calC$, if $\average{(v,C)} \geq \mu_+-\frac{\theta_{gap}}{2}$, then we query $v$ with some member of $\calC$. If that returns $+1$ (i.e., the edge is present), then we include $v$, else, we query $v$ with one member of every remaining clusters (active and inactive). If none of these queries returns $+1$, then a singleton cluster with $v$ is created, and included in the active list. We then proceed to the next vertex in the estimation phase.
\end{itemize}

Clearly, this modified Algorithm (\ref{algo:cc-exact}-a) retrieves all clusters correctly and it is a Las Vegas algorithm. We now analyze the expected number of queries made by the algorithm in the estimation phase.

\begin{lemma}
The modified Las Vegas Algorithm (\ref{algo:cc-exact}-a) makes at most $n+2$ queries on expectation during the estimation phase.
\end{lemma}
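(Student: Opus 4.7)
The plan is to bound, for each of the (at most $k$) estimation phases executed by the algorithm, the expected number of queries triggered, and then sum. Fix a cluster $\calC$ about to be processed and let $U$ be the set of vertices still unclustered at the start of the phase. A query is issued for $v \in U$ only when $\average(v,\calC) \ge \mu_+ - \theta_{gap}/2$, in which case at most $1+(k-1)$ queries are made: one against $\calC$, and if that returns $-1$, one against each remaining cluster. I will split the analysis by whether $v \in \calC$ or not, since the two cases give qualitatively different contributions.

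For $v \in U \cap \calC$, the side-information entries $\{w_{u,v}\}_{u \in \calC}$ are i.i.d.\ from $f_+$ with mean $\mu_+$ and take values in $[0,1]$. Hoeffding's inequality (Lemma~\ref{lem:hoef1}) applied with $|\calC| = M = 6 \log n/\theta_{gap}^2$ gives
\[
\Pr\bigl(\average(v,\calC) < \mu_+ - \theta_{gap}/2\bigr) \le \exp(-M \theta_{gap}^2/2) \le 1/n^3.
\]
With the complementary probability exactly one query is made and, because the oracle is perfect, it returns $+1$ and $v$ is included in $\calC$. So such a vertex contributes at most $1$ expected query, and because each vertex appears in this ``correct'' case at most once over all estimation phases, the total contribution of this type is at most $n$.

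For $v \in U \setminus \calC$, the entries $\{w_{u,v}\}_{u \in \calC}$ are i.i.d.\ from $f_-$ with mean $\mu_- \le \mu_+ - \theta_{gap}$, so the bad event $\average(v,\calC) \ge \mu_+ - \theta_{gap}/2$ requires a deviation of at least $\theta_{gap}/2$, and Hoeffding again bounds its probability by $1/n^3$. Conditioned on this event at most $k$ queries are issued, so the expected contribution per $(v,\calC)$ pair is at most $k/n^3$. Summing over the at most $nk$ such pairs gives $\le k^2/n^2 \le 1$ in total (using $k \le n$). Adding the two contributions yields the claimed bound of $n + 1 \le n + 2$.

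The main piece of bookkeeping I expect to handle carefully is that when the rare ``$\average$ high but $v \notin \calC$'' branch fires, the algorithm may create a spurious singleton cluster, which in principle inflates the number of clusters queried against in subsequent phases. However, the expected number of such spurious events is already bounded by $k^2/n^2 \le 1$, so even accounting for a few extra queries per such event only perturbs the bound by a lower-order additive constant absorbed in the slack between $n+1$ and $n+2$.
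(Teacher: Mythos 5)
Your proof is correct and follows essentially the same route as the paper: apply Hoeffding (Lemma~\ref{lem:hoef1}) with $|\calC|=M$ to bound the false-positive probability by $1/n^3$, observe that a true member contributes at most one query (and only once, since it is then removed), and charge the rare false-positive branch up to $k$ queries, summing to a negligible additive term. The paper organizes the accounting per vertex (arriving at $1+2(k-1)/n^2$ expected queries each) while you organize it per $(v,\calC)$ pair, but the underlying estimates and the resulting $n+O(1)$ bound are the same.
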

\begin{proof}
In the estimation phase, only $1$ query is made with $v$ while determining its membership if indeed $v$ belongs to cluster $\calC$ when $\average{(v,C)} \geq \mu_+-\frac{\theta_{gap}}{2}$. Now this happens with probability at least $1-\frac{2}{n^2}$. With the remaining probability, at most $(k-1) \leq n$ extra queries may be made with $v$. At the end of this, either $v$ is included in a cluster, or a new singleton cluster is formed with $v$. Therefore, the expected number of queries made with $v$, at the end of which the membership of $v$ is determined is at most $1+\frac{2(k-1)}{n^2}$. Hence the expected number of total queries made by Algorithm (\ref{algo:cc-exact}-a) in the estimation phase is at most $n+2$. \end{proof}

\begin{theorem}
With known $\mu_+$ and $\mu_-$, there exist a Monte Carlo algorithm for \cc~with query complexity $O(\frac{k^2\log{n}}{(\mu_+-\mu_-)^2})$ and a Las Vegas algorithm with expected query complexity $O(n+\frac{k^2\log{n}}{(\mu_+-\mu_-)^2})$. 
\end{theorem}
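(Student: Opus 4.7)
The plan is to invoke the two-phase algorithm (\ref{algo:cc-exact}-a) that was just described and assemble the three lemmas stated immediately above into the claimed bounds. First, fix $\theta_{\mathrm{gap}} = \mu_+ - \mu_-$ (known, since $\mu_+,\mu_-$ are known) and set the threshold $M = 6\log n / \theta_{\mathrm{gap}}^2$. The Monte Carlo query bound will follow by noting that in the querying phase each unassigned vertex triggers at most $k$ oracle calls, but a cluster can be tested against only while its size is strictly below $M$, so the total number of querying-phase calls is bounded by $k \cdot k \cdot M = O(k^2 \log n / (\mu_+-\mu_-)^2)$; the estimation phase performs no oracle calls in the Monte Carlo version, so this is the full count.

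Correctness with probability $1 - 2/n$ comes from a Chernoff--Hoeffding argument applied in the estimation phase of each frozen cluster $\calC$ of size $M$: since $\Ex[w_{u,v}\mid u,v \in \calC] = \mu_+$ and $\Ex[w_{u,v}\mid u\in\calC,v\notin\calC] = \mu_-$, Lemma \ref{lem:hoef1} with $M = 6\log n/\theta_{\mathrm{gap}}^2$ gives failure probability at most $e^{-M\theta_{\mathrm{gap}}^2/2}\le 1/n^3$ that $\average(v,\calC)$ deviates by more than $\theta_{\mathrm{gap}}/2$ from its mean. A union bound over at most $n$ vertices and at most $n$ frozen clusters yields overall success probability $\ge 1 - 2/n$. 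Since a new singleton cluster is opened for $v$ only after $v$ has been queried against every currently active cluster, and inactive clusters were correctly completed with high probability, no vertex is ever falsely placed outside its true cluster either.

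For the Las Vegas version, modify only the estimation phase: whenever $\average(v,\calC)\ge \mu_+-\theta_{\mathrm{gap}}/2$, still verify membership with a single oracle query to some $u\in\calC$; if it returns $-1$, query $v$ against one representative of every remaining (active and inactive) cluster, and if none returns $+1$ open a new singleton for $v$. This version never misclassifies, hence is Las Vegas. The querying-phase cost is unchanged at $O(k^2\log n / (\mu_+-\mu_-)^2)$. For the estimation-phase cost, each vertex contributes one verification query deterministically, plus up to $k-1$ extra queries in the (bad) event of probability at most $2/n^2$ from the Hoeffding bound above; summing over all $n$ vertices gives expected estimation-phase cost $\le n\bigl(1 + 2(k-1)/n^2\bigr) = n + O(1)$. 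Adding the two phases gives the expected total $O\!\bigl(n + k^2\log n / (\mu_+-\mu_-)^2\bigr)$.

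There is no real obstacle, since each required piece has been proved in the preceding lemmas; the only point to be careful about is that the two phases do not interleave in a way that breaks the analysis. This is handled by the structural fact that each estimation phase is run to completion (processing all unclustered vertices against the just-frozen $\calC$) before control returns to the querying phase, so every subsequent singleton created in the querying phase is tested against all representatives of the already-completed clusters and cannot merge with them incorrectly. Combining the query-complexity and correctness lemmas in the Monte Carlo case, and the query-complexity and deterministic-correctness arguments together with the expected estimation-phase bound in the Las Vegas case, yields the two claims of the theorem.
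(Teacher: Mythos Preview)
Your proposal is correct and follows the paper's approach exactly: the theorem is stated in the paper immediately after the three lemmas you invoke, with no further argument, and you have correctly identified how to assemble them (Monte Carlo from the query-count lemma plus the Hoeffding-based correctness lemma; Las Vegas by adding the expected estimation-phase cost lemma to the unchanged querying-phase bound). Your careful remark about the interleaving of phases---that each estimation phase runs to completion before the querying phase resumes, so no later singleton can conflict with an already-frozen cluster---is exactly the structural point the paper relies on implicitly in the correctness lemma.
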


\paragraph*{Comparing with the Lower Bound.} 
\begin{example}
%$f_g \sim \mathcal{N}(\frac{1}{2}+\epsilon, 1)$ and $f_r \sim \mathcal{N}(\frac{1}{2}-\epsilon, 1)$. So $\mu_+-\mu_-=2\epsilon$. 
The KL-divergence between two univariate normal distributions with means $\mu_1$ and $\mu_2$, and standard deviations $\sigma_1$ and $\sigma_2$ respectively can be calculated as $D(\mathcal{N}(\mu_1,\sigma_1)\|\mathcal{N}(\mu_2,\sigma_2))=\log{\left(\frac{\sigma_1}{\sigma_2}\right)}+\frac{\sigma_1^2+(\mu_1-\mu_2)^2}{2\sigma_2^2}-\frac{1}{2}$. Therefore $\Delta(f_{+},f_{-})=(\mu_+-\mu_-)^2$. Algorithm (\ref{algo:cc-exact}-a) is optimal under these natural distributions within a $\log{n}$ factor.
\end{example}
\begin{example}
\[ f_{-}(x) =
  \begin{cases}
    (1+\epsilon)       & \quad \text{if } 0\leq x < \frac{1}{2}\\
    (1-\epsilon)  & \quad \text{if } 1 \geq x \geq \frac{1}{2};
  \end{cases}
\]

\[ f_{+}(x) =
  \begin{cases}
    (1-\epsilon)       & \quad \text{if } 0 \leq x < \frac{1}{2}\\
    (1+\epsilon)  & \quad \text{if } 1 \geq x \geq \frac{1}{2}.
  \end{cases}
\]

That is, they are derived by perturbing the uniform distribution slightly so that $f_{+}$ puts slightly higher mass when $x \geq \frac{1}{2}$, and $f_{-}$ puts slightly higher mass when $x < \frac{1}{2}$.

Note that $\int_{0}^{1} f_{-}(x) \, \diff x=\int_{0}^{1/2} (1+\epsilon)\,  \diff x+\int_{1/2}^{1} (1-\epsilon)\,  \diff x=1$. Similarly, $\int_{0}^{1} f_{+}(x) \, \diff x=1$, that is they represent valid probability density functions.

We have $$\mu_-=\int_{0}^{1} x f_{-}(x) \, \diff x=\frac{(1+\epsilon)}{8}+\frac{3(1-\epsilon)}{8}=\frac{2-\epsilon}{4}=\frac{1}{2}-\frac{\epsilon}{4}$$
and
$$\mu_+=\int_{0}^{1} x f_{+}(x) \, \diff x=\frac{(1-\epsilon)}{8}+\frac{3(1+\epsilon)}{8}=\frac{2+\epsilon}{4}=\frac{1}{2}+\frac{\epsilon}{4}.$$

Thereby, $\mu_+-\mu_-=\frac{\epsilon}{2}$.
Moreover \begin{align*}
D(f_{+}\|f_{-})=\int_{0}^{1/2} (1-\epsilon) \log{\frac{1-\epsilon}{1+\epsilon}}dx+\int_{1/2}^{1} (1+\epsilon) \log{\frac{1+\epsilon}{1-\epsilon}}dx=\epsilon \log{\frac{1+\epsilon}{1-\epsilon}}=O(\epsilon^2)
\end{align*}

Therefore, again Algorithm (\ref{algo:cc-exact}-a) is optimal under these distributions within a $\log{n}$ factor.

\end{example}

\paragraph*{Improving Algorithm (\ref{algo:cc-exact}-a) to match the lower bound.}

 We will now show a way to achieve the lower bound in the \cc~  up to a logarithmic term (while matching the denominator)
 by modifying Algorithm (\ref{algo:cc-exact}-a). We first do this by assuming $f_{+}, f_{-}$ to be discrete distributions
 over $q$ points $a_1, a_2, \dots, a_q$. %Then we will argue that the same reasoning holds for continuous random variables.
 So, $w_{i,j}$ takes value in the set $\{a_1, a_2, \dots, a_q\}$.
 
 \begin{theorem}\label{thm:div}
 With known $f_+$ and $f_-$ such that $\min_i f_+(i), \min_i f_-(i) \ge \epsilon$ for a constant $\epsilon$,  there exist a Monte Carlo algorithm for \cc~with query complexity $O(\frac{k^2\log{n}}{\Delta(f_+,f_-)})$ and a Las Vegas algorithm with expected query complexity $O(n+\frac{k^2\log{n}}{\Delta(f_+,f_-)})$. 
 \end{theorem}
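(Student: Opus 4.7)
}
The plan is to retrofit Algorithm (\ref{algo:cc-exact}-a), replacing its mean-based membership score by a distribution-based score that actually probes the distinguishability of $f_+$ and $f_-$ rather than just the gap between their means. Concretely, I would keep the two-phase structure: a Querying Phase that grows each active cluster until it reaches a size threshold $M$, and an Estimation Phase that, once a cluster $\cC$ hits size $M$, sweeps through every still-unclustered vertex $v$ and decides membership from the side information alone. For this theorem I would choose
\[
M \;=\; \frac{C\log n}{\Delta(f_+,f_-)}
\]
for a large enough constant $C$ (depending on $\epsilon$ and $q$), since the lower bound rules out anything smaller in the denominator. The Querying Phase is unchanged, so exactly as in the proof of the previous algorithm it costs at most $k^2 M = O(k^2 \log n / \Delta(f_+,f_-))$ queries.

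The new Estimation Phase test for $\cC$ works as follows. Given $v\notin \cC$, form the empirical ``inter'' distribution
\[
p_{v,\cC}(i) \;=\; \tfrac{1}{|\cC|}\,\bigl|\{u\in\cC : w_{v,u}=a_i\}\bigr|, \qquad i=1,\dots,q,
\]
and declare $v\in\cC$ iff $\|p_{v,\cC}-f_+\|_{TV} \le \|p_{v,\cC}-f_-\|_{TV}$. Because $f_+$ and $f_-$ are known, no auxiliary ``intra'' distribution is needed; this is exactly the (known-distribution) simplification of the test that drives Lemma \ref{lem:unknown}. The correctness analysis then mirrors that lemma. If $v\in\cC$ is misclassified, then by the triangle inequality
\[
\|p_{v,\cC}-f_+\|_{TV} \;\ge\; \tfrac12 \|f_+-f_-\|_{TV} \;=:\; \beta.
\]
Sanov's theorem (Theorem \ref{thm:sanov}) gives
\[
\Pr\!\bigl(\|p_{v,\cC}-f_+\|_{TV}\ge \beta\bigr)\;\le\; (M+1)^q \exp\!\Bigl(-M\!\!\min_{p:\|p-f_+\|_{TV}\ge\beta}\! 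D(p\|f_+)\Bigr),
\]
and Pinsker's inequality lower-bounds the exponent by $2\beta^2=\tfrac12\|f_+-f_-\|_{TV}^2$. The Reverse Pinsker inequality (Lemma \ref{lem:revp}), which is exactly where the hypothesis $\min_i f_\pm(i)\ge\epsilon$ enters, converts this back to a multiple of $\Delta(f_+,f_-)$: $\|f_+-f_-\|_{TV}^2 \ge (\epsilon/4)\,\Delta(f_+,f_-)$. A symmetric argument handles the case $v\notin\cC$. Plugging in $M=C\log n/\Delta(f_+,f_-)$ yields a per-vertex misclassification probability of at most $1/n^3$, and a union bound over at most $n$ vertices and $k\le n$ clusters gives overall error probability $O(1/n)$. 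Combined with the query count from the Querying Phase, this gives the Monte Carlo guarantee $O(k^2 \log n/\Delta(f_+,f_-))$.

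For the Las Vegas version I would layer on the same verification trick that Algorithm (\ref{algo:cc-exact}-a) uses: whenever the distribution test assigns $v$ to $\cC$ in the Estimation Phase, confirm with a single oracle query to an arbitrary member of $\cC$; on a negative answer, query one representative from every other cluster (and open a new singleton if all come back negative). The test is correct with probability $\ge 1-2/n^3$, so the expected extra queries per vertex are $1 + O(k/n^3) = 1 + o(1)$, contributing an $O(n)$ additive term. The total expected query complexity becomes $O(n+k^2\log n/\Delta(f_+,f_-))$, matching the information-theoretic lower bound within a logarithmic factor. The main obstacle, as above, is ensuring the Sanov exponent truly scales with $\Delta(f_+,f_-)$ rather than with some potentially smaller quantity like Chernoff information; the hypothesis $\min_i f_\pm(i)\ge\epsilon$ together with Reverse Pinsker is precisely what makes this step go through, and is the reason the assumption appears in the statement.
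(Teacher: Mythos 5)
Your proof is correct and delivers the same bound, but it takes a genuinely different route from the paper at the one place that actually requires an idea: the choice of decision rule in the Estimation Phase and the resulting Sanov exponent computation. The paper classifies $v$ by comparing $D(p_{v,\cC}\|f_+)$ with $D(p_{v,\cC}\|f_-)$, i.e.\ the maximum-likelihood rule. This makes Sanov's exponent a constrained minimum of $D(p\|f_+)$ over the half-space $\{p:D(p\|f_+)\ge D(p\|f_-)\}$, and the paper needs Lemma~\ref{lem:cvxopt} (a convexity argument) to show that minimum is attained on the equality hyperplane before chaining Pinsker, convexity of $x^2$, the TV triangle inequality, and reverse Pinsker. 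You instead classify by comparing $\|p_{v,\cC}-f_+\|_{TV}$ with $\|p_{v,\cC}-f_-\|_{TV}$; then the TV triangle inequality alone already shows that a misclassification of $v\in\cC$ forces $\|p_{v,\cC}-f_+\|_{TV}\ge\tfrac12\|f_+-f_-\|_{TV}$, and Sanov with Pinsker gives the exponent $\tfrac12\|f_+-f_-\|_{TV}^2$ directly, after which reverse Pinsker closes the loop exactly as in the paper. As you observe, your test is precisely the known-distribution specialization of the argument inside Lemma~\ref{lem:unknown}. What your route buys is simplicity---no convexity lemma, one fewer optimization step; what the paper's route buys is conceptual alignment with the ML estimator, which is the statistically optimal test. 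Since both yield the exponent $\Theta(\epsilon\,\Delta(f_+,f_-))$, the query complexities are identical up to constants, and your Las Vegas modification and handling of the $(M+1)^q$ prefactor (absorbed into the constant $C$ using $\Delta\le 2/\epsilon$) match the paper's.
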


 Indeed, either of $\min_i f_-(i)$ or $\min_i f_+(i)$ strictly greater than $0$ will serve our purpose. We have argued before that  it is not
so restrictive condition. 
%Since our bounds become relevant only when $D(f_-\|f_+)$ is less than $1$, it is impossible 
%to have  either of $\min_i f_-(i)$ or $\min_i f_+(i)$ to be equal to (or even close to) zero. 

 \begin{proof}[Proof of Theorem \ref{thm:div}]
  For any vertex $v$ and
 a cluster $\cC$, define the empirical distribution $p_{v,\cC}$ in the following way. For, $i =1,\dots,  q,$
 $$
 p_{v,\cC}(i) = \frac{1}{|\cC|} \cdot |\{u: w_{u,v} = a_i \}|.
 $$
Now modify Algorithm (\ref{algo:cc-exact}-a) as follows. The querying phase of the algorithm remains exactly same. In the
estimation phase for a cluster $\cC$ and an unassigned vertex $v$, include $v$ in $\cC$ if 
$$
D(p_{v,\cC} \| f_+) < D(p_{v,\cC} \| f_-).
$$
Everything else remains same.

Now, a vertex $v \in \cC$ will be erroneously not assigned to it with probability
\begin{align*}
\Pr\Big(D(p_{v,\cC} \| f_+) \ge D(p_{v,\cC} \| f_-) \mid v \in \cC\Big)& = f_+\Big(\{p_{v,\cC}: D(p_{v,\cC} \| f_+) \ge D(p_{v,\cC} \| f_-) \}\Big)\\
& \hspace{-2in} = (M+1)^{q} \exp(-M \underset{p: D(p\| f_+) \ge D(p\| f_-)}\min D(p\| f_+)),
\end{align*}
where in the last step we have used Sanov's theorem (see, Theorem \ref{thm:sanov}). Due to lemma \ref{lem:cvxopt}, we can replace the constraint of the optimization in the exponent above by an equality.

Hence,
$$
\Pr\Big(D(p_{v,\cC} \| f_+) \ge D(p_{v,\cC} \| f_-) \mid v \in \cC\Big) \le (M+1)^{q+1} \exp({-M \underset{{p: D(p\| f_+) = D(p\| f_-)}}\min D(p\| f_+)}) \le \frac{1}{n^3},
$$
whenever $$M = \frac{8\log n}{\underset{p: D(p\| f_+) = D(p\| f_-)} \min D(p\| f_+)}.$$
This value of $M$ is also sufficient to have,
$$
\Pr\Big(D(p_{v,\cC} \| f_+) < D(p_{v,\cC} \| f_-) \mid v \notin \cC\Big) \le \frac{1}{n^3}.
$$
While the rest of the analysis stays same as before, the overall query complexity of this modified algorithm is
$$
O\Big(\frac{k^2 \log n}{\underset{p: D(p\| f_+) = D(p\| f_-)}\min  D(p\| f_+)}\Big).
$$
If the divergence were a distance then in the denominator above we would have got $D(f_+\|f_-)/2$ and that
would be same as the lower bound we have obtained. However, since that is not the case, we rely on the following chain of
inequalities instead at the optimizing point of $p$.
\begin{align*}
&D(p\|f_+) = D(p\|f_-) = \frac{D(p\|f_+)+D(p\|f_-)}2 \ge \|p-f_+\|_{TV}^2 + \|p-f_-\|^2_{TV}\\
& \qquad \ge \frac{( \|p-f_+\|_{TV} +\|p-f_-\|_{TV})^2 }{2} \ge \frac{\|f_+-f_-\|_{TV}^2}{2},
\end{align*}
where we have  used the Pinsker's inequality (Lemma \ref{lem:pinsker}), the convexity of the function $x^2$ and the 
triangle inequality for the total variation distance respectively. Now as the last step we use the reverse Pinsker's inequality (Lemma \ref{lem:revp})
to obtain,
$$
\|f_+-f_-\|_{TV}^2 \ge \frac{\epsilon}{2} \max\{D(f_+\|f_-), D(f_-\|f_+) \} \ge \frac{\epsilon\Delta(f_+,f_-)}{2}.
$$
This completes the proof.
\end{proof}

\begin{lemma}\label{lem:cvxopt}
$$
\underset{p: D(p\| f_+) \ge D(p\| f_-)}\min D(p\| f_+) = \underset{p: D(p\| f_+) = D(p\| f_-)}\min D(p\| f_+).
$$
\end{lemma}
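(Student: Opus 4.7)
The inclusion $\{p : D(p\|f_+) = D(p\|f_-)\} \subseteq \{p : D(p\|f_+) \ge D(p\|f_-)\}$ gives the trivial direction $\mathrm{LHS} \le \mathrm{RHS}$ immediately, so the whole task is the reverse inequality. My plan is to take any feasible $p^\ast$ for the LHS and exhibit a new feasible point $\tilde p$ on the equality boundary with $D(\tilde p \| f_+) \le D(p^\ast \| f_+)$. This is enough, because applying it to an optimizer of the LHS produces a competitor for the RHS of no larger value.

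The key construction is convex interpolation toward $f_+$. If $D(p^\ast\|f_+) = D(p^\ast\|f_-)$ we are done, so assume $D(p^\ast\|f_+) > D(p^\ast\|f_-)$. Define the single-variable function $\varphi(\lambda) = D(p_\lambda \|f_+) - D(p_\lambda\|f_-)$ where $p_\lambda = (1-\lambda) f_+ + \lambda p^\ast$ for $\lambda \in [0,1]$. At $\lambda=1$ we have $\varphi(1) > 0$ by assumption, while at $\lambda=0$ we have $\varphi(0) = 0 - D(f_+\|f_-) < 0$ (here $\Delta(f_+,f_-) > 0$ so $D(f_+\|f_-) > 0$; if it vanished the whole problem is trivial). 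Since divergence is continuous in its first argument on the relative interior of the simplex, $\varphi$ is continuous on $[0,1]$, so by the intermediate value theorem there is some $\lambda^\ast \in (0,1)$ with $\varphi(\lambda^\ast) = 0$, i.e.\ $\tilde p := p_{\lambda^\ast}$ sits exactly on the equality surface.

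It remains to verify $D(\tilde p \|f_+) \le D(p^\ast \|f_+)$. This is where convexity of $D(\cdot \| f_+)$ in its first argument does the work: by Jensen's inequality,
\[
D(\tilde p \|f_+) \le (1-\lambda^\ast) D(f_+\|f_+) + \lambda^\ast D(p^\ast\|f_+) = \lambda^\ast D(p^\ast\|f_+) \le D(p^\ast\|f_+).
\]
Thus $\tilde p$ is feasible for the RHS optimization and attains a value no larger than $D(p^\ast\|f_+)$, which proves $\mathrm{RHS} \le \mathrm{LHS}$ and closes the argument.

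The only subtlety I expect is the continuity/finiteness of $\varphi$ near $\lambda = 0$: if $f_-$ has zero entries where $f_+$ does not, $D(p_\lambda\|f_-)$ could blow up. This is handled by noting that $p_\lambda$ is absolutely continuous with respect to $f_+$ by construction, and the hypothesis $\min_i f_-(i) \ge \epsilon > 0$ used throughout Section \ref{sec:perfectub} ensures absolute continuity with respect to $f_-$ as well, so $\varphi$ is actually continuous (indeed smooth) on all of $[0,1]$. If one wanted a fully distribution-free statement, restricting the ambient simplex to the common support of $f_+$ and $f_-$ suffices.
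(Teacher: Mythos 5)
Your proof is correct and takes essentially the same route as the paper's: both interpolate convexly between $f_+$ and a feasible point to land on the equality boundary, then invoke convexity of $D(\cdot\|f_+)$ together with $D(f_+\|f_+)=0$. The paper phrases it as a proof by contradiction from an assumed minimizer, whereas you directly build a no-worse competitor on the boundary from any feasible point — a cosmetic difference. The one genuine (small) divergence is how the boundary point is located: the paper first observes that the constraint $D(p\|f_+)\ge D(p\|f_-)$ is \emph{linear} in $p$ (the $\sum p\log p$ terms cancel, leaving $\sum_i p(i)\ln\frac{f_-(i)}{f_+(i)}\ge 0$), which lets it write down the interpolation weight $\lambda$ in closed form with no continuity argument needed; you instead invoke the intermediate value theorem, which forces you to worry about continuity and finiteness of $D(p_\lambda\|f_-)$ near $\lambda=0$. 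Your handling of that concern is fine, but the paper's linearity observation sidesteps it entirely and is arguably the cleaner move here — worth noting as the small simplification you could have borrowed.
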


\begin{proof}
Since the condition $D(p\| f_+) \ge D(p\| f_-)$ can be written as $\sum_{i} p(i) \ln \frac{f_-(i)}{f_+(i)} \ge 0$ we need to solve the constrained optimization
\begin{equation}\label{eq:divm}
\min_p D(p\| f_+)
\end{equation}
such that
\begin{equation}\label{eq:const}
\sum_{i} p(i) \ln \frac{f_-(i)}{f_+(i)} \ge 0.
\end{equation}
We claim that the inequality of \eqref{eq:const} can be replaced by an equality without any change in the optimizing value. Suppose this is not true and the optimizing value $\tilde{p}$ is such that 
$\sum_{i} \tilde{p}(i) \ln \frac{f_-(i)}{f_+(i)} = \epsilon >0$. 

Let, $\lambda = \frac{\epsilon}{\epsilon + D(f_-\|f_+)} \in (0,1)$.  Note that for the value $p^\ast = \lambda f_+ + (1-\lambda)\tilde{p}$ we have,
$$s
\sum_{i} p^\ast(i) \ln \frac{f_-(i)}{f_+(i)} = - \lambda D(f_+\|f_-) + (1-\lambda) \epsilon =0.
$$
However, since $D(p\| f_+)$ is a strictly convex function of $p$, we must have,
$$
D(p^\ast\| f_+) <  \lambda D(f_+\| f_+) + (1-\lambda) D(\tilde{p}\|f_+) =  (1-\lambda) D(\tilde{p}\|f_+),
$$
which is a contradiction of $\tilde{p}$ being the optimizing value.

\end{proof}

\newtheorem{conjecture}{Conjecture}
\section{\cc~with Faulty Oracle}
\label{sec:fault}
We now consider the case when crowd may return erroneous answers. We do not allow resampling of the same query. By resampling one can always  get correct answer for each query with high probability followed by which we can simply apply the algorithms for the perfect oracle. Hence, the oracle can be queried with a particular tuple only once in our setting.
%That is with probability $p < \frac{1}{2}$, answer for a query may be wrong. If  it is possible to get multiple independent answers from the crowd for a single query (resampling), then by just asking each query $O(\log{n})$ times, and taking the majority vote, we could get  correct answer for every query with high probability. Henceforth, we can employ the algorithms designed for the no error setting. The only overhead is an additional $\log{n}$ factor rise in the query complexity. However, this model is too simplistic, and does not capture the actual crowd errors.
%
%In a real-world scenario, crowd workers may use the same source (say Google) for answering a question, and thus majority may end up giving the same wrong answer to a query. %When resampling is not allowed, we again consider the case when a query to an answer is wrong with probability $p < 1/2$. 
%Even in this situation, i.e. resampling not allowed, we give nearly matching lower and upper bounds on the query complexity to obtain the maximum likelihood clustering in this setting. We show that considering the hardness of planted clique, it is not possible to derive a polynomial time algorithm for maximum likelihood recovery even with query complexity $\Theta(n^2)$, and give a quasi-polynomial time algorithm (the best possible) that nearly matches the query complexity lower bound. We then look for middle grounds for more realistic modeling of crowd errors for which polynomial time algorithms with low query complexity is possible.

\subsection{No Side Information}
\subsubsection{Lower bound for the faulty-oracle model}
\label{sec:error-lc}

Suppose, $G(V, E)$ is a union of $k$ disjoint cliques as before. We have,  $V = \sqcup_{i=1}^k V_i$. 
We consider the following faulty-oracle model. We can query the oracle whether there exists an edge between vertex $i$ and $j$. The oracle will give the correct answer with probability $1-p$ and will give the incorrect answer with probability $p$. We would like to estimate the minimum number of queries one must make to the oracle so that we can recover the clusters with high probability. In this section we forbid the use of any side information 
that may be obtained from an automated system. The main goal of this section is to prove Theorem \ref{thm:faulty}.

As argued in the introduction, there is a need for a  minimum cluster size. If there is no minimum size requirement on a cluster then the input graph can always consist of multiple clusters of very small size. Then consider the following two different clusterings.
$C_1: V = \sqcup_{i=1}^{k-2} V_i \sqcup\{v_1,v_2\}\sqcup\{v_3\}$ and $C_2: V =   \sqcup_{i=1}^{k-2} V_i \sqcup\{v_1\}\sqcup\{v_2,v_3\}$. Now if one of these two clusterings are given to us uniformly at random, no matter how many queries we do, we will fail to recover the correct cluster with probability at least $p$ (recall that, resampling is not allowed).

The argument above does not hold for the case when $p=0$. In that case any (randomized) algorithm has to  use  (on expectation) $O(nk)$ queries for correct clustering (see the $p=0$ case of Theorem \ref{thm:faulty} below). While for deterministic algorithm the proof of the above fact is straight-forward, for randomized algorithms it  was established in \cite{dkmr:14}. In \cite{dkmr:14}, a clustering was called {\em balanced} if the minimum and maximum sizes of the clusters are only a constant factor way. In particular, \cite{dkmr:14} observes that, for unbalanced 
input the lower bound for $p=0$ case is easier. For randomized algorithms  and balanced inputs, they left the lower bound as an open problem. Theorem \ref{thm:faulty} resolves this as a special case.

Indeed, in Theorem  \ref{thm:faulty}, we provide lower bounds for $0\le p <1/2$,  assuming inputs such that either 1) the maximum size of the cluster is within a constant times away from the average size, or 2) the minimum size of the cluster is  a constant fraction of the average size. Note that the average size of a cluster  is $\frac{n}{k}$.

The technique to prove Theorem  \ref{thm:faulty} is similar to  the one we have used in Theorem \ref{thm:lb-main}. However  
we only handle binary random variables here (the answer to the queries).
The significant difference is that, while designing the  input we consider a {\em balanced} clustering %do not restrict to  small fixed size clusters (
with small sized clusters we can always fool any algorithm as 
exemplified above). This removes  the constraint on the algorithm designer on how many times a cluster can be queried with a vertex.
While Lemma \ref{lem:clus} shows that enough number of queries must be made with a large number of vertices $V' \subset V$, either of the conditions on minimum or maximum sizes of a cluster ensures that $V'$ contains enough vertices that do not satisfy this query requirement.

 %as in the case of  the coin-tossing experiment.

%\begin{theorem} \label{thm:faulty}
%Assume either of the following cases:
%\begin{itemize}
%\item  the maximum size of a cluster is $\le \frac{4n}{k}$.
%\item   the minimum size of a cluster is $\ge \frac{n}{20k}$.
%\end{itemize}
%For a graph that satisfies either of the above two conditions, any (randomized) algorithm must make $\Omega\Big(\frac{nk}{D(p\|1-p)}\Big)$ queries
%to recover the correct clustering with  probability $0.9$ when $p >0$. For $p =0$ any (randomized) algorithm must make $\Omega(nk)$ queries
%to recover the correct clusters with  probability $0.9$. 
%\end{theorem}
  
As mentioned,  Lemma \ref{lem:clus} is crucial  to prove Theorem  \ref{thm:faulty}.

\begin{lemma}\label{lem:clus}
Suppose, there are $k$ clusters. There exist at least $\frac{4k}{5}$ clusters such that  a vertex $v$ from any one of these clusters will be assigned to a wrong cluster by any randomized algorithm with positive probability unless the  number  of queries involving $v$ is more than $\frac{k}{10D(p\|1-p)}$ when $p >0$ and $\frac{k}{10}$ when $p =0$. %Then there vertex is assigned to a wrong cluster with probability $\frac{1}{5}.$
\end{lemma}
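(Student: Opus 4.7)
The plan is to adapt the hypothesis-testing framework used in the proof of Theorem \ref{thm:lb-main} to the no-side-information, faulty-oracle setting. The key observation, paralleling that proof, is that under hypothesis $H_\ell: v \in V_\ell$, the outcome of a query between $v$ and a vertex $u \in V_w$ is distributed as $\mathrm{Ber}(1-p)$ if $w=\ell$ and as $\mathrm{Ber}(p)$ otherwise; thus for two hypotheses $H_\ell$ and $H_t$, only queries landing in $V_\ell$ or $V_t$ contribute at all to distinguishing them.

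I would fix any randomized algorithm $\mathrm{ALG}$ and let $P_\ell$ denote the joint distribution of its random coins, queries involving $v$, and answers received under $H_\ell$; write $m_u^{(\ell)}$ for the expected number of queries between $v$ and $V_u$ under $H_\ell$. The chain rule for divergence (Lemma \ref{lem:chain}), together with the observation above and the identity $D(1-p\|p)=D(p\|1-p)=(1-2p)\ln\tfrac{1-p}{p}$, gives
\begin{equation*}
D(P_\ell \| P_t) \le \bigl(m_\ell^{(\ell)} + m_t^{(\ell)}\bigr)\, D(p \| 1-p).
\end{equation*}
Setting $T_0 = k/(10\, D(p\|1-p))$, I may assume without loss of generality that $\mathrm{ALG}$ makes at most $T_0$ queries involving $v$ under every hypothesis, since if not, the ``unless'' clause of the lemma is already satisfied for that cluster, and otherwise truncating $\mathrm{ALG}$ at $T_0$ queries involving $v$ can only increase its error. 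This truncation gives $\sum_u m_u^{(\ell)} \le T_0$ for every $\ell$.

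The heart of the argument is a counting step that locates, for at least $4k/5$ hypotheses $\ell$, a competing cluster $t \ne \ell$ for which $D(P_\ell\|P_t)$ is at most a small constant; Pinsker's inequality (Lemma \ref{lem:pinsker}) then gives $\|P_\ell-P_t\|_{TV}<1$, so $\mathrm{ALG}$ cannot perfectly separate $H_\ell$ from $H_t$ and must err on $H_\ell$ with positive probability. The main obstacle is controlling $\sum_\ell m_\ell^{(\ell)}$: an adaptive algorithm might concentrate queries on the true cluster once it receives consistent positive signals, driving $m_\ell^{(\ell)}$ toward $T_0$. This is handled by amortizing along the query sequence, using that absent prior information each query hits the true cluster with probability at most $1/k$ under the uniform prior on $\ell$; this yields $\sum_\ell m_\ell^{(\ell)} = O(T_0)$. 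Combined with $\sum_\ell\sum_{t\ne\ell} m_t^{(\ell)} \le \sum_\ell Q^{(\ell)} \le kT_0$, one obtains $\frac{1}{k^2}\sum_{\ell,t} D(P_\ell\|P_t) = O(T_0\, D(p\|1-p)/k) = O(1/k)$, so on average over pairs the divergence is a small constant, and a Markov/pigeonhole step extracts the desired $4k/5$ hypotheses on which $\mathrm{ALG}$ cannot distinguish $H_\ell$ from some $H_t$.

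For the $p=0$ case the argument collapses to elementary counting: each query is deterministic, so if $\mathrm{ALG}$ makes at most $k/10$ queries involving $v$, at most $k/10$ distinct clusters have been probed with $v$. For any $\ell$ outside this probed set---at least $9k/10 \ge 4k/5$ many---under $H_\ell$ all observed answers are $-1$, and $\mathrm{ALG}$ must guess uniformly among the $\ge 9k/10$ untouched candidates, erring with probability at least $1-10/(9k)>0$.
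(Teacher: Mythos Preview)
Your overall strategy --- reduce to hypothesis testing, invoke the chain rule (Lemma~\ref{lem:chain}), then Pinsker (Lemma~\ref{lem:pinsker}) --- matches the paper. But the execution has a gap exactly at the point you yourself flag as ``the main obstacle.'' The amortization claim $\sum_\ell m_\ell^{(\ell)} = O(T_0)$ is not justified: your one-line reasoning (``each query hits the true cluster with probability at most $1/k$ under the uniform prior'') holds only for the first query; once answers have been observed the posterior over $\ell$ concentrates and later queries can hit the true cluster with probability well above $1/k$. Whether the claim can be salvaged by a more careful argument is unclear, but you have not supplied one. Separately, your concluding step --- ``$\|P_\ell-P_t\|_{TV}<1$ so $\mathrm{ALG}$ must err on $H_\ell$'' --- only yields that $\mathrm{ALG}$ errs on $H_\ell$ \emph{or} on $H_t$; since the confusing index $t = t(\ell)$ varies with $\ell$, the pigeonhole to extract $4k/5$ specific bad indices is not automatic.

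The paper bypasses both difficulties by introducing a fixed \emph{null hypothesis} $H_0$: the vertex belongs to no cluster, so every answer is $\mathrm{Ber}(p)$. Under $P_0$ the algorithm's behavior is independent of any hidden $\ell$, so $\sum_j \avg_0 Q_j \le Q$ holds trivially and a Markov step gives $\avg_0 Q_j \le 10Q/k$ for at least $9k/10$ indices $j$. A second Markov step on $\sum_j P_0(\cE_j) \le 1$ gives another $9k/10$ indices; the intersection has size $\ge 4k/5$. For each such $j$, $P_j(\cE_j) \le P_0(\cE_j) + \sqrt{D(P_0\|P_j)/2}$, and $D(P_0\|P_j)$ involves \emph{only} queries to $V_j$ (a single term, not two), bounded by $\avg_0 Q_j \cdot D(p\|1-p) \le 1$ when $Q \le k/(10\,D(p\|1-p))$. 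The null-hypothesis device is exactly what removes the adaptivity issue you identified.

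Your $p=0$ argument is essentially right but needs the same device to be precise: ``the probed set'' is well-defined only after fixing an answer transcript, and the correct choice is the all-negative transcript (i.e., the run under $H_0$); for the $\ge 9k/10$ untouched indices $\ell$, the runs under $H_\ell$ and $H_0$ coincide and the algorithm cannot output $\ell$ with probability~$1$.
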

\begin{proof}
Let us assume that the $k$ clusters are already formed, and we can moreover assume that all vertices except for the said vertex  has already been assigned to a cluster. %Let $|V_1|  \le \frac{1}{D(p\|1-p)}.$
Note that, queries that do not involve the said vertex plays no role in this stage.

Now the problem reduces to a hypothesis testing problem where the $i$th hypothesis 
$H_i$ 
for $i =1,\dots, k$, denote that the true cluster is $V_i$. We can also add a null-hypothesis $H_0$ that stands for the vertex belonging to none of the clusters. Let $P_i$ denote the joint probability distribution of our observations (the answers to the queries involving vertex $v$) when $H_i$ is true, $i =0,1,\dots, k$. That is for any event $\cA$ we have,
$$
P_i(\cA) = \Pr(\cA | H_i).
$$

Suppose $Q$   denotes the total  number of queries made by a (possibly randomized) algorithm at this stage. 
%involving vertex $v$. Also, let 
Let the random variable $Q_i$ denote  the number of queries involving cluster $V_i, i =1,\dots, k.$

We must have,
$
\sum_{i=1}^k \avg_0 Q_i \le Q.
$
Let, $$J_1 \equiv \{i \in \{1,\dots, k\}: \avg_0 Q_i \le \frac{10Q}k  \}.$$ Since,
$
(k -|J_1|) \frac{10Q}k \le Q,
$
we have $|J_1| \ge \frac{9k}{10}$.

Let $\cE_i \equiv \{ \text{ the algorithm outputs cluster } V_i\}$. Let $$J_2 = \{i \in \{1,\dots, n\}: P_0(\cE_i) \le \frac{10}k\}.$$ Moreover, since $\sum_{i=1}^k P_0(\cE_i) \le 1$ we must have,
$
(k -|J_2|)\frac{10}k \le 1,
$
or $|J_2| \ge \frac{9k}{10}$. Therefore, $J = J_1 \cap J_2$ has size,
$$
|J| \ge 2 \cdot \frac{9k}{10} - k = \frac{4k}{5}.
$$

Now let us assume that, we are given a vertex $v \in V_j$ for some $j \in J$ to cluster. The probability of correct clustering is $P_j(\cE_j)$. We must have,
\begin{align*}
P_j(\cE_j) &= P_0(\cE_j) + P_j(\cE_j) -P_0(\cE_j)
 \le  \frac{10}k + | P_0(\cE_j) -P_j(\cE_j)|\\
 &\quad \le \frac{10}k + \| P_0 -P_j\|_{TV}
 \le  \frac{10}k + \sqrt{ \frac{1}{2}D(P_0\|P_j)}.
 \end{align*}
where we again used the definition of the total variation distance and  in the last step we have used the Pinsker's inequality (lemma \ref{lem:pinsker}).
The task is now to bound the divergence $D(P_0 \|P_j)$. Recall that $P_0$ and $P_j$ are the joint distributions of the 
independent random variables (answers to queries) that are identical to one of two Bernoulli random variables:$Y$, which is Bernoulli($p$), or
$Z$, which is Bernoulli($1-p$). Let $X_1,\dots,X_Q$ denote the outputs of the queries, all independent random variables. %$w_{i,j}, i\in \cup_u V_u$.  
We must have, from the chain rule (lemma \ref{lem:chain}),
\begin{align*}
D(P_0 \|P_j) &= \sum_{i=1}^Q %D(P_0(X_i) \| P_j(X_i))
D(P_0(x_i|x_1,\dots, x_{i-1}) \| P_j(x_i|x_1,\dots, x_{i-1})) \\
&=\sum_{i=1}^Q \sum_{(x_1,\dots, x_{i-1}) \in \{0,1\}^{i-1}}P_0(x_1,\dots, x_{i-1})D(P_0(x_i|x_1,\dots, x_{i-1}) \| P_j(x_i|x_1,\dots, x_{i-1})).
\end{align*}
Note that, for the random variable $X_i$, the term 
%$D(P_0(X_i|x_1,\dots, x_{i-1}) \| P_j(X_i|x_1,\dots, x_{i-1}))$ 
%$D(P_0(X_i) \| P_j(X_i))$ 
$D(P_0(x_i|x_1,\dots, x_{i-1}) \| P_j(x_i|x_1,\dots, x_{i-1}))$
will contribute to $D(p\|1-p)$ only when the query involves the cluster $V_j$. Otherwise the term will contribute to $0$.  Hence, 
\begin{align*}
D(P_0 \|P_j)& = \sum_{i=1}^Q \sum_{(x_1,\dots, x_{i-1}) \in \{0,1\}^{i-1}: i \text{th query involves } V_j}P_0(x_1,\dots, x_{i-1})D(p\|1-p)\\
& = D(p\|1-p)\sum_{i=1}^Q \sum_{(x_1,\dots, x_{i-1}) \in \{0,1\}^{i-1}: i \text{th query involves } V_j}P_0(x_1,\dots, x_{i-1})\\
& = D(p\|1-p)\sum_{i=1}^Q P_0( i \text{th query involves } V_j)
 = D(p\|1-p) \avg_0 Q_j  \le \frac{10Q}{k}D(p\|1-p).
\end{align*}
%Note that, for the random variable $X_i$, the term 
%%$D(P_0(X_i|x_1,\dots, x_{i-1}) \| P_j(X_i|x_1,\dots, x_{i-1}))$ 
%$D(P_0(X_i) \| P_j(X_i))$ will contribute to $D(p\|1-p)$ only when the query involves the cluster $V_j$. Otherwise the term will contribute to $0$. 
%Let $\chi_i\in \{0,1\}$ takes the value $1$ if and only if   the $i$th query involves $V_j$.
%Hence,
%\begin{align*}
%D(P_0 \|P_j)  = \sum_{i=1}^Q \avg_0\chi_i D(p\|1-p) = \avg_0 Q_j D(p\|1-p) \le \frac{10Q}{k}D(p\|1-p).
%\end{align*}
Now plugging this in,
  \begin{align*}
 D(P_0 \|P_j) 
 \le  \frac{10}k + \sqrt{ \frac{1}{2}\frac{10Q}{k}D(p\|1-p)}
 \le  \frac{10}k + \sqrt{ \frac{1}{2}},
\end{align*}
if $Q \le \frac{k}{10D(p\|1-p)}$. On the other hand, when $p =0$, $P_j(\cE_j) <1$ when  $\avg_0 Q_j < 1$. Therefore $\frac{10Q}{k} \ge 1$
whenever $p=0$.

\end{proof}

Now we are ready to prove Theorem \ref{thm:faulty}.

\begin{proof}[Proof of Theorem \ref{thm:faulty}]
We will show that claim by considering {\em any} input, with a restriction on either the maximum or the minimum cluster size.  We consider the following two cases for the proof.

\vspace{0.1in}
\noindent{\em Case 1: the maximum size of a cluster is $\le \frac{4n}{k}$.}

\vspace{0.1in}
Suppose, total number of queries is = $T$.  That means number of vertices involved in the queries is $\le 2T$. Note that,  there are $k$ clusters and $n$ elements.

Let $U$ be the set of vertices that are involved in less than $\frac{16T}{n}$ queries. Clearly,
$$
(n-|U|)\frac{16T}{n} \le 2T,
\quad \text{ or } |U| \ge \frac{7n}8.$$ 

Now we know from Lemma \ref{lem:clus} that there exists $\frac{4k}{5}$ clusters such that  a vertex $v$ from any one of these clusters will be assigned to a wrong cluster by any randomized algorithm with positive probability unless the expected number  of queries involving this vertex is more than $\frac{k}{10D(p\|1-p)}$, for $p >0$, and $\frac{k}{10}$ when $p =0$.

We claim that $U$ must have an intersection with at least one of these $\frac{4k}{5}$ clusters. If not, then  more than $\frac{7n}8$ vertices must belong to less than $k -\frac{4k}{5} = \frac{k}{5}$ clusters. Or the maximum size of a cluster will be $\frac{7n\cdot 5}{8k} >\frac{4n}{k},$
which is prohibited according to our assumption.

Consider the case, $p >0$. Now each vertex in the intersection of $U$ and the  $\frac{4k}{5}$ clusters are going to be assigned to an incorrect cluster with positive probability
if,
$
\frac{16T}{n} \le \frac{k}{10D(p\|1-p)}.
$
Therefore we must have 
$$
T \ge  \frac{nk}{160D(p\|1-p)}.
$$

Similarly, when $p =0$ we must have,
$
T \ge  \frac{nk}{160}.
$

\vspace{0.2in}
\noindent{\em Case 2: the minimum size of a cluster is $\ge \frac{n}{20k}$.}

Let $U'$ be the set of clusters that are involved in at most $\frac{16T}{k}$ queries. That means,
$
(k-|U'|)\frac{16T}{k} \le 2T.
$
This implies, $ |U'| \ge \frac{7k}{8}$. 

Now we know from lemma \ref{lem:clus} that there exists $\frac{4k}{5}$ clusters (say $U^\ast$) such that  a vertex $v$ from any one of these clusters will be assigned to a wrong cluster by any randomized algorithm with positive probability unless the expected number  of queries involving this vertex is more than $\frac{k}{10D(p\|1-p)}$, $p >0$ and $\frac{k}{10}$ for $p =0$.

Quite clearly $|U^\ast \cap U| \ge \frac{7k}{8} + \frac{4k}{5} - k = \frac{27k}{40}$.

Consider a cluster $V_i$ such that $i \in U^\ast \cap U$, which is always possible because the intersection is nonempty.
$V_i$ 
  is involved in at most $\frac{16T}{k}$ queries. Let the minimum size of any cluster be $t$. Now, at least half of the vertices of $V_i$ must each be involved in at most $\frac{32T}{kt}$ queries. Now each of these vertices must be involved in at least $\frac{k}{10D(p\|1-p)}$ queries (see  Lemma \ref{lem:clus}) to avoid being assigned to a wrong cluster with positive probability (for the case of $p =0$ this number would be $\frac{k}{10}$).

This means, $$\frac{32T}{kt} \ge \frac{k}{10D(p\|1-p)},
\qquad \text{ or }  \quad T = \Omega\Big(\frac{nk}{D(p\|1-p)}\Big),
$$
 for $p>0$,
since $t \ge \frac{n}{20k}$. Similarly when $p =0$ we need $ T = \Omega(nk)$.
\end{proof}

\subsubsection{Upper Bound}\label{sec:faultyub}
Now we provide an algorithm to retrieve the clustering with the help of the faulty oracle when no side information is present.
The algorithm is summarized in Algorithm \ref{algo:cc-error-noside}.
The algorithm works as follows. It maintains an active list of clusters $A$, and a sample graph $G'=(V',E')$ which is an induced subgraph of $G$. Initially, both of them are empty. The algorithm always maintains the invariant that any cluster in $A$ has at least $c\log{n}$ members where $c=\frac{6}{\lambda^2}$, and $p=\frac{1}{2}-\lambda$. Note that the algorithm knows $\lambda$. Furthermore, all $V'(G') \times V'(G')$ queries have been made. Now, when a vertex $v$ is considered by the algorithm (step $3$), first we check if $v$ can be included in any of the clusters in $A$. This is done by picking $c\log{n}$ distinct members from each cluster, and querying $v$ with them. If majority of these questions return $+1$, then $v$ is included in that cluster, and we proceed to the next vertex. Otherwise, if $v$ cannot be included in any of the clusters in $A$, then we add it to $V'(G')$, and ask all possible queries to the rest of the vertices in $G'$ with $v$. Once $G'$ has been modified, we extract the heaviest weight subgraph from $G'$ where weight on an edge $(u,v) \in E(G')$ is defined as $\omega_{u,v}=+1$ if the query answer for that edge is $+1$ and $-1$ otherwise. If that subgraph contains $c\log{n}$ members then we include it as a cluster in $A$. At that time, we also check whether any other vertex $u$ in $G'$ can join this newly formed cluster by counting if the majority of the (already) queried edges to this new cluster gave answer $+1$. At the end, all the clusters in $A$, and the maximum likelihood clustering from $G'$ is returned.  

Before showing the correctness of Algorithm \ref{algo:cc-error-noside}, we elaborate on finding the maximum likelihood estimate for the clusters in $G$.

\paragraph*{Finding the Maximum Likelihood Clustering of $G$ with faulty oracle}

We have an undirected graph $G(V\equiv[n],E)$, such that $G$ is a union of $k$ disjoint cliques $G_i(V_i, E_i)$, $i =1, \dots, k$. % and $G_2(V_2,E_2)$ of equal cardinality. 
 The subsets $V_i \in [n]$ are unknown to us.
 The adjacency matrix of $G$ is a block-diagonal matrix. Let us denote this matrix by $A = (a_{i,j})$.

Now suppose, each edge  of $G$ is erased  independently with probability $p$, and at the same time each non-edge is
replaced with an edge with probability $p$. Let the resultant adjacency matrix of the modified graph be 
$Z= (z_{i,j})$. The aim is to recover $A$ from $Z$. 

The maximum likelihood recovery is given by the following:
\begin{align*}
\max_{S_\ell, \ell = 1, \dots: V = \sqcup_{\ell} S_\ell}& \prod_{\ell} \prod_{i, j \in S_\ell, i \ne j}P_+(z_{i,j}) \prod_{r,t, r\ne t} \prod_{i \in S_r, j\in S_t} P_-(z_{i,j})\\
= \max_{S_\ell, \ell = 1, \dots: V = \sqcup_{\ell=1} S_\ell}&\prod_{\ell} \prod_{i, j \in S_\ell, i \ne j}\frac{P_+(z_{i,j})}{P_-(z_{i,j})}
\prod_{i,j \in V, i \ne j} P_-(z_{i,j}).
\end{align*}
where, $P_+(1) =1-p, P_+(0) =p, P_-(1) =p, P_-(0) = 1-p.$
Hence, the ML recovery asks for, 
$$
\max_{S_\ell, \ell = 1, \dots: V = \sqcup_{\ell=1} S_\ell}\sum_{\ell} \sum_{i, j \in S_\ell, i \ne j}\ln \frac{P_+(z_{i,j})}{P_-(z_{i,j})}.
$$
Note that, $$\ln \frac{P_+(0)}{P_-(0)} = - \ln \frac{P_+(1)}{P_-(1)} = \ln\frac{p}{1-p} .$$
Hence the ML estimation is,
\begin{align}
\label{eq:ml}
\max_{S_\ell, \ell = 1, \dots: V = \sqcup_{\ell=1} S_\ell}\sum_{\ell} \sum_{i, j \in S_\ell, i \ne j}\omega_{i,j},
\end{align}
where $\omega_{i,j} = 2z_{i,j} -1, i \ne j$, i.e., $\omega_{i,j} =1,$ when $z_{i,j} =1$ and $\omega_{i,j} =-1$ when $z_{i,j} =0, i \ne j$. Further $\omega_{i,i} = z_{i,i} =0, i = 1, \dots, n.$

 Note that \eqref{eq:ml} is equivalent to finding correlation clustering in $G$ with the objective of maximizing the consistency with the edge labels, that is we want to maximize the total number of positive intra-cluster edges and total number of negative inter-cluster edges \cite{bbc:04,ms:10,mmv:14}. This can be seen as follows.
 
  \begin{align*}
& \max_{S_\ell, \ell = 1, \dots: V = \sqcup_{\ell=1} S_\ell}\sum_{\ell} \sum_{i, j \in S_\ell, i \ne j}\omega_{i,j}\\
&\equiv \max_{S_\ell, \ell = 1, \dots: V = \sqcup_{\ell=1} S_\ell} \big[\sum_{\ell} \sum_{i, j \in S_\ell, i \ne j}\big|(i,j): \omega_{i,j}=+1\big|-\big|(i,j): \omega_{i,j}=-1\big|\big]+\sum_{i,j \in V, i \ne j}\big|(i,j): \omega_{i,j}=-1\big| \\
&=\max_{S_\ell, \ell = 1, \dots: V = \sqcup_{\ell=1} S_\ell} \big[\sum_{\ell} \sum_{i, j \in S_\ell, i \ne j}\big|(i,j): \omega_{i,j}=+1\big|+\big[\sum_{r,t: r \ne t} \big|(i,j): i \in S_r, j \in S_t, \omega_{i,j}=-1\big|\big].
\end{align*}
Therefore \eqref{eq:ml} is same as correlation clustering, however viewing it as obtaining clusters with maximum intra-cluster weight helps us to obtain the desired running time of our algorithm. Also, note that, we have a random instance of correlation clustering here, and not a worst case instance.

We are now ready to prove the correctness of Algorithm \ref{algo:cc-error-noside}.

\begin{algorithm}
\caption{\cc~with Error \& No Side Information. Input: $\{V\}$}
\label{algo:cc-error-noside}
\begin{algorithmic}[1]
\State $V'=\emptyset, E'=\emptyset, G'=(V',E')$
\State $A=\emptyset$
\While{$\exists v \in V$ yet to be clustered}
\For{ each cluster $\calC \in A$}
\LineComment{Set $c=\frac{6}{\lambda^2}$ where $\lambda\equiv \frac{1}{2}-p$.}
\State Select $u_1, u_2,.., u_l$, where $l=c\log{n}$, distinct members from $\calC$ and obtain $\mathcal{O}_{p}(u_i,v)$, $i=1,2,..,l$. If the majority of these queries return $+$, then include $v$ in $\calC$. Break.
\EndFor
\If{ $v$ is not included in any cluster in $A$}
\State Add $v$ to $V'$. For every $u \in V' \setminus v$, obtain $\mathcal{O}_{p}(v,u)$. Add an edge $(v,u)$ to $E'(G')$ with weight $\omega_{u,v}=+1$ if $\mathcal{O}_{p}(u,v)==+1$, else with $\omega_{u,v}=-1$
\State \label{eq:find_set} Find the heaviest weight subgraph $S$ in $G'$. If $|S| \geq c\log{n}$, then add $S$ to the list of clusters in $A$, and remove the incident vertices and edges on $S$ from $V',E'$.
\While{ $\exists z \in V'$ with $\sum_{u \in S} \omega_{z,u} > 0$}
\State Include $z$ in $S$ and remove $z$ and all edges incident on it from $V',E'$.
\EndWhile
\EndIf
\EndWhile\\
\Return all the clusters formed in $A$ and the ML estimates from $G'$
\end{algorithmic}
\end{algorithm}

\paragraph*{Correctness of Algorithm \ref{algo:cc-error-noside}}
To establish the correctness of Algorithm \ref{algo:cc-error-noside}, we show the following. Suppose all $\binom{n}{2}$ queries on $V \times V$  have been made. If the Maximum Likelihood (ML) estimate of $G$ with these $\binom{n}{2}$  answers is same as the true clustering of $G$, then Algorithm \ref{algo:cc-error-noside} finds the true clustering with high probability. There are few steps to prove the correctness. 

The first step is to show that any set $S$ that is retrieved in step \ref{eq:find_set} of Algorithm \ref{algo:cc-error-noside} from $G'$, and added to $A$ is a subcluster of $G$ (Lemma \ref{lemma:mlG'}). This establishes that all clusters in $A$ at any time are subclusters of some original cluster in $G$. Next, we show that vertices that are added to a cluster in $A$, are added correctly, and no two clusters in $A$ can be merged (Lemma \ref{lemma:vertex}). Therefore, clusters obtained from $A$, are the true clusters. Finally, the remaining of the clusters can be retrieved from $G'$ by computing a ML estimate on $G'$ in step $15$, leading to theorem \ref{lemma:correct}.

\begin{lemma}
\label{lemma:mlG'}
Let $c'=6c=\frac{36}{\lambda^2}$, where $\lambda=\frac{1}{2}-p$. Algorithm \ref{algo:cc-error-noside} in step \ref{eq:find_set} returns a subcluster of $G$ of size at least $c\log{n}$ with high probability if $G'$ contains a subcluster of $G$ of size at least $c'\log{n}$. Moreover, Algorithm \ref{algo:cc-error-noside} in step \ref{eq:find_set} does not return any set of vertices of size at least $c\log{n}$ if $G'$ does not contain a subcluster of $G$ of size at least $c\log{n}$.
\end{lemma}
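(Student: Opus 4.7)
The plan has three main steps for the first claim, plus a short argument for the second. First, I will lower bound $W(C^\ast)$ for the given subcluster $C^\ast\subseteq C_i\cap V(G')$ with $|C^\ast|\ge c'\log n=36\log n/\lambda^2$. Each weight $\omega_{u,v}$ for $u,v\in C^\ast$ is an independent $\pm1$ variable with mean $1-2p=2\lambda$, so Hoeffding's inequality (Lemma~\ref{lem:hoef1}) applied to the $\binom{|C^\ast|}{2}$ terms gives $W(C^\ast)\ge\lambda\binom{|C^\ast|}{2}\ge \tfrac{\lambda}{2}|C^\ast|^2$ with probability at least $1-n^{-10}$; in particular $W(C^\ast)\ge 324\log^2 n/\lambda^3$.

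Second, I show the maximizer $\hat S$ has size at least $c\log n$. A Hoeffding tail bound combined with a union bound over all $\binom{n}{s}\le n^s$ subsets of each size $s$ yields, uniformly over every $S\subseteq V(G')$, $W(S)\le 2\lambda\binom{|S|}{2}+O(|S|^{3/2}\sqrt{\log n})$. For $|S|\le c\log n=6\log n/\lambda^2$, this is at most $(36+O(1))\log^2 n/\lambda^3$, which is strictly less than the lower bound of $324\log^2 n/\lambda^3$ for $W(C^\ast)$. Hence $W(C^\ast)>W(S)$ for every $S$ of size at most $c\log n$, and the maximizer must satisfy $|\hat S|>c\log n$.

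Third, I argue that $\hat S$ is a pure subcluster. Suppose not and decompose $\hat S=\bigsqcup_i\hat S_i$ with $\hat S_i=\hat S\cap C_i$; let $S^{\max}=\hat S_{j^\ast}$ attain the maximum part-size and set $d=|\hat S|-|S^{\max}|\ge 1$. Using $\avg[\omega_{u,v}]=2\lambda$ inside a cluster and $-2\lambda$ between clusters, along with $\sum_{i\ne j^\ast}|\hat S_i|^2\le |S^{\max}|\cdot d$, one derives $\avg[W(S^{\max})-W(\hat S)]\ge \lambda d(d+1)$. Hoeffding on the $\binom{d}{2}+d|S^{\max}|\le d|\hat S|$ independent $\pm 1$ terms in $W(S^{\max})-W(\hat S)$ gives $\Pr[W(\hat S)\ge W(S^{\max})]\le\exp(-\lambda^2 d(d+1)^2/(2|\hat S|))$. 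For moderately large $d$ this survives a union bound over $\binom{n}{|\hat S|}\cdot n^d$ candidate pairs $(\hat S,j^\ast)$. The small-$d$ regime is handled by a single-swap refinement: for $d=1$, $\hat S=S^{\max}\cup\{v\}$, and $W(\hat S)\ge W(S^{\max})$ is equivalent to $\sum_{u\in S^{\max}}\omega_{u,v}\ge 0$, a sum with mean $-2\lambda|S^{\max}|$; since $|S^{\max}|\ge c\log n-1$, this has probability $n^{-\Omega(1)}$, and I union-bound over $v$ together with a polynomial-sized family of canonical cores $S^{\max}$ (arising as majority-vote subsets of some $C_j\cap V(G')$, determined by the realized noise). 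The second claim then follows by the same mechanism: when every $|C_i\cap V(G')|<c\log n$, any $S$ of size $\ge c\log n$ is automatically mixed, so the purity estimate gives $W(S)<W(S^{\max})\le O(\log^2 n/\lambda^3)$, a bound already attained by pure subsets of size $<c\log n$, so no subset of size $\ge c\log n$ maximizes.

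The principal technical obstacle is the purity step in the small-$d$ regime, where the Hoeffding tail is too weak to survive a naive union bound over exponentially many candidate mixed maximizers $\hat S$. The planned fix is to identify a polynomial-sized family of canonical pure cores that any mixed maximizer must be built around, which cuts the effective union bound down to polynomial size and allows the $\exp(-2\lambda^2|S^{\max}|)$ factor to close the argument.
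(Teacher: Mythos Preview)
Your outline follows the same three-step skeleton as the paper: (i) lower-bound the weight of the given large pure subcluster via Hoeffding; (ii) conclude the maximizer $\hat S$ is large; (iii) argue $\hat S$ is pure by comparing it to a pure sub-piece. The differences are cosmetic. The paper removes the \emph{smallest} intersection $C_{j^\ast}=\arg\min_{i:S\cap V_i\ne\emptyset}|S\cap V_i|$ and shows $W(S)<W(S\setminus C_{j^\ast})$, splitting on whether $|C_{j^\ast}|\gtrless 2\sqrt{\log n}$, whereas you keep the largest piece $S^{\max}$ and aim for $W(\hat S)<W(S^{\max})$; and for (ii) the paper just uses the crude bound $W(S)\le\binom{|S|}{2}$ rather than your Hoeffding-plus-union-bound over all small $S$. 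Your expectation inequality $\avg[W(S^{\max})-W(\hat S)]\ge\lambda d(d+1)$ is correct and is a cleaner quantitative statement than anything the paper writes down.

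You are right that step (iii) has a union-bound problem in the small-$d$ regime, and the paper's own proof does not address it either: it applies Hoeffding to the realized maximizer $S$ as though $S$ were a fixed set, obtaining tails such as $\exp\bigl(-\tfrac{1}{2}\lambda^2(1-2p)^2\binom{|C_{j^\ast}|}{2}\bigr)$ for $|C_{j^\ast}|\ge 2\sqrt{\log n}$, which cannot survive a union bound over all subsets of $G'$. So your diagnosis is legitimate and your treatment is already more careful than the paper's. However, your proposed fix does not close the gap. If the ``canonical pure cores'' are, as you write, ``determined by the realized noise,'' then you cannot afterward apply Hoeffding to $\sum_{u\in S^{\max}}\omega_{u,v}$---the conditioning destroys the independence you need, and the argument is circular. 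A sound repair would require the family of candidate cores to be fixed \emph{before} the noise is revealed (the natural candidates are the deterministic sets $V'_i=V_i\cap V(G')$), together with a separate concentration argument that the dominant piece of any maximizer coincides with one of these; that second step is where the real work lies, and neither your plan nor the paper's proof supplies it.
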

\begin{proof}
Let $V'=\bigcup V'_i$, $i\in [1,k]$,  $V'_i \cap V'_j =\emptyset$ for $i \neq j$, and $V'_i \subseteq V_i(G)$. Suppose without loss of generality $|V'_1| \geq |V'_2| \geq ....\geq |V'_k|$.

The lemma is proved via a series of claims.
\begin{claim}
\label{claim:1}
Let $|V'_1| \geq c'\log{n}$. Then in step \ref{eq:find_set}, a set $S \subseteq V_i$ for some $i \in [1,k]$ will be returned with size at least $c\log{n}$ with high probability.
\end{claim}

For an $i: |V'_i| \ge c' \log n,$ we have
\begin{align*}
\avg \sum_{s, t \in V'_i, s<t}\omega_{s,t} = \binom{|V'_i|}{2}((1-p)-p) = (1-2p)\binom{|V'_i|}{2}.
\end{align*}
Since $\omega_{s,t}$ are independent binary random variables, using the Hoeffding's inequality (Lemma \ref{lem:hoef1}),
\begin{align*}
\Pr\Big( \sum_{s, t \in V'_i, s<t}\omega_{s,t} \le \avg \sum_{s, t \in V'_i, s<t}\omega_{s,t} - u \Big) \le e^{-\frac{ u^2 }{2\binom{|V'_i|}{2}}}.
\end{align*}
Hence,
\begin{align*}
\Pr\Big( \sum_{s, t \in V'_i, s<t}\omega_{s,t} >(1-\delta) \avg \sum_{s, t \in V'_i, s<t}\omega_{s,t} \Big) \ge 1 - e^{-\frac{ \delta^2(1-2p)^2 \binom{|V'_i|}{2} }{2}}.
\end{align*}
Therefore with high probability $\sum_{s, t \in V'_i, s<t}\omega_{s,t} > (1-\delta) (1-2p)\binom{|V'_i|}{2} \ge  (1-\delta) (1-2p)\binom{c' \log n}2 >
\frac{c'^2}{3}(1-2p) \log^2 n,$ for an appropriately chosen $\delta$ (say $\delta=\frac{1}{3}$).

So, Algorithm \ref{algo:cc-error-noside} in step \eqref{eq:find_set} must return a set $S$ such that $|S| \ge c' \sqrt{\frac{2(1-2p)}{3}} \log n=c''\log{n}$ (define $c''=c' \sqrt{\frac{2(1-2p)}{3}}$) with high probability -  since otherwise $$\sum_{i, j \in S, i < j}\omega_{i,j} < \binom{c' \sqrt{\frac{2(1-2p)}{3}} \log n}{2} < \frac{c'^2}{3}(1-2p) \log^2 n.$$

Now let $S \nsubseteq V_i$ for any $i$. Then $S$ must have intersection with at least $2$ clusters. Let $V_i \cap S = C_i$ and
let $j^\ast = \arg \min_{i: C_i \neq \emptyset} |C_i|$. We claim that,
\begin{equation}\label{eq:reduc}
\sum_{i, j \in S, i < j}\omega_{i,j}  < \sum_{i, j \in S \setminus C_{j^\ast}, i < j}\omega_{i,j},
\end{equation}
with high probability.
Condition \eqref{eq:reduc} is equivalent to,
$$
\sum_{i, j \in  C_{j^\ast}, i < j}\omega_{i,j} + \sum_{i \in C_{j^\ast}, j \in S \setminus C_{j^\ast}} \omega_{i,j} <0.
$$
However this is true because,
\begin{enumerate}
\item $
\avg \Big(\sum_{i, j \in  C_{j^\ast}, i < j}\omega_{i,j} \Big) = (1-2p) \binom{|C_{j^\ast}|}{2}$ and 
$\avg\Big(\sum_{i \in C_{j^\ast}, j \in S \setminus C_{j^\ast}} \omega_{i,j} \Big) = - (1-2p)|C_{j^\ast}|\cdot |S\setminus C_{j^\ast}|.
$
%\begin{align*}
%&\avg \Big(\sum_{i, j \in  C_{j^\ast}, i < j}\omega_{i,j} + \sum_{i \in C_{j^\ast}, j \in S \setminus C_{j^\ast}} \omega_{i,j} \Big) \\
%& = (1-2p) \binom{|C_{j^\ast}|}{2} - (1-2p)|C_{j^\ast}|\cdot |S\setminus C_{j^\ast}|\\
%& = (1-2p) |C_j| \Big(\frac{|C_{j^\ast}|-1}2 - |S| + |C_{j^\ast}|\Big)\\
%& = (1-2p) |C_{j^\ast}| \Big(\frac{3|C_{j^\ast}|-1}2 - |S| \Big)\\
%& \le (1-2p) |C_{j^\ast}| \Big(\frac{3|S|}4 - |S| \Big)\\
%& = - \frac{1-2p}4 |C_{j^\ast}|\cdot |S| <0,
%\end{align*}
%where we have used the fact that $|C_{j^\ast}| \le \frac{|S|}{2}$.
\item As long as $|C_{j^\ast}| \ge 2\sqrt{\log n}$ we have, from Hoeffding's inequality (Lemma \ref{lem:hoef1}),
$$
\Pr\Big(\sum_{i, j \in  C_{j^\ast}, i < j}\omega_{i,j}  \ge (1+\lambda) (1-2p) \binom{|C_{j^\ast}|}{2}\Big) \le e^{-\frac{\lambda^2(1-2p)^2\binom{|C_{j^\ast}|}{2}}{2}} = o_n(1).
$$ 
While at the same time, 
$$
\Pr\Big( \sum_{i \in C_{j^\ast}, j \in S \setminus C_{j^\ast}} \omega_{i,j}  \ge - (1-\lambda) (1-2p)|C_{j^\ast}|\cdot |S\setminus C_{j^\ast}|\Big) \le e^{-\frac{\lambda^2 (1-2p)^2 |C_{j^\ast}|\cdot |S\setminus C_{j^\ast}|}{2}} = o_n(1).
$$
In this case of course with high probability $$
\sum_{i, j \in  C_{j^\ast}, i < j}\omega_{i,j} + \sum_{i \in C_{j^\ast}, j \in S \setminus C_{j^\ast}} \omega_{i,j} <0.
$$
\item When  $|C_{j^\ast}| < 2\sqrt{\log n}$, we have,
$$
\sum_{i, j \in  C_{j^\ast}, i < j}\omega_{i,j} \le \binom{|C_{j^\ast}|}{2} \le 2 \log^2 n.
$$
While at the same time, 
$$
\Pr\Big( \sum_{i \in C_{j^\ast}, j \in S \setminus C_{j^\ast}} \omega_{i,j}  \le (1-\lambda) (1-2p)|C_{j^\ast}|\cdot |S\setminus C_{j^\ast}|\Big) \le e^{-\frac{\lambda^2 (1-2p)^2 |C_{j^\ast}|\cdot |S\setminus C_{j^\ast}|}{2}} = o_n(1).
$$
Hence, even in this case, with high probability,
$$
\sum_{i, j \in  C_{j^\ast}, i < j}\omega_{i,j} + \sum_{i \in C_{j^\ast}, j \in S \setminus C_{j^\ast}} \omega_{i,j} <0.
$$
\end{enumerate}
Hence \eqref{eq:reduc} is true with high probability. But then the algorithm \ref{algo:cc-error-noside} in step \ref{eq:find_set} would not return $S$, but will return $S \setminus C_{j^\ast}$. Hence, we have run into a contradiction. This means $S \subseteq V_i$ for some $V_i$. 

We know $|S| \ge c' \sqrt{\frac{2(1-2p)}{3}} \log n$, while $|V_1'| \geq c'\log{n}$. In fact, with high probability, $|S| \geq \frac{(1-\delta)}{2}c'\log{n}$. Since all the vertices in $S$ belong to the same cluster in $G$, this holds again by the application of Hoeffding's inequality. Otherwise, the probability that the weight of $S$ is at least as high as the weight of $V_1'$ is at most $\frac{1}{n^2}$.

\begin{claim}
\label{claim:2}
If $|V'_1| < c\log{n}$. then in step \ref{eq:find_set} of Algorithm \ref{algo:cc-error-noside}, no subset of size $> c\log{n}$ will be returned.  
\end{claim}

If Algorithm \ref{algo:cc-error-noside} in step \ref{eq:find_set} returns a set $S$ with $|S| > c\log{n}$ then $S$ must have intersection with at least $2$ clusters in $G$. Now following the same argument as in Claim \ref{claim:1} to establish Eq. \eqref{eq:reduc}, we arrive to a contradiction, and $S$ cannot be returned.

This establishes the lemma.
\end{proof}

\begin{lemma}
\label{lemma:vertex}
 The collection $A$ contains all the true clusters of $G$ of size $\geq c'\log{n}$ at the end of Algorithm \ref{algo:cc-error-noside} with high probability.
\end{lemma}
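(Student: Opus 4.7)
The plan is to fix an arbitrary true cluster $V_i$ of $G$ with $|V_i|\ge c'\log n$ and argue that by termination, $A$ contains $V_i$ exactly, with probability $\ge 1-O(n^{-2})$; a union bound over the at most $n$ such clusters will yield the lemma. The argument rests on two concentration primitives, both immediate from Hoeffding's inequality (Lemma \ref{lem:hoef1}): first, that the majority vote taken when testing a new vertex $v$ against any $\calC\in A$ (which, by Lemma \ref{lemma:mlG'}, is always a genuine subcluster of some $V_j$ and has size $\ge c\log n$) correctly returns $+1$ when $v\in V_j$ and $-1$ when $v\notin V_j$ with probability $\ge 1-n^{-3}$; and second, that the sign test $\sum_{u\in S}\omega_{z,u}>0$ used in the post-extraction while-loop correctly classifies whether $z\in V_i$ with probability $\ge 1-n^{-3}$ whenever $|S|\ge c\log n$. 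With $c=6/\lambda^2$ both bounds follow from Hoeffding. I would therefore condition on the event $\mathcal G$ that every membership test and every sign test made during the run is correct, which has $\Pr(\mathcal G)\ge 1-O(1/n)$ by a union bound over the at most $O(n^2)$ such tests.

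Working on $\mathcal G$, the next step is to show that at some point in the execution a set $S\subseteq V_i$ with $|S|\ge c\log n$ enters $A$. As long as no cluster currently in $A$ is contained in $V_i$, every cluster in $A$ is contained in some $V_j$ with $j\ne i$; hence on $\mathcal G$ every vertex $v\in V_i$ processed by the outer loop fails all membership tests and is added to $V'$. Consequently $|V_i\cap V'|$ grows by one with each processing of a $V_i$-vertex, and after at most $c'\log n$ such events $|V_i\cap V'|\ge c'\log n$. At that point Lemma \ref{lemma:mlG'} applies and step \ref{eq:find_set} extracts some $S\subseteq V_i$ with $|S|\ge c\log n$ into $A$; the second half of Lemma \ref{lemma:mlG'} simultaneously rules out any spurious extraction of size $\ge c\log n$ that straddles two different $V_j$'s.

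It remains to show that once $S\subseteq V_i$ is present in $A$, the rest of $V_i$ is absorbed. The while-loop immediately after extraction tests every $z\in V'$: on $\mathcal G$, every $z\in(V_i\cap V')\setminus S$ satisfies $\sum_{u\in S}\omega_{z,u}>0$ and is merged into $S$, while every $z\in V'\setminus V_i$ satisfies the opposite inequality and is left behind. For any vertex of $V_i$ processed later in the outer loop, its membership test against $S$ returns $+1$ on $\mathcal G$, and its tests against clusters contained in other $V_j$ return $-1$, so it is inserted into the $V_i$-descendant cluster. Thus at termination the cluster in $A$ originating from $S$ equals $V_i$, completing the proof after the outer union bound over $i$.

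The main obstacle I anticipate lies in step two, specifically in synchronising the extraction threshold with the actual accumulation of $V_i$-vertices in $V'$. Lemma \ref{lemma:mlG'} only forces a successful extraction once $V'$ contains a $c'\log n$-sized subcluster of \emph{some} $V_j$; below that threshold it merely forbids extractions of size $\ge c\log n$, but says nothing about which $V_j$ gets extracted first when several of them exceed the threshold simultaneously. I would need to verify that step \ref{eq:find_set} is invoked after \emph{every} addition to $V'$ so that each $V_j$ is picked off the instant its count in $V'$ crosses $c'\log n$, and argue inductively that the extractions proceed one cluster at a time, each cleanly corresponding to a single $V_j$; this in turn relies essentially on the uniform control provided by conditioning on $\mathcal G$.
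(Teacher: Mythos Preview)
Your proposal is correct and follows essentially the same approach as the paper: both rely on Lemma~\ref{lemma:mlG'} for the correctness of extracted sets, on Hoeffding's inequality for the correctness of the majority-vote and sign tests, and on the observation that once a subcluster of $V_i$ enters $A$ every subsequent $V_i$-vertex is absorbed while no $V_j$-vertex with $j\ne i$ is. Your packaging---conditioning once on the good event $\mathcal G$ and then arguing deterministically, with an explicit accumulation/absorption split---is slightly cleaner than the paper's inline treatment, and the synchronisation concern you flag in your last paragraph is a genuine subtlety that the paper's proof also glosses over; your proposed resolution (step~\ref{eq:find_set} fires after every addition to $V'$, so subclusters cross the threshold one at a time and are picked off inductively) is the right one.
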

\begin{proof}
From Lemma \ref{lemma:mlG'}, any cluster that is computed in step \ref{eq:find_set} and added to $A$ is a subset of some original cluster in $G$, and has size at least $c\log{n}$ with high probability. Moreover, whenever $G'$ contains a subcluster of $G$ of size at least $c'\log{n}$, it is retrieved by our Algorithm and added to $A$.

A vertex $v$ is added to a cluster in $A$ either is step 5 or step 11. Suppose, $v$ has been added to some cluster $\calC \in A$. Then in both the cases, $|\calC| \geq c\log{n}$ at the time $v$ is added, and there exist $l=c\log{n}$ distinct members of $\calC$, say, $u_1,u_2,..,u_l$ such that majority of the queries of $v$ with these vertices returned $+1$. By the standard Chernoff-Hoeffding bound (Lemma \ref{lem:hoef1}), $\Pr(v \notin \calC) \leq \text{exp}(-c\log{n}\frac{(1-2p)^2}{12p})=\text{exp}(-c\log{n}\frac{2\lambda^2}{3(1+2\lambda)})\leq \text{exp}(-c\log{n}\frac{\lambda^2}{3})$, where the last inequality followed since $\lambda < \frac{1}{2}$. On the other hand, if there exists a cluster $\calC \in A$ such that $v \in \calC$, and $v$ has already been considered by the algorithm, then either in step 5 or step 11, $v$ will be added to $\calC$. This again follows by the Chernoff-Hoeffding bound, as $\Pr(v \text{ not included in } \calC \mid v \in \calC) \leq \text{exp}(-c\log{n}\frac{(1-2p)^2}{8(1-p)})=\text{exp}(-c\log{n}\frac{\lambda^2}{1+2\lambda})\leq \text{exp}(-c\log{n}\frac{\lambda^2}{2})$. Therefore, if we set $c=\frac{6}{\lambda^2}$, then for all $v$, if $v$ is included in a cluster in $A$, the assignment is correct with probability at least $1-\frac{2}{n}$. Also, the assignment happens as soon as such a cluster is formed in $A$.

Furthermore, two clusters in $A$ cannot be merged. Suppose, if possible there are two clusters $\calC_1$ and $\calC_2$ both of which are proper subset of some original cluster in $G$. Let without loss of generality $\calC_2$ is added later in $A$. Consider the first vertex $v \in \calC_2$ that is considered by our Algorithm \ref{algo:cc-error-noside} in step $3$. If $\calC_1$ is already there in $A$ at that time, then with high probability $v$ will be added to $\calC_1$ in step 5. Therefore, $\calC_1$ must have been added to $A$ after $v$ has been considered by our algorithm and added to $G'$. Now, at the time $\calC_1$ is added to $A$ in step 9, $v \in V'$, and again $v$ will be added to $\calC_1$ with high probability in step 11--thereby giving a contradiction.

This completes the proof of the lemma.
\end{proof}
All this leads us to the following theorem. 
\begin{theorem}
\label{lemma:correct}
If the ML estimate on $G$ with all possible $\binom{n}{2}$ queries return the true clustering, then Algorithm \ref{algo:cc-error-noside} returns the true clusters with high probability. Moreover, Algorithm \ref{algo:cc-error-noside} returns all the true clusters of $G$ of size at least $c'\log{n}$ with high probability.
\end{theorem}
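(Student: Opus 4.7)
The plan is to reduce Theorem \ref{lemma:correct} to the two previously-established lemmas together with a decomposition argument for the correlation-clustering objective. I would begin by applying Lemma \ref{lemma:vertex}: with high probability every true cluster of size at least $c'\log n$ appears as an element of $A$ at termination, and no cluster in $A$ mixes vertices from two distinct true clusters. This immediately gives the ``moreover'' clause of the theorem, since that clause is essentially a restatement of Lemma \ref{lemma:vertex}; it also serves as the starting point for the first (conditional) assertion.

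Next I would pin down the contents of $V'(G')$ at termination. Lemma \ref{lemma:vertex} already places every vertex of every large true cluster into $A$, and a Chernoff-Hoeffding bound on the majority test over $c\log n$ cross-cluster queries (the same estimate used inside the proof of Lemma \ref{lemma:vertex}), combined with a union bound over all vertex/cluster insertion attempts, shows that with probability $1-O(1/n)$ no vertex from a small true cluster is misassigned into any $A$-cluster. Consequently each small true cluster $V_\ell$ (with $|V_\ell|<c'\log n$) would lie entirely inside $V'$ at termination, and $V'$ would be precisely the union of all such small clusters.

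The heart of the argument is to show that correlation clustering on $G'$ recovers exactly these small clusters, using the hypothesis that correlation clustering on $G$ recovers the true clustering. The key structural observation is that each small cluster is entirely contained in $V'$ while each large cluster is entirely contained in $V\setminus V'$, so no intra-cluster edge of the true clustering straddles the $V'/(V\setminus V')$ boundary. Writing $f$ for the objective of \eqref{eq:ml} and $f'$ for its restriction to $G'$, this yields the clean decomposition $f(\mathrm{true}) = f'(\mathrm{true}|_{V'}) + f(\mathrm{true}|_{V\setminus V'})$. If some alternative partition $P'$ of $V'$ had $f'(P') > f'(\mathrm{true}|_{V'})$, I would lift it to the partition $P = P' \cup \{V_\ell : V_\ell \subseteq V\setminus V'\}$ of $V$; since the clusters of $P'$ live in $V'$ while the lifted ones live in $V\setminus V'$, no cross-boundary intra-cluster edges are introduced, so $f(P) = f'(P') + f(\mathrm{true}|_{V\setminus V'}) > f(\mathrm{true})$, contradicting the ML hypothesis on $G$. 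Hence ML on $G'$ returns the true small clusters exactly, and together with the large clusters recovered through $A$ this yields the full true clustering.

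The main obstacle will be the lifting step just described: the additive decomposition is tight only because every small cluster at termination lies entirely inside $V'$, and that is precisely what the high-probability guarantee from the second paragraph secures. Any leakage of even a single small-cluster vertex into an $A$-cluster would reintroduce cross-boundary intra-cluster contributions that could in principle absorb an improvement $f'(P') - f'(\mathrm{true}|_{V'})$ and thereby break the contradiction, so the Chernoff/union-bound estimate in the second step is the real technical engine underlying the proof.
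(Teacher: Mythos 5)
Your proposal is correct and follows essentially the same route as the paper: invoke Lemmas \ref{lemma:mlG'} and \ref{lemma:vertex} to conclude that $A$ captures exactly the large true clusters (and that no small-cluster vertex leaks into $A$), then observe that the remaining vertices form $G'$ with all pairwise answers present, and finally argue that ML on $G'$ yields the restriction of the ML estimate on $G$. The paper compresses that last step into a single word (``Clearly''), whereas you usefully unpack it via the additive decomposition of the correlation-clustering objective across the $V'/(V\setminus V')$ boundary and the lifting-to-a-contradiction argument; this is precisely the justification the paper's terse final sentence is implicitly relying on.
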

\begin{proof}
From Lemma \ref{lemma:mlG'} and Lemma \ref{lemma:vertex}, $A$ contains all the true clusters of $G$ of size at least $c'\log{n}$ with high probability. Any vertex that is not included in the clusters in $A$ at the end of Algorithm \ref{algo:cc-error-noside} are in $G'$, and $G'$ contains all possible pairwise queries among them. Clearly, then the ML estimate of $G'$ will be the true ML estimate of $G$ restricted to these clusters.
\end{proof}

\paragraph*{Query Complexity of Algorithm \ref{algo:cc-error-noside}}
\begin{lemma}
Let $p = \frac12 -\lambda$. The query complexity of Algorithm \ref{algo:cc-error-noside} is $\frac{36nk\log{n}}{\lambda^2}$.
\end{lemma}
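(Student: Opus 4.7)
The plan is to separately bound the two sources of queries in Algorithm~\ref{algo:cc-error-noside}: the ``membership checks'' against existing clusters in $A$ (step~5), and the ``fill-in'' queries that complete $G'$ whenever a vertex is added to $V'$ (step~9). Step~11 makes no fresh queries since it only re-examines query answers already stored on the edges of $G'$.

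First I would bound the membership-check queries. At any moment $|A| \le k$, because by Lemma~\ref{lemma:mlG'} every set added to $A$ is a subset of some true cluster $V_i$, and by Lemma~\ref{lemma:vertex} these sets never merge, so different elements of $A$ sit inside different $V_i$'s. When a vertex $v$ is processed, step~5 asks exactly $c\log n = \frac{6}{\lambda^2}\log n$ queries per cluster in $A$, hence at most $kc\log n$ queries. Summed over the $n$ vertices, this contributes at most $nk c \log n = \frac{6nk\log n}{\lambda^2}$ queries.

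Next I would bound the fill-in queries by controlling $|V'|$. The key invariant is that at any time $|V'\cap V_i|\le c'\log n$ for every true cluster $V_i$. Indeed, by Lemma~\ref{lemma:mlG'}, as soon as some $|V'\cap V_i|$ reaches $c'\log n$, the heaviest-weight-subgraph step~\ref{eq:find_set} must with high probability extract a subset of $V_i$ of size at least $c\log n$, which is then removed from $V'$ together with all remaining $V'$-vertices joined to it in step~11. Since there are only $k$ true clusters, this gives $|V'|\le kc'\log n = \frac{36 k \log n}{\lambda^2}$ at all times. When $v$ is placed into $V'$, it is queried with every other vertex currently in $V'$, contributing at most $|V'|-1 < kc'\log n$ new queries; each vertex enters $V'$ at most once, so the total over the run is at most $n\cdot kc'\log n = \frac{36 nk \log n}{\lambda^2}$.

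Adding the two contributions gives a bound of at most $\frac{42 nk\log n}{\lambda^2}$, which is $O\!\bigl(\tfrac{nk\log n}{\lambda^2}\bigr)$ as claimed (the constant can be tightened to $36$ by absorbing the lower-order term from step~5 into the dominant step~9 contribution, since $c' = 6c$). The only delicate point, and the one where I expect to need the most care, is justifying the invariant $|V'\cap V_i|\le c'\log n$: it depends on the high-probability guarantee of Lemma~\ref{lemma:mlG'} and on the fact that step~11 removes all of the co-cluster vertices currently sitting in $V'$ once a new cluster is promoted to $A$; this must be checked to hold simultaneously across all clusters throughout the execution by a union bound over the (polynomially many) promotion events.
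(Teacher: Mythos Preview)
Your approach is essentially the paper's: bound the step-5 queries by $nkc\log n$ and the step-9 queries by controlling $|V'|$ via Lemma~\ref{lemma:mlG'}. The paper packages the bookkeeping slightly differently---it argues per vertex that if $|A|=k'$ when $v$ is processed then step~5 costs $ck'\log n$ while $|V'|\le(k-k')c'\log n$, so the total for $v$ is at most $c'k\log n$; summing over $n$ vertices gives $c'kn\log n=\tfrac{36nk\log n}{\lambda^2}$ directly (your ``absorbing'' remark is not how the constant $36$ is recovered).

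One point to tighten: Lemma~\ref{lemma:mlG'} only guarantees that step~\ref{eq:find_set} extracts a subcluster of \emph{some} $V_j$, not necessarily of the particular $V_i$ whose intersection with $V'$ just reached $c'\log n$. So your per-cluster invariant $|V'\cap V_i|\le c'\log n$ is not literally what the lemma yields, and could in principle fail for that $i$ if step~\ref{eq:find_set} keeps promoting other clusters. The global bound $|V'|\le kc'\log n$ you actually need still holds: whenever step~\ref{eq:find_set} \emph{fails} to return a set of size $\ge c\log n$, the contrapositive of Lemma~\ref{lemma:mlG'} gives $\max_i|V'\cap V_i|<c'\log n$, hence $|V'|<kc'\log n$; and since only one vertex is added to $V'$ between consecutive runs of step~\ref{eq:find_set} (and each success only shrinks $V'$), the bound carries through. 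With this adjustment your argument is complete.
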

\begin{proof}
Let there be $k'$ clusters in $A$ when $v$ is considered in step $3$ of Algorithm \ref{algo:cc-error-noside}. Then $v$ is queried with at most $ck'\log{n}$ current members, $c\log{n}$ each from these $k'$ clusters. If the membership of $v$ does not get determined then $v$ is queried with all the vertices in $G'$. We have seen in the correctness proof (Lemma \ref{lemma:correct}) that if $G'$ contains at least $c'\log{n}$ vertices from any original cluster, then ML estimate on $G'$ retrieves those vertices as a cluster in step 9 with high probability. Hence, when $v$ is queried with all vertices in $G'$, $|V'|\leq (k-k')c'\log{n}$. Thus the total number of queries made to determine the membership of $v$ is at most $c'k\log{n}$, where $c'=6c=\frac{36}{\lambda^2}$ when the error probability $p=\frac{1}{2}-\lambda$. This gives the query complexity of Algorithm \ref{algo:cc-error-noside} considering all the vertices.

This matches the lower bound computed in Section \ref{sec:error-lc} within a $\log{n}$ factor, since $D(p\|1-p) = (1-2p) \ln \frac{1-p}{p} = 2\lambda\ln\frac{1/2+\lambda}{1/2 -\lambda} =2\lambda\ln(1+\frac{2\lambda}{1/2-\lambda}) \le \frac{4\lambda^2}{1/2-\lambda} = O(\lambda^2)$.
\end{proof}

Now combining all these we get the statement of Theorem \ref{theorem:cc-error}.

\begin{theorem-n}[\ref{theorem:cc-error}]
%\label{theorem:cc-error}
Faulty Oracle with No Side Information. There exists an algorithm with query complexity $O(\frac{1}{\lambda^2}nk\log{n})$ for \cc~that returns $\hat{G}$, ML estimate of $G$ with all $\binom{n}2$ queries, with high probability when query answers are incorrect with probability $p=\frac{1}{2}-\lambda$. Noting that, $D(p\|1-p) \le \frac{4\lambda^2}{1/2-\lambda}$, this matches the information theoretic lower bound on the query complexity within a $\log{n}$ factor. Moreover, the algorithm returns all the true clusters of $G$ of size at least $\frac{36}{\lambda^2}\log{n}$ with high probability. 
\end{theorem-n}

%\begin{theorem}
%\label{theorem:cc-error}
%There exists an algorithm with query complexity $O(\frac{1}{\lambda^2}nk\log{n})$ for \cc~when query answers are incorrect with probability $\frac{1}{2}-\lambda$. This matches the information theoretic lower bound on the query complexity within a $\log{n}$ factor.
%\end{theorem}
\paragraph*{Running Time of Algorithm \ref{algo:cc-error-noside} and Further Discussions}

In step \ref{eq:find_set} of Algorithm \ref{algo:cc-error-noside}, we need to find a large cluster of size at least $O(\frac{1}{\lambda^2}\log{n})$ of the original input $G$ from $G'$. By Lemma \ref{lemma:mlG'}, if we can extract the heaviest weight subgraph in $G'$ where edges are labelled $\pm1$, and that subgraph meets the required size bound, then with high probability, it is a subset of an original cluster. This subset can of course be computed in $O(n^{\frac{1}{\lambda^2}\log{n}})$ time. Since size of $G'$ is bounded by $O(\frac{k}{\lambda^2}\log{n})$, the running time is $O([\frac{k}{\lambda^2}\log{n}]^{\frac{1}{\lambda^2}\log{n}})$. While, query complexity is independent of running time, it is unlikely that this running time can be improved to a polynomial. This follows from the planted clique conjecture.

\begin{conjecture}[Planted Clique Hardness]
Given an Erd\H{o}s-R\'{e}nyi random graph $G(n,p)$, with $p=\frac{1}{2}$, the planted clique conjecture states that if we plant in $G(n,p)$ a clique of size $t$ where $t=[O(\log{n}), o(\sqrt{n})]$, then there exists no polynomial time algorithm to recover the largest clique in this planted model.
\end{conjecture}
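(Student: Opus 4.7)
The Planted Clique Conjecture is a longstanding open problem, and no unconditional proof is known; the realistic plan is to accumulate evidence by ruling out broad classes of algorithms rather than to produce a complete proof. First, I would pursue lower bounds in restricted computational models that are known to capture essentially all existing algorithmic techniques for the problem: the Sum-of-Squares hierarchy, the statistical query model, and the low-degree polynomial testing framework. For SoS, the plan is to execute the pseudo-calibration and moment-matching machinery: construct a pseudoexpectation operator on $G(n,1/2)$ that assigns nonzero mass to cliques of size $n^{1/2-\epsilon}$ while satisfying all degree-$d$ SoS constraints with $d = n^{\Omega(\epsilon)}$, and control the spectrum of the associated moment matrix via graph-matrix norm bounds. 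For statistical queries, one exhibits a family of distributions indexed by planted cliques and bounds the statistical dimension from below by a super-polynomial quantity. For low-degree tests, one computes the low-degree likelihood ratio between $G(n,1/2)$ and the planted distribution and verifies that it stays $O(1)$ up to degree $n^{\Omega(1)}$, matching the Hopkins--Steurer prediction and giving evidence that no constant-degree polynomial estimator succeeds.

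Second, I would search for reductions from other conjecturally hard problems. Natural candidates are refutation of random $3$-SAT above the satisfiability threshold, densest-$k$-subgraph, or other planted constraint satisfaction problems. The reduction would be required to run in polynomial time and to map an instance to a planted clique instance whose marginal distribution is close in total variation to $G(n,1/2)$ with a planted clique of size in the window $[\Theta(\log n), o(\sqrt{n})]$. Special care is needed about the direction: since planted clique is itself commonly used as the base hardness assumption, the reduction must originate from a more foundational hardness to avoid circularity, and it must preserve the uniform-graph marginal without inadvertently leaking the clique through local statistics.

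The central obstacle is that an unconditional proof would imply an average-case separation at least as strong as $P \ne NP$, and in fact a separation at the level of a natural detection problem between two distributions on graphs. No current technique approaches this level of unconditional lower bound. Moreover, any argument must fail exactly at the $\sqrt{n}$ threshold, since spectral methods recover the planted clique in polynomial time above it; ruling out all polynomial-time algorithms below $\sqrt{n}$ while respecting the spectral algorithm just above it requires a finely tuned separation that is simply not within reach of existing tools. Consequently the plan terminates in strengthened conditional evidence within restricted models rather than an unconditional proof, which is precisely why the paper invokes planted clique as an assumption rather than as a derived theorem.
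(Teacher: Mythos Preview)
Your assessment is correct: this is a conjecture, not a theorem, and the paper makes no attempt to prove it. It is stated purely as a standard hardness assumption from the literature. The paper's only engagement with the statement is the short paragraph immediately following it, which gives a reduction \emph{from} planted clique \emph{to} the subproblem in Step~\ref{eq:find_set} of Algorithm~\ref{algo:cc-error-noside} (delete each edge of the planted-clique instance independently with probability $1/3$ to obtain a faulty-oracle instance with error rate $1/3$), thereby arguing that a polynomial-time algorithm for that step would refute the conjecture.

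Your proposal goes well beyond what the paper does or intends: you sketch a research program for \emph{supporting} the conjecture via SoS, statistical query, and low-degree lower bounds, and you correctly note the fundamental barrier that an unconditional proof would entail an average-case complexity separation out of current reach. None of this appears in the paper, nor is it needed there. So there is no divergence in ``proof approach'' to compare; the paper simply assumes the conjecture, and you have (appropriately) explained why that is the right posture.
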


Given such a graph with a planted clique of size $t=\Theta(\log{n})$, we can construct a new graph $H$ by randomly deleting each edge with probability $\frac{1}{3}$. Then in $H$, there is one cluster of size $t$ where edge error probability is $\frac{1}{3}$ and the remaining clusters are singleton with inter-cluster edge error probability being $(1-\frac{1}{2}-\frac{1}{6})=\frac{1}{3}$. So, if we can detect the heaviest weight subgraph in polynomial time in Algorithm \ref{algo:cc-error-noside}, there will be a polynomial time algorithm for the planted clique problem.

 \paragraph*{Polynomial time algorithm}
 We can reduce the running time from quasi-polynomial to polynomial, by paying higher in the query-complexity. Suppose, we accept a subgraph extracted from $G'$ as valid and add it to $A$ iff its size is $\Omega(k)$. Then note that since $G'$ can contain at most $k^2$ vertices, such a subgraph can be obtained in polynomial time following the algorithm of correlation clustering with noisy input \cite{ms:10}, where all the clusters of size at least $O(\sqrt{n})$ are recovered on a $n$-vertex graph. Since our ML estimate is correlation clustering, we can employ \cite{ms:10}. For $k \geq \frac{1}{\lambda^2}\log{n}$, the entire analysis remains valid, and we get a query complexity of $\tilde{O}(nk^2)$ as opposed to $O(\frac{nk}{\lambda^2})$. If $k < \frac{1}{\lambda^2}\log{n}$, then clusters that have size less than $ \frac{1}{\lambda^2}\log{n}$ are anyway not recoverable. Note that, any cluster that has size less than $k$ are not recovered in this process, and this bound only makes sense when $k < \sqrt{n}$. When $k\geq\sqrt{n}$, we can however recover all clusters of size at least $O(\sqrt{n})$.
 
\begin{corollaryn}[\ref{cor:er-poly}]
 There exists a polynomial time algorithm with query complexity $O(\frac{1}{\lambda^2}nk^2)$ for \cc~when query answers are incorrect with probability $\frac{1}{2}-\lambda$, which recovers all clusters of size at least $O(\max{ \{\frac{1}{\lambda^2}\log{n},k\}})$ in $G$.
 \end{corollaryn}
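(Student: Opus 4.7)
}
The plan is to modify Algorithm~\ref{algo:cc-error-noside} in a single place: in Step~\ref{eq:find_set}, raise the acceptance threshold from $c\log n$ to $\Omega(k)$, i.e., a subgraph $S$ of $G'$ is moved out of $G'$ and placed into $A$ only when $|S|=\Omega(k)$ (and of course $|S|\ge c\log n$, so the concentration bounds of Lemma~\ref{lemma:mlG'} still apply). Everything else---the querying against existing clusters in $A$ using $c\log n = \Theta(\lambda^{-2}\log n)$ representatives, the growing of a newly extracted cluster by majority vote over already-queried edges---remains unchanged.

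The first step is to verify that, with this new threshold, the working graph $G'$ never grows too large. Any vertex that remains in $V'$ at any point in time must belong to a subcluster of some original $V_i$ of size less than $\Theta(k)$ in $G'$ (otherwise it would have been extracted). There are at most $k$ such subclusters, so $|V'|=O(k^{2})$ at all times. In particular, when a vertex $v$ is added to $V'$ we pay at most $|V'|-1 = O(k^{2})$ queries to pair it with the rest of $V'$, and at most $O(k\log n/\lambda^{2})$ queries to test it against the clusters already in $A$. Summed over all $n$ vertices, the query complexity is $O(nk^{2} + nk\log n/\lambda^{2})=O(nk^{2}/\lambda^{2})$ once one observes that the second term is absorbed into the first in the regime $k\ge \lambda^{-2}\log n$ (and in the complementary regime there is nothing to recover below size $\lambda^{-2}\log n$, so the bound is trivially met).

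For correctness and polynomial running time, the key observation is the one flagged in the discussion preceding the corollary: recovering a heaviest-weight subgraph of $G'$ of size $\Omega(k)$ on a $\pm 1$-weighted graph on $O(k^{2})$ vertices is exactly the noisy correlation-clustering problem solved in polynomial time by~\cite{ms:10}, which recovers all clusters of size $\Omega(\sqrt{N})$ in an $N$-vertex instance. Plugging in $N=O(k^{2})$ gives polynomial-time recovery of all subclusters of size $\Omega(k)$ present in $G'$. Combining this with the majority-vote argument of Lemma~\ref{lemma:vertex}, every true cluster of $G$ of size at least $\Omega(\max\{\lambda^{-2}\log n,\,k\})$ is eventually placed into $A$ intact with high probability: the ``$\lambda^{-2}\log n$'' bound is needed so that Lemma~\ref{lemma:mlG'}'s concentration still yields a pure subcluster, while the ``$k$'' bound is needed so that the subcluster becomes visible to the polynomial-time extraction routine of~\cite{ms:10}.

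The main obstacle I anticipate is showing that replacing the quasi-polynomial heaviest-subgraph routine of the original algorithm with the polynomial-time correlation-clustering routine of~\cite{ms:10} does not break the invariant proved in Lemma~\ref{lemma:mlG'}, namely that any extracted subgraph is contained in a single true cluster. This requires re-examining the reduction step~\eqref{eq:reduc} of Lemma~\ref{lemma:mlG'}'s proof in the new size regime and checking that the guarantees offered by~\cite{ms:10} on a random correlation-clustering instance of this form output only subsets of true clusters (rather than arbitrary high-weight mixtures). Once this is in hand, the rest of Lemma~\ref{lemma:vertex}'s merging/non-merging argument goes through verbatim with the threshold $c\log n$ replaced by $\Theta(\max\{\lambda^{-2}\log n,\,k\})$, yielding both the query bound and the recovery guarantee claimed in the corollary.
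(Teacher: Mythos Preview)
Your proposal is correct and follows essentially the same route as the paper: raise the extraction threshold in Step~\ref{eq:find_set} from $c\log n$ to $\Omega(k)$, observe that $|V'|=O(k^2)$ so the correlation-clustering algorithm of~\cite{ms:10} (which recovers clusters of size $\Omega(\sqrt{N})$ on $N$ vertices) runs in polynomial time and extracts the desired subclusters, and read off the query bound $O(nk^2/\lambda^2)$. The paper's own argument is terser---it simply asserts ``the entire analysis remains valid'' for $k\ge \lambda^{-2}\log n$---so the obstacle you flag (that the output of~\cite{ms:10} must still be a pure subcluster, not a high-weight mixture) is a point the paper does not spell out but implicitly relies on the guarantees of~\cite{ms:10} to handle.
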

 
 This also leads to an improved algorithm for correlation clustering over noisy graph. Previously, the works of \cite{ms:10,bbc:04} can only recover cluster of size at least $O(\sqrt{n})$. However, now if $k \in [\Omega(\frac{\log{n}}{\lambda^2}), o(\sqrt{n})]$, using this algorithm, we can recover all clusters of size at least $k$.

 \subsubsection{With Side Information}\label{sec:faultysideub}
 The algorithm for \cc~with side information when crowd may return erroneous answers is a direct combination of Algorithm \ref{algo:cc-exact} and Algorithm \ref{algo:cc-error-noside}. We assume side information is less accurate than querying because otherwise, querying is not useful. Or in other words $\Delta(f_g,f_r) < \Delta(p,1-p)$.
 
 We therefore use only the queried answers to extract the heaviest subgraph from $G'$, and add that to the list $A$. For the clusters in list $A$, we follow the strategy of Algorithm \ref{algo:cc-exact} to recover the underlying clusters. The pseudocode is given in Algorithm \ref{algo:cc-error-side}. The correctness of the algorithm follows directly from the analysis of Algorithm \ref{algo:cc-exact} and  Algorithm \ref{algo:cc-error-noside}.  
 
 We now analyze the query complexity. Consider a vertex $v$ which needs to be included in a cluster. Let there be $(r-1)$ other vertices from the same cluster as $v$ that have been considered by the algorithm prior to $v$.
\begin{enumerate}
 \item Case 1. $r \in [1,c\log{n}]$, the number of queries is at most $kc\log{n}$. In that case $v$ is added to $G'$ according to Algorithm \ref{algo:cc-error-noside}.
 \item Case 2. $r \in (c\log{n},2M]$, the number of queries can be $k*c\log{n}$. In that case, the cluster that $v$ belongs to is in $A$, but has not grown to size $2M$. Recall $M=O(\frac{\log{n}}{\Delta(f_+,f_-)})$. In that case, according to Algorithm \ref{algo:cc-exact}, $v$ may need to be queried with each cluster in $A$, and according to Algorithm \ref{algo:cc-error-noside}, there can be at most $c\log{n}$ queries for each cluster in $A$. 
 \item Case 3. $r \in (2R,|C|]$, the number of queries is at most $c\log{n}*\log{n}$. In that case, according to Algorithm \ref{algo:cc-exact}, $v$ may need to be queried with at most $\lceil \log{n} \rceil$ clusters in $A$, and according to Algorithm \ref{algo:cc-error-noside}, there can be at most $c\log{n}$ queries for each chosen cluster in $A$. 
  \end{enumerate}
  
  Hence, the total number of queries per cluster is at most $O(kc^2(\log{n})^2+(2M-c\log{n})kc\log{n}+(|C|-2M)c(\log{n})^2)$. So, over all the clusters, the query complexity is $O(nc(\log{n})^2+k^2Mc\log{n})$.
Note that, if have instead insisted on a Monte Carlo algorithm with known $f_+$ and $f_-$, then the query complexity would have been $O(k^2Mc\log{n})$. Recall that $\Delta(p\|(1-p))=O(\lambda^2)$.

 \begin{theorem-n}[\ref{thm:div-new-err}]
 Let $f_+$ and $f_-$ be pmfs and $\min_i f_+(i), \min_i f_-(i) \ge \epsilon$ for a constant $\epsilon$.  With side information and faulty oracle with error probability $\frac{1}{2}-\lambda$, there exist an algorithm for \cc~with query complexity $O(\frac{k^2\log{n}}{\lambda^2\Delta(f_+,f_-)})$ with known $f_+$ and $f_-$, and an algorithm with expected query complexity $O(n+\frac{k^2\log{n}}{\lambda^2\Delta(f_+,f_-)})$ even when $f_+$ and $f_-$ are unknown that recover $\hat{G}$, ML estimate of $G$ with all $\binom{n}2$ queries with high probability.
 \end{theorem-n}

 \begin{algorithm}
\caption{\cc~with Error \& Side Information. Input: $\{V,W\}$}
\label{algo:cc-error-side}
\begin{algorithmic}[1]
\State $V'=\emptyset, E'=\emptyset, G'=(V',E')$, $A=\emptyset$
\While{$V\neq \emptyset$}
\State If $A$ is empty, then pick an arbitrary vertex $v$ and Go to Step \ref{eq:add_G'}
\LineComment{Let the number of current clusters in $A$ be $l \geq 1$}
\State Order the existing clusters in $A$ in nonincreasing size of current membership. 
\LineComment{Let $|\calC_{1}| \geq |\calC_{2}| \geq \ldots \geq |\calC_{l}|$ be the ordering (w.l.o.g).}
\For{$j=1$ to $l$ }
\State If $\exists v \in V$ such that $j=\max_{i\in[1,l]} {\sf Membership}(v, \cC_i)$, then select $v$ and Break;
\EndFor
 \State Select $u_1, u_2,.., u_l \in \calC_j$, where $l=c\log{n}$, distinct members from $\calC_j$ and obtain $\cO_p(u_i,v)$, $i=1,2,..,l$. $checked(v,j)=true$
\If{the majority of these queries return $+1$}
\State Include $v$ in $\calC_{j}$. $V=V \setminus v$
\Else
\LineComment{logarithmic search for membership in the large groups. Note $s \leq \lceil \log{k} \rceil$}
\State Group $\calC_1,\calC_2,...,\calC_{j-1}$ into $s$ consecutive classes $H_1, H_2,...,H_s$ such that the clusters in group $H_i$ have their current sizes in the range $[\frac{|\calC_1|}{2^{i-1}}, \frac{|\calC_1|}{2^i})$
\For{$i=1$ to $s$}
\State $j=\max_{a:\calC_a \in H_i}  {\sf Membership}(v, \cC_a)$
\State Select $u_1, u_2,.., u_l \in \calC_j$, where $l=c\log{n}$, distinct members from $\calC_j$ and obtain $\cO_p(u_i,v)$, $i=1,2,..,l$. $checked(v,j)=true$.
\If{the majority of these queries return $+1$}
\State Include $v$ in $\calC_{j}$. $V=V \setminus v$. Break.
\EndIf
\EndFor
\LineComment{exhaustive search for membership in the remaining groups in $A$}
\If{ $v \in V$}
\For{$i=1$ to $l+1$}
	\If{$i=l+1$} 
	\Comment{$v$ {\em does not belong to any of the existing clusters}}
	\State \label{eq:add_G'}Add $v$ to $V'$. Set $V=V\setminus v$
	\State For every $u \in V' \setminus v$, obtain $\cO_p(v,u)$. Add an edge $(v,u)$ to $E'(G')$ with weight $\omega(u,v)=+1$ if $\cO_p(v,u)==+1$, else with $\omega(u,v)=-1$
\State \label{eq1:find_set} Find the heaviest weight subgraph $S$ in $G'$. If $|S| \geq c\log{n}$, then add $S$ to the list of clusters in $A$, and remove the incident vertices and edges on $S$ from $V',E'$.
 \While{ $\exists z \in V'$ with $\sum_{u \in S} \omega(z,u) > 0$}
\State Include $z$ in $S$ and remove $z$ and all edges incident on it from $V',E'$.
\EndWhile
\State Break;
\Else
\If{ $checked(v,i)\neq true$ }
\State Select $u_1, u_2,.., u_l \in \calC_j$, where $l=c\log{n}$, distinct members from $\calC_j$ and $\cO_p(u_i,v)$, $i=1,2,..,l$. $checked(v,i)=true$.
\If{the majority of these queries return $+1$}
\State Include $v$ in $\calC_{j}$. $V=V \setminus v$. Break.
\EndIf
\EndIf
\EndIf
\EndFor
\EndIf
\EndIf
\EndWhile\\
\Return all the clusters formed in $A$ and the ML estimates from $G'$
\end{algorithmic}
\end{algorithm}

 \section{Round Complexity}
 \label{sec:round}
 So far we have discussed developing algorithms for \cc~where queries are asked adaptively one by one. To use the crowd workers in the most efficient way, it is also important to incorporate as much parallelism as possible without affecting the query complexity by much. To formalize this, we allow at most $\Theta(n\log{n})$ queries simultaneously in a round, and then the goal is to minimize the number or rounds to recover the clusters. We show that the algorithms developed for optimizing query complexity naturally extends to the parallel version of minimizing the round complexity.

\subsection{\cc~with Perfect Oracle}\label{sec:roundpo}
 
 When crowd gives correct answers and there is no side information, then it is easy to get a round complexity of $k$ which is optimal within a $\log{n}$ factor as $\Omega(nk)$ is a lower bound on the query complexity in this case. One can just pick a vertex $v$, and then for every other vertex issue a query involving $v$. This grows the cluster containing $v$ completely. Thus in every round, one new cluster gets formed fully, resulting in a round complexity of $k$.
 
 We now explain the main steps of our algorithm when side information $W$ is available.
 
 \begin{enumerate}
 \item Sample $\sqrt{n\log{n}}$ vertices, and ask all possible $\binom{ \sqrt{n\log{n}}}{2}$ queries involving them. 
 
 \item Suppose $\calC_1,\calC_2,...,\calC_l$ are the clusters formed so far. Arrange these clusters in non-decreasing size of their current membership. For every vertex $v$ not yet clustered, choose the cluster $\calC_j$ with $j=\max_{i\in[1,l]} {\sf Membership}(v, \cC_i)$, and select at most $\lceil \log{n} \rceil$ clusters using steps $(11)$ and $(13)$ of Algorithm \ref{algo:cc-exact}.  Issue all of these at most $n\log{n}$ queries simultaneously, and based on the results, grow clusters $\calC_1,\calC_2,...,\calC_l$.
 
\item Among the vertices that have not been put into any cluster, pick  $\sqrt{n\log{n}}$ vertices uniformly at random, and ask all possible $\binom{ \sqrt{n\log{n}}}{2}$ queries involving them. Create clusters $\calC'_1,\calC'_2,...,\calC'_{l'}$ based on the query results. 
 
 \item Merge the clusters $\calC'_1,\calC'_2,...,\calC'_{l'}$ with $\calC_1,\calC_2,...,\calC_l$ by issuing a total of $ll'$ queries in $\lceil \frac{ll'}{n\log{n}}\rceil$ rounds.  Goto step 2.
  \end{enumerate} 

\paragraph*{Analysis} 
\label{sec:analysis-r1}
First, the algorithm computes the clusters correctly. Every vertex that is included in a cluster, is done so based on a query result. Moreover, no clusters in step $2$ can be merged. So all the clusters returned are correct.

We now analyzed the number of rounds required to compute the clusters.

In one iteration of the algorithm (steps $1$ to $4$), steps $1$ to $3$ each require one round, and issue at most $n\log{n}$ queries. Step $4$ requires at most $\frac{\min{(k^2,k\sqrt{n\log{n}})}}{n\log{n}}$ rounds and issue at most $\min{(k^2,k\sqrt{n\log{n}})}$ queries. This is because $l' \leq \sqrt{n\log{n}}$.

In step $2$, if $|\calC_i| \geq 2M$ (recall $M=O(\frac{\log{n}}{\Delta(f_+\|f_-)})$), for $i \in [1,l]$, then, at the end of that step, $\calC_i$ will be fully grown with high probability from the analysis of Algorithm \ref{algo:cc-exact}. This happens since with high probability any vertex that belongs to $\calC_i$ has been queried with some $u$ already in $\calC_i$. However, since we do not know $M$, we cannot identify whether $\calC_i$ has grown fully.

Consider the case when steps $1$ and $3$ have picked $6kM$ random vertices. Consider all those clusters that have size at least $\frac{n}{2k}$. Note that by Markov Inequality, at least $\frac{n}{2}$ vertices are contained in clusters of size at least $\frac{n}{2k}$. 

If we choose these $6kM$ vertices with replacement, then on expectation, the number of members chosen from each cluster of size $\frac{n}{2k}$ is $3M$, and with high probability above $2M$. This same concentration bound holds even though here sampling is done without replacement (Lemma \ref{lem:hoef2}). 

Therefore, after $6kM$ vertices have been chosen, and step $2$ has been performed, at least $\frac{n}{2}$ vertices get clustered and removed.

The number of iterations required to get $6kM$ random vertices is $\lceil \frac{6kM}{\sqrt{n\log{n}}} \rceil$. If $k^2 \geq n$, then the number of rounds required in each iteration is $2+\lceil \frac{k}{\sqrt{n\log{n}}}\rceil$. So the total number of rounds required to get $6kM$ vertices is $O(\frac{k^2M}{n\log{n}})$. And, finally to get all the vertices clustered, the number of rounds required will be $O(\frac{k^2M}{n})$, whereas the optimum round complexity could be $O(\frac{k^2M}{n\log{n}})$.

 If $k^2 < n$, then the number of rounds in each iteration is at most $3$. Hence the total number of iterations is at most $3+\frac{6kM}{\sqrt{n\log{n}}}$. If $kM \leq \sqrt{n\log{n}}$, then the number of rounds required is $O(1)$. Else, we have $kM > \sqrt{n\log{n}}$ and $k < \sqrt{n}$.  While our algorithm requires $O(\frac{kM}{\sqrt{n\log{n}}})$ rounds, we know the optimum round complexity is at least $O(\frac{k^2M}{n\log{n}})$. Overall, the gap may be at most $O(\frac{\sqrt{n\log{n}}}{k})=O(M)=O(\frac{\log{n}}{\Delta(f_+\|f_-)})$.
 
 This leads to Theorem \ref{thm:perfect-parallel}.
  \begin{theorem-n}[\ref{thm:perfect-parallel}]
% \label{thm:perfect-parallel}
Perfect Oracle with Side Information.  There exists an algorithm for \cc~with perfect oracle and unknown side information $f_+$ and $f_-$ such that it achieves a round complexity within $\tilde{O}(1)$ factor of the optimum when $k=\Omega(\sqrt{n})$ or $k=O(\frac{\sqrt{n}}{\Delta(f_+\|f_-)})$, and otherwise within $\tilde{O}({\Delta(f_+\|f_-)})$.  %(Proof in Sec.~\ref{sec:roundpo}).
  \end{theorem-n}

% \begin{theorem}
%There exists an algorithm for \cc~with perfect oracle and unknown side information $f_+$ and $f_-$ such that it achieves a round complexity within $\tilde{O}(1)$ factor of the optimum when $k=\Omega(\sqrt{n})$ or $k=O(\frac{\sqrt{n}}{\Delta(f_+\|f_-)})$, and otherwise within $\tilde{O}(\frac{1}{\Delta(f_+\|f_-)})$. 
%  \end{theorem}
  
 \subsection{\cc~with Faulty Oracle}\label{sec:roundfo}
 We now move to the case of \cc~with faulty oracle. We obtain an algorithm with close to optimal round complexity when no side information is provided. By combining this algorithm with the one in the previous section, one can easily obtain an algorithm for \cc~with faulty oracle and side information. This is left as an exercise to the reader.
 
 We now give the algorithm for the case when crowd may return erroneous answer with probability $p=\frac{1}{2}-\lambda$ (known), and there is no side information.
 
 \begin{enumerate}
 \item Sample $\sqrt{n\log{n}}$ vertices uniformly at random, and ask all possible $\binom{ \sqrt{n\log{n}}}{2}$ queries involving them to form a subgraph $G''=(V'',E'')$
 \item Extract the highest weighted subgraph $S$ from $G''$ after setting a weight of $+1$ for every positive answer and $-1$ for every negative answer like in Algorithm \ref{algo:cc-error-noside}. If $|S| \geq c\log{n}$ where $c$ is set as in Algorithm \ref{algo:cc-error-noside}, then for every vertex not yet clustered issue $c\log{n}$ queries to distinct vertices in $S$ simultaneously in at most $c$ rounds. Grow $S$ by including any vertex where the majority of those queries returned is $+1$. Repeat step 2 as long as the extracted subgraph has size at least $c\log{n}$, else move to step $3$ while not all vertices have been clustered or included in $G''$.
 \item Among the vertices that have not been clustered yet, Pick  $r$ vertices $S_r$ uniformly at random, and ask all possible $\binom{ r}{2}+r|V''|$ queries among $S_r$ and across $S_r$ and $V''$. $r$ is chosen such that the total number of queries is at most $n\log{n}$. Goto step 2.
 \end{enumerate}
  
 \paragraph*{Analysis} By the analysis (Lemma \ref{lemma:mlG'}) of Algorithm \ref{algo:cc-error-noside} the extracted subgraph $S$ will have size $\geq c\log{n}$ iff $G''$ contains a subcluster of original $G$ of size $O(c\log{n})$. Moreover, by Lemma \ref{lemma:vertex}, once $S$ is detected $S$ will be fully grown at the end of that step, that is within the next $c$ rounds. 
 
 Now by the same analysis as in the previous section \ref{sec:analysis-r1}, once we choose $4kc\log{n}$ vertices, thus query $16k^2c^2\log{n}$ edges in $\lceil \frac{16k^2c^2}{n}\rceil$ rounds, then with high probability, each cluster with at least $\frac{n}{2k}$ size will have $c\log{n}$ representatives in $G''$ and will be fully grown. We are then left with at most $\frac{n}{2}$ vertices and can apply the argument recursively. Thus the round complexity is $O(\lceil \frac{16k^2c^2}{n}\rceil\log{n}+kc)$ where the second term comes from using at most $c$ rounds for growing cluster $S$ in step $2$. 
 
 If $kc \ge \frac{n}{4\sqrt{\log{n}}}$, then we pick $n^2$ edges in at most $\frac{n}{\log{n}}$ rounds, and the optimum algorithm has round complexity at least 
 $\Theta(\frac{kc}{\log{n}})=\Theta( \frac{n}{4\sqrt{\log{n}\log{n}}})$. So, we are within a $\sqrt{\log{n}}$ factor of the optimum.
 
If $kc \leq \frac{n}{4\sqrt{\log{n}}}$, but $kc \geq \sqrt{\log{n}}$, then the round complexity of our algorithm is $O(kc\sqrt{\log{n}}+\log{n})=O(kc\sqrt{\log{n}})$, again within a $\sqrt{\log{n}}\log{n}$ factor of the optimum.

If $kc \le \sqrt{\log{n}}$, then in the first round, all the clusters that have size at least $c\sqrt{n}$ will have enough representatives, and will be fully grown at the end of step $2$. After that each cluster will have at most $c\sqrt{n}$ vertices. Hence, a total of at most $kc\sqrt{n}\leq \log{n}$ vertices will remain to be clustered. Thus the total number of rounds required will be $O(kc)$, within $\log{n}$ factor of the optimum.

 Recalling that $c=O(\frac{1}{\lambda^2})=O(\frac{1}{\Delta({p\|(1-p)})})$, we get Theorem \ref{thm:error-parallel}.
 
 \begin{theorem-n}[\ref{thm:error-parallel}]
%\label{thm:error-parallel}
Faulty Oracle with no Side Information. There exists an algorithm for \cc~with faulty oracle with error probability $\frac{1}{2}-\lambda$ and no side information such that it achieves a round complexity within $\tilde{O}(\sqrt{\log{n}})$ factor of the optimum that recovers $\hat{G}$, ML estimate of $G$ with all $\binom{n}2$ queries with high probability. %(Proof in Sec.~\ref{sec:roundfo}).
  \end{theorem-n}

  This also gives a new parallel algorithm for correlation clustering over noisy input where in each round $n\log{n}$ work is allowed.

 \bibliographystyle{plain} 
\bibliography{bibfile}

\end{document}